\crefname{section}{\textsection}{\textsection}
\crefname{subsection}{\textsection}{\textsection}
\crefname{subsubsection}{\textsection}{\textsection}
\crefname{paragraph}{\textparagraph}{\textparagraph}
\crefname{thm}{Theorem}{Theorem}
\def\mathalfa@frakscaled{s*[1]}
\DeclareFontFamily{U}{euf}{}%
\DeclareFontShape{U}{euf}{m}{n}{<-7>\mathalfa@frakscaled eufm5
  <7-9>\mathalfa@frakscaled eufm7
  <9->\mathalfa@frakscaled eufm10}{}%
\DeclareFontShape{U}{euf}{b}{n}{<-7>\mathalfa@frakscaled eufb5
  <7-9>\mathalfa@frakscaled eufb7
  <9->\mathalfa@frakscaled eufb10}{}%
\DeclareMathAlphabet{\mathfrak}{U}{euf}{m}{n}
\renewcommand{\Im}{\mathrm{Im}}
\renewcommand{\Re}{\mathrm{Re}}
          \newtheorem{thm}{Theorem}[section]
          \newtheorem{proposition}[thm]{Proposition}
          \newtheorem{lemma}[thm]{Lemma}
          \newtheorem{definition}[thm]{Definition}
          \theoremstyle{definition}
          \newtheorem{remark}[thm]{Remark}
\renewcommand*{\@textcolor}[3]{%
  \protect\leavevmode
  \begingroup
    \color#1{#2}#3%
  \endgroup
}
\newcommand{\xv}{\mathbf{x}}
\newcommand{\kv}{\mathbf{k}}
\newcommand{\dom}{\mathscr{D}}
\newcommand{\diff}{\mathrm{d}}
\newcommand{\eps}{\varepsilon}
\newcommand{\zv}{\mathbf{z}}
\newcommand{\beq}{\begin{equation}}
\newcommand{\eeq}{\end{equation}}
\newcommand{\OO}{\mathcal{O}}
\newcommand{\R}{\mathbb{R}}
\newcommand{\N}{\mathbb{N}}
\newcommand{\bdm}{\begin{displaymath}}
\newcommand{\edm}{\end{displaymath}}
\newcommand{\bdn}{\begin{eqnarray}}
\newcommand{\edn}{\end{eqnarray}}
\newcommand{\bay}{\begin{array}{c}}
\newcommand{\eay}{\end{array}}
\newcommand{\ben}{\begin{enumerate}}
\newcommand{\een}{\end{enumerate}}
\newcommand{\beqn}{\begin{eqnarray}}
\newcommand{\eeqn}{\end{eqnarray}}
\newcommand{\bml}[1]{\begin{multline} #1 \end{multline}}
\newcommand{\bmln}[1]{\begin{multline*} #1 \end{multline*}}
\newcommand{\lf}{\left}
\newcommand{\ri}{\right}
\newcommand{\tx}{\textstyle}
\newcommand{\bra}[1]{\lf\langle #1\ri|}
\newcommand{\meanlrlr}[3]{\lf\langle #1\lf|#2\ri|#3\ri\rangle}
\renewcommand{\leq}{\leqslant}
\renewcommand{\geq}{\geqslant}
\numberwithin{equation}{section}
\begin{document}

\newenvironment{sistema}%
{\left\lbrace\begin{array}{@{}l@{}}}%
    {\end{array}\right.}  \bibliographystyle{abbrvnat} \title{Microscopic derivation of time-dependent point interactions}

\author[R.\ Carlone]{Raffaele Carlone}

\address{Dipartimento di Matematica e Applicazioni ``Renato Caccioppoli'', Università degli Studi di
  Napoli ``Federico II'', MSA, via Cinthia, I-80126, Napoli, Italy.}

\email{raffaele.carlone@unina.it}

\urladdr{}

\author[M.\ Correggi]{Michele Correggi}

\address{Dipartimento di Matematica, Politecnico di Milano, P.zza Leonardo da Vinci, 32, 20133, Milano, Italy.}

\email{michele.correggi@gmail.com}

\urladdr{https://sites.google.com/view/michele-correggi}

\author[M.\ Falconi]{Marco Falconi}

\address{Dipartimento di Matematica e Fisica, Universit\`{a} di Roma Tre, L.go S. Leonardo Murialdo, 1/C, 00146 Roma, Italy.}

\email{mfalconi@mat.uniroma3.it}

\urladdr{http://ricerca.mat.uniroma3.it/users/mfalconi/}

\author[M.\ Olivieri]{Marco Olivieri}

\address{Fakult\"{a}t f\"{u}r Mathematik, Karlsruher Institut f\"{u}r Technologie, D-76128, Karlsruhe, Germany.}

\email{marco.olivieri@kit.edu}
\urladdr{}

\date{\today}

\begin{abstract}
  We study the dynamics of the three-dimensional polaron -- a quantum particle coupled to bosonic fields -- in the quasi-classical
  regime. In this case the fields are very intense and the corresponding degrees of freedom can be
  treated semiclassically. We prove that in such a regime the effective dynamics for the quantum particles
  is approximated by the one generated by a time-dependent point interaction, {\it i.e.}, a singular
  time-dependent perturbation of the Laplacian supported in a point. As a by-product, we also show that the
  unitary dynamics of a time-dependent point interaction can be approximated in strong operator topology
  by the one generated by time-dependent Schr\"{o}dinger operators with suitably rescaled regular
  potentials.
\end{abstract}

\maketitle

\section{Introduction and Main Results}
\label{sec:main}

Point interactions, also called zero-range interactions or Fermi pseudo-potentials have been widely studied in mathematical
physics as solvable models for realistic quantum systems (see
\cite{albeverio2005sqm} for an extensive review of the topic). By formally replacing a typically complicated
interaction potential with a sum of singular distributions (Dirac deltas) supported at isolated points or
on curves or surfaces, one aims at summing up all the features of the interaction in a minimal number of
physical parameters (\emph{e.g.}, scattering length, effective range, \dots). The models obtained by this
approximation are almost exactly solvable, while the salient physical features of the original systems
should be retained in the procedure. Often, however, there is no conclusive evidence of the latter
assertion: despite being known that point interactions can be approximated by suitable rescaled
potentials, it is not clear whether they can be derived from realistic models, in suitable physical
regimes.

In this paper we would like to derive a class of zero-range models as the effective description of
physical systems, originating from a well-defined approximation that we
call \emph{quasi-classical limit} (see \cite{correggi2017ahp,correggi2017arxiv,CFO} and references therein). This should clarify the importance of zero-range models in
mathematical and theoretical investigations. As a by-product, we also prove that there exist lattice field
quantum states in which a polaron is completely ionized (this is discussed in more detail below).

The class of zero-range models considered is the so-called \emph{time-dependent point
  interactions}: solvable models with singular potentials whose ``strength'' may change in time. They are
typically useful to investigate the ionization of a bound state by the action of a time-dependent
localized interaction. In the zero-range approximation, one studies the formal Hamiltonian
\begin{equation}
  \label{eq:formal}
  - \Delta + \mbox{``}\mu(t) \delta(\mathbf{x})\mbox{''}
\end{equation}
in $ L^2(\mathbb{R}^d) $, and the system is assumed to be in a bound state at initial time, \emph{e.g.}, in the
ground state, and the asymptotic probability of ionization is computed \cite{coretal2005,cordel2005}
(see also \cite{coscosleb2018} and \cite{cordel2004} for the one-dimensional version of the same
model and other time-dependent singular perturbations, respectively). As already discussed, the physical relevance of such a minimal model is still unclear and it is one of the main goals of the present investigation.

We prove that time-dependent point interactions can indeed be derived from the microscopic dynamics of a
quantum particle interacting with bosonic scalar quantum fields, in suitable configurations
in which the fields are very intense, and the average number of carriers is much larger
than one. In particular, we consider the coupling of the particle with two distinct species of
  force-carrying fields. The presence of several force fields is not uncommon in physical systems, both in
  condensed matter and at high energies. The model we are considering, see \eqref{eq:Heps} below, is one
  describing two species of phonons interacting with a quantum particle (\emph{e.g.}, an electron or a
  molecule) in a lattice. Although the original Fr\"{o}hlich polaron contains only one species of phonons, the optic ones (see \cite{frohlich1937prslA}), there are physical systems for which the presence of two species is meaningful. The
  typical example of a model in which quantum particles are coupled with both acoustic and optic phonons
  is that of a compound ionic crystal (see, \emph{e.g.}, \cite[Chpt. 4]{kittel}), {\it i.e.}, with at least two species of atoms per unit cell. Let us remark that it is the
  presence of \emph{two different} species of phonons, with different scaling properties and dispersion
  relations, that produces the ionization of the particle in our model. In mathematical terms, it is the interplay between the two species that yields a
  \emph{time-dependent} point interaction, instead of a time-independent one (see also \cref{rem:1}).

As mentioned above, we consider configurations in which the fields are very intense, due to the
  presence of a very large number of force carriers. Here, the reference scale is the number of non-relativistic particles in the system, {\it i.e.}, just 1 in our setting, which is also of the same order as the
commutator between creation and annihilation operators of the two fields, set equal
to one in the considered units. Equivalently, it is more convenient to set
\begin{equation}
  \label{eq:commutator}
  \left[ a_{\varepsilon}(\mathbf{x}), a^{\dagger}_{\varepsilon}(\mathbf{y}) \right]=\left[ b_{\varepsilon}(\mathbf{x}), b^{\dagger}_{\varepsilon}(\mathbf{y}) \right] = \varepsilon\, \delta(\mathbf{x}-\mathbf{y})\; ,
\end{equation}
and to take the limit
\begin{equation}
  \label{eq:qc limit}
  \varepsilon \to  0\; ,
\end{equation}
where the parameter $ \eps $ has the physical meaning of the inverse of the average number of field excitations.
Here $ a(\mathbf{x}), b(\mathbf{x}) $ and $ a^{\dagger}(\mathbf{x}),b^{\dagger}(\mathbf{x}) $, $ \mathbf{x} \in \mathbb{R}^3 $,
are the usual operator-valued distributions associated to the annihilation and creation of field
carriers, for the two species of phonons. The limit $\varepsilon\to 0$ is precisely the aforementioned
quasi-classical limit. In the case of the Fr\"{o}hlich's polaron, in which the particle is coupled with
optic phonons only, such a limit is known to be equivalent, at least in the stationary picture, to the
so-called {\it strong coupling} regime, {\it i.e.}, of a very intense coupling between the particles and
the Bose field. The strong coupling dynamics is however slightly different from the one considered here because
the field is frozen to leading order.

Polaron models were originally introduced in
\cite{frohlich1937prslA} to describe the interaction between one or more electrons with a crystal of
nuclei, vibrating around the rest positions on the lattice crystal, but, more recently, have been widely used in solid and condensed matter physics,
notably also as effective models to describe the behavior of an impurity in a Bose-Einstein condensate
(see, {\it e.g.}, \cite{GD} for a review of the topic). These latter models can be concretely
realized in the lab \cite{Hohmann}, by immersing a $ \mathrm{Cs} $ impurity in a $ \mathrm{Rb} $
condensate. Typically, the polaron is studied in three dimensions, but lower dimensional models may also
be of a certain interest from the physical point of view \cite{GAD}. For instance, one-dimensional
tight-binding models are known to well-approximate the motion of electrons in organic semiconductors
\cite{DeFilippis}. Furthermore, in these investigations, the phonon degrees of freedom are typically
treated classically, as in the ideal quasi-classical limit described above.

When the limit $ \varepsilon \to 0 $ is taken, the bosonic fields become classical fields, whose
dynamics \emph{a priori }would depend on both its initial configuration and the coupling with the
particle. We are going to scale the coupling in such a way that the classical field either evolves
freely or is frozen in the limit, while the effective particle dynamics is generated by an effective
time-dependent interaction, which is time-periodic and point-like at the origin, \emph{i.e.}, it is the
rigorous counterpart of~\eqref{eq:formal}. Physically speaking, the quasi-classical limit we are
considering is such that there is no back-reaction of the particle on the classical fields.

Mathematically, the dimensions lower than three are easier to deal with, therefore we prefer to focus on a
three-dimensional model. Our techniques could however be adapted to study one- and two-dimensional systems.
Notice that, since we derive a time-dependent point interaction from the polaron Hamiltonian and the former model shows complete asymptotic ionization \cite[Thm. 4.4]{coretal2005}, we also prove that there
exist quantum field configurations that yield complete ionization of the polaron (compare with \cite{Toyozawa}). Let us stress that our
model is at zero-temperature, and the ionization is due to the fields' configuration only; a discussion of
temperature-induced ionization for polaron systems can also be found in the physical literature (see,
\emph{e.g.}, \cite{hawton1979pla}).

\medskip

In the next \cref{sec:micro,sec:effective}, we describe our setting in more detail, introducing the
microscopic system and the effective point interaction model, respectively. The convergence of the
dynamics is discussed in \cref{sec:results}, where we also state a technical result about the
approximation of the dynamics generated by a time-dependent point interaction by means of the dynamics
generated by the corresponding approximating regular potentials.

\medskip

Let us conclude this section by fixing some basic notations. We use the convention of denoting by
calligraphic letters all the quantities referring to the effective model, \emph{e.g.}, $
\mathcal{U}_{\mathrm{eff}} $ and $ \mathcal{H}_{\mathrm{eff}} $ denote the effective dynamics and
generator, respectively, while regular letters (\emph{e.g.}, $ U_{\varepsilon} $ and $ H_{\varepsilon} $) are attached to the
microscopic counterparts, \emph{i.e.}\ to the operators acting on the full particle-field Hilbert
space. Bold letters denotes vectors ($\mathbf{x}$), while italic roman letters are used for scalar
quantities. When there is no risk of confusion, we use the corresponding italic letter ($x$) to denote the
modulus of a vector ($\mathbf{x}$). Concerning operators and quadratic forms, we denote by $ \mathscr{D}(A) $ and $ \mathscr{D}[A] $ the operator and form domains, respectively.

\subsection{Microscopic model}
\label{sec:micro}

As anticipated, we want to investigate the quasi-classical limit of a system composed of a quantum
spinless\footnote{The spin can be easily added to the picture, however since it would make the notations
  more cumbersome, we avoid it. Similarly, we can include more quantum particles in the model and possibly
  some interaction or a trapping potential, but we consider the simplest model for the sake of clarity.}
particle interacting with two bosonic fields. In order to discuss a model that is as
realistic as possible, we consider the interaction with acoustic and optic phonon lattice
fields. Other models could however be considered as well.

The Hilbert space of microscopic states is thus $\mathscr{H} := L^2(\mathbb{R}^3) \otimes \Gamma_{\mathrm{sym}}(\mathfrak{H})\otimes
  \Gamma_{\mathrm{sym}}(\mathfrak{H})$, where the single excitation (phonon) space is $\mathfrak{H}= L^2(\mathbb{R}^3)$, and $ \Gamma_{\mathrm{sym}} $
stands for the symmetric Fock space.  The Hamiltonian of the full system reads
\begin{multline}	
	\label{eq:Heps}
     	H_{\varepsilon}=H_0+H_I=-\Delta\otimes 1\otimes 1+ 1\otimes \mathrm{d}\Gamma_{\varepsilon}^{(a)}(\omega)\otimes 1 + \frac{\kappa}{\varepsilon} 1\otimes 		1\otimes \mathrm{d}\Gamma^{(b)}_{\varepsilon}(1)	\\
     +a_{\varepsilon}\lf(\lambda^{(a)}_{\mathbf{x}}\ri)+a^{\dagger}_{\varepsilon}\lf(\lambda^{(a)}_{\mathbf{x}}\ri)+\,b_{\varepsilon}\lf(\lambda^{(b)}_{\mathbf{x}}\ri)+b^{\dagger}_{\varepsilon}\lf(\lambda^{(b)}_{\mathbf{x}}\ri)
\end{multline}
where $ 1 $ stands for the identity operator on either $ L^2(\mathbb{R}^3) $ or $\Gamma_{\mathrm{sym}}(\mathfrak{H}) $, and 
\beq
	H_0=-\Delta\otimes
	1\otimes 1+1\otimes \mathrm{d}\Gamma_{\varepsilon}^{(a)}(\omega)\otimes 1 + \frac{\kappa}{\varepsilon}\bigl(1\otimes 1\otimes \mathrm{d}\Gamma^{(b)}_{\varepsilon}(1)\bigr).
\eeq
The creation and annihilation operators $a_{\varepsilon},a_{\varepsilon}^{\dagger}$ and $b_{\varepsilon},b_{\varepsilon}^{\dagger}$ refer to the acoustic
  and optic phonons respectively, and $\kappa>0$ is a frequency parameter for the optic phonons. Although more general choices are possible, we assume that the
  dispersion relation $\omega$ of the acoustic phonons is simply
  \begin{equation}
    \label{eq:10}
    	\omega(\kv) = k.
  \end{equation}
  In the Hamiltonian above, we have used the
  $\varepsilon$-dependent representation of the canonical commutation relations for the creation and annihilation
  operators $ a_{\varepsilon}, b_{\varepsilon} $ and $ a_{\varepsilon}^{\dagger},b^{\dagger}_{\varepsilon} $, {\it i.e.},
\begin{equation}
  \label{eq:ccr}
  \left[ a_{\varepsilon}(\xi), a^{\dagger}_{\varepsilon}(\eta) \right] =\left[ b_{\varepsilon}(\xi), b^{\dagger}_{\varepsilon}(\eta) \right]= \varepsilon \left\langle \xi|\eta \right\rangle_{\mathfrak{H}}\; ;
\end{equation}
however the two fields have kinetic terms that scale differently in the quasi-classical limit. The
  acoustic phonons yield a \emph{time-independent} effective potential, while the optic phonons yield a
  \emph{time-dependent} one. The form factor for the acoustic phonons
  $\lambda_{\mathbf{x}}^{(a)}(\mathbf{k})=e^{i \mathbf{k}\cdot \mathbf{x}}\lambda_0(\mathbf{k})$ is such that $\lambda_0$ is
  invertible for almost all $\mathbf{k}\in \mathbb{R}^3$, and 
\begin{equation}
  \label{eq:12}
  \lambda_0(\mathbf{k})\text{ , }k^{-1/2}\lambda_0(\mathbf{k})\in \mathfrak{H}\; .
\end{equation}
Furthermore, we assume that $ \lambda_0 $ is polynomially decaying at large $ k $, or, more precisely,
  \begin{equation}
    \label{eq:11}
    \frac{1}{\lf| \lambda_0(\mathbf{k})\ri|}  \leq C (k^2+1)^{M}, \quad \text{ for some } M\geq 0\; .
  \end{equation}
	On the other hand, the form factor for the optic phonons $\lambda^{(b)}_\mathbf{x}$ has the explicit form
\begin{equation}
  	\lambda^{(b)}_\mathbf{x}(\mathbf{k}) = \frac{e^{i \mathbf{k} \cdot \mathbf{x}}}{k}\; .
\end{equation}

The commutation relations \eqref{eq:ccr} can be thought of as a quasi-classical rescaling of the usual
relations; such rescaling is convenient to investigate the regime $ \left\langle a^{\dagger} a \right\rangle\sim \left\langle b^{\dagger} b
\right\rangle \gg [a, a^{\dagger}]= [b, b^{\dagger}]= 1 $, where $a^{\sharp},b^{\sharp}$ are the usual creation and annihilation
operators. In fact, one can set $ a^{\sharp}_{\varepsilon} : = \sqrt{\varepsilon} a^{\sharp} $, $ b^{\sharp}_{\varepsilon} : = \sqrt{\varepsilon} b^{\sharp} $,
from which \eqref{eq:ccr} follows. Moreover, given any self-adjoint one-particle operator $h$ on
$\mathfrak{H}$,
\begin{equation}
	\label{eq:dgamma}
	\mathrm{d}\Gamma^{(a)}_{\varepsilon}(h)=\varepsilon \mathrm{d}\Gamma^{(a)}(h)\;, \qquad		\mathrm{d}\Gamma^{(b)}_{\varepsilon}(h)=\varepsilon \mathrm{d}\Gamma^{(b)}(h)\; ,
\end{equation}
where the left hand side is written w.r.t.\ $a^{\sharp}_{\varepsilon}, b^{\sharp}_{\varepsilon}$ and the latter w.r.t.\ $a^{\sharp},
b^{\sharp}$. The parameter $\varepsilon$ is the quasi-classical parameter that describes the energy scale of the
macroscopic phonon field, which is of order $ \mathcal{O}(\varepsilon^{-1}) $, and is assumed to be small,
\emph{i.e.}, $ \varepsilon \ll 1 $. This corresponds to high field energies, due to the presence of a large number
of excitations.

It is worth noting that if we get rid of the optic phonons, our quasi-classical limit is
equivalent, up to a suitable rescaling of time and lengths, to the {\it
  strong-coupling regime}. We refer to
\cite{griesemer2013jpa,frank2014lmp,frank2017apde,griesemer2016arxiv2,lieb1997cmp} for further details on
the strongly coupled polaron. 

Since for any $ \mathbf{x} \in \mathbb{R}^3 $, $\lambda^{(b)}_\mathbf{x} \notin \mathfrak{H}$, $H_{\varepsilon}$ can be written explicitly
as the above sum only as a quadratic form, acting on the form domain of the non-interacting part $
\mathscr{D}[H_0] $ (see, {\it e.g.},
\cite{lieb1997cmp,frank2014lmp,falconi2015mpag,griesemer2015arxiv}). The form $ \left\langle \: \cdot
  \:\right|H_{\varepsilon}\left| \: \cdot \: \right\rangle_{\mathscr{H}}$ is closed and bounded from below on $\mathscr{D}[H_0]$,
and therefore $H_{\varepsilon}$ can be defined as a self-adjoint operator on a suitable dense domain $
\mathscr{D}(H_{\varepsilon}) \subset \mathscr{D}[H_0]$. Moreover, $H^2(\mathbb{R}^3) \otimes \mathscr{D}(\mathrm{d}\Gamma^{(a)}_{\varepsilon}(\omega)^{1/2})\otimes
\mathscr{D}(\mathrm{d}\Gamma^{(b)}_{\varepsilon}(1)^{1/2}) $ is dense in $\mathscr{H}$ and contained in the form domain
$\mathscr{D}[H_0]$.

\subsection{Effective dynamics}
\label{sec:effective}

As anticipated above, we want to derive an effective dynamics generated by the formal operator
\eqref{eq:formal}, \emph{i.e.}, $ - \Delta + \mbox{``}\mu(t) \delta(\mathbf{x})\mbox{''} $. In two or three dimensions
the rigorous definition of the self-adjoint counterpart of this formal expression is not straightforward:
one can not simply consider the corresponding energy form and investigate his closedness, as it is done in
one dimension. The typical way to address this question (see, \emph{e.g.}, \cite{albeverio2005sqm}) is
to consider the operator $ - \Delta $ restricted to functions vanishing at the origin and classify its
self-adjoint extensions, which form a one-parameter family of operators $ \lf\{\mathcal{H}_{\beta} \ri\}_{\beta \in \mathbb{R}} $,
given by
\begin{equation}
  \begin{array}{ll}
    \mathcal{H}_{\beta} \psi &=   - \Delta \phi,		 \\
    \mathscr{D}(\mathcal{H}_{\beta}) &=    \left\{ \psi \in L^2(\mathbb{R}^3) \: \Big|  \: \psi = \phi + \displaystyle\frac{q}{4\pi x}, \phi \in H^2_{\mathrm{loc}}(\mathbb{R}^3), \Delta \phi \in L^2(\mathbb{R}^3), q \in \mathbb{C}, \, \phi(0) = \beta q \right\}.
    \label{eq:eff ham}
  \end{array}
\end{equation}

The interaction is thus encoded into the boundary condition $ \phi(0) = \beta q $, which has to be satisfied by
any function in $ \mathscr{D}(\mathcal{H}_{\beta}) $. The action of $ \mathcal{H}_{\beta} $ on the other hand
coincides with the one of $ - \Delta $, although on the regular part of the wave function $ \phi $. In
particular, if $ \phi(0) = 0 $, it is immediate to verify that $ \mathcal{H}_{\beta} \phi = - \Delta \phi $ and, in this
respect, $ \mathcal{H}_{\beta} $ defines a self-adjoint realization of the formal expression $ - \Delta +
\mbox{``}\mu \delta(\mathbf{x})\mbox{''} $. However, it is important to remark that the meaning of the parameter
$ \beta $ in \eqref{eq:eff ham} is not the strength of the interaction but it is rather proportional to the inverse of the
scattering length, so that for instance the free operator $ - \Delta $ corresponds to $ \beta = + \infty $.

As already discussed, our main goal is thus to prove a rigorous derivation of the effective particle
dynamics generated by the time-dependent operator $ \mathcal{H}_{\beta(t)} $, with $ \beta(t) $ a periodic
function. The existence of such a dynamics has already been studied in the literature \cite{sayyaf1984}
(see also \cite{carcorfig2017,carfiote2017,delfigtet2000,posilicano2007,SY2,Ya} for further details and similar results) and it is known that,
under minimal regularity assumptions on $ \beta(t) $, \emph{i.e.}\ $ \beta(t) \in L^{\infty}_{\mathrm{loc}}(\mathbb{R}) $, there
exists a two-parameter unitary group $ \mathcal{U}_{\mathrm{eff}}(t,s) $, $ t,s \in \mathbb{R} $, which is generated
by $ \mathcal{H}_{\beta(t)} $: for any $ s \in \mathbb{R} $ and any $ \psi \in \mathscr{D}(\mathcal{H}_{\beta(s)}) $,
\begin{equation}
  \label{eq:generator}
  \lim_{h \to 0} \frac{i \left( \mathcal{U}_{\mathrm{eff}}(s+h,s) - 1 \right) \psi}{h} = \mathcal{H}_{\beta(s)} \psi\; .
\end{equation}

In fact, the time-evolution generated by $ \mathcal{H}_{\beta(t)} $ can be explicitly characterized: for any $
\psi_s \in H^2_0(\mathbb{R}^3 \setminus \{0\}) $,
\begin{equation}
  \label{eq:ansatz}
  \psi_t(\mathbf{x}) : = \left( \mathcal{U}_{\mathrm{eff}}(t,s) \psi_s \right)(\mathbf{x}) = \left( U_0(t-s) \psi_s \right)(\mathbf{x}) + i \int_s^t \mathrm{d} \tau \: U_0(t - \tau; \mathbf{x}) q(\tau),
\end{equation}
where we have denoted by $ U_0(t) $ the unitary time-evolution generated by $ \mathcal{H}_0 = - \Delta $, and
by $ U_0(t; \mathbf{x}) $ the corresponding integral kernel, which reads
\begin{equation}
  U_0(t; \mathbf{x}) : = \frac{1}{(4\pi i t)^{3/2}} \exp \left\{ \frac{i x^2}{4t} \right\}\, .
\end{equation}
The {\it charge} $ q(t) \in \mathbb{C} $ solves the Volterra-type integral equation
\begin{equation}
  \label{eq:charge eq}
  q(t) + 4 \sqrt{\pi i} \int_s^t \mathrm{d} \tau \: \frac{\beta(\tau) q(\tau)}{\sqrt{t - \tau}} = 4 \sqrt{\pi i} \int_s^t \mathrm{d} \tau \: \frac{1}{\sqrt{t - \tau}} \left(U_0(\tau - s) \psi_s \right)(0),
\end{equation}
and, in fact, is the unique solution of such an equation in the space of continuous functions. Notice that
it is easy to verify heuristically that \eqref{eq:ansatz} solves the time-dependent Schr\"{o}dinger
equation, since
\begin{displaymath}
  i \partial_t \psi_t = - \Delta \left( \psi_t - \frac{q(t)}{4\pi x} \right),
\end{displaymath}
while it is much harder to derive the charge equation \eqref{eq:charge eq}, which is related to the
boundary condition in \eqref{eq:eff ham}. The evolution $ \mathcal{U}_{\mathrm{eff}} $ can be expressed as
in \eqref{eq:ansatz} for any $ \psi_s $ vanishing around the origin, but it can be extended to any state in $
L^2(\mathbb{R}^3) $ by density.

\subsection{Main results}
\label{sec:results}

We now describe in more detail the quasi-classical regime. Let $ \mathcal{B} \in \mathscr{B}\bigl(L^2
(\mathbb{R}^3)\bigr)$ be a bounded\footnote{The analysis can be easily extended to suitable unbounded
  observables, but we restrict the discussion to bounded operators for the sake of simplicity.}.  observable associated to the particle. The Heisenberg evolution of $ \mathcal{B} $ in the
microscopic dynamics is given by
\begin{equation}
  \label{eq:heisenberg}
  B_{\varepsilon}(t,s)=e^{iH_{\varepsilon}(t-s)} \mathcal{B} \otimes 1 \,e^{-iH_{\varepsilon}(t-s)}.
\end{equation}
Since the microscopic system describes an interaction between the particle and the phonon fields, for almost all $t \neq s \in \mathbb{R} $, $ B_{\varepsilon}(t,s)\Bigr\rvert_{\Gamma_{\mathrm{sym}}(\mathfrak{H}) \otimes \Gamma_{\mathrm{sym}}(\mathfrak{H})}\neq 1$, thus acting non-trivially on
the field. However, since we are interested on a description of the subsystem consisting of the particle
alone, let us take the partial expectation of $ B_{\varepsilon}(t,s)$ with respect to some initial state of the fields $\Psi_{\varepsilon}\in
\Gamma_{\mathrm{sym}}(\mathfrak{H})\otimes \Gamma_{\mathrm{sym}}(\mathfrak{H})$ at time $ s $ (to be fixed later): 
\bml{
  \label{eq:ev:start}
  \mathcal{B}_{\varepsilon}(t,s):= \bigl\langle \Psi_{\varepsilon}\bigl\lvert B_{\varepsilon}(t,s)\bigr\rvert \Psi_{\varepsilon}\bigr\rangle_{\Gamma_{\mathrm{sym}}(\mathfrak{H})\otimes \Gamma_{\mathrm{sym}}(\mathfrak{H})} \\
  = \bigl\langle e^{-iH_{\varepsilon}(t-s)} \Psi_{\varepsilon}\bigl\lvert\mathcal{B} \otimes 1 \bigr\rvert e^{-iH_{\varepsilon}(t-s)} \Psi_{\varepsilon} \bigr\rangle_{\Gamma_{\mathrm{sym}}(\mathfrak{H})\otimes \Gamma_{\mathrm{sym}}(\mathfrak{H})} \in \mathscr{B} \bigl(L^2(\mathbb{R}^3) \bigr).
}
Our main goal is thus to prove that in the quasi-classical limit $\varepsilon\to 0$, $\mathcal{B}_{\varepsilon}(t,s)$ converges
to $\mathcal{B}(t,s)$ in some topology (that turns out to be at most the strong operator topology), the
latter being defined as
\begin{equation}
  \label{eq:ev:obs}
  \mathcal{B}(t,s):=\mathcal{U}_{\mathrm{eff}}^{\dagger}(t,s) \, \mathcal{B} \; \mathcal{U}_{\mathrm{eff}}(t,s)\; ,
\end{equation}
where the two-parameter unitary group $ \mathcal{U}_{\mathrm{eff}}(t,s) $ is defined in 
\cref{sec:effective}. In other words, the effective quasi-classical evolution of the particle is, whenever
the field is in a suitable microscopic state described by $\Psi_{\varepsilon}$, generated by the time-dependent point
interaction Hamiltonian $ H_{\beta(t)} $. The expression \eqref{eq:ev:obs} does not hold true for any choice
of the field state $\Psi_{\varepsilon}$, but only for a restricted class of vectors; the relevant example on which we
focus our attention is a time-dependent coherent state of the form
\begin{equation}
  \label{eq:coherent}
  \Xi_{\varepsilon,s}:= W^{(a)}_{\varepsilon}\left(\tfrac{\alpha_{\varepsilon}}{i\varepsilon}\right) W^{(b)}_{\varepsilon}\left(\tfrac{\beta_{\varepsilon}}{i\varepsilon}\right)\Omega\; ,
\end{equation}
where
\begin{equation}
  W^{(a)}_{\varepsilon}(\xi) = e^{i(a_{\varepsilon}(\xi) + a_{\varepsilon}^{\dagger}(\xi))}\;,	\qquad W^{(b)}_{\varepsilon}(\xi) = e^{i(b_{\varepsilon}(\xi) + b_{\varepsilon}^{\dagger}(\xi))}
\end{equation}
are the Weyl operators for the two fields and $\Omega$ is the $\varepsilon$-independent Fock vacuum vector of
$\Gamma_{\mathrm{sym}}(\mathfrak{H})\otimes \Gamma_{\mathrm{sym}}(\mathfrak{H})$. Our proof takes advantage of the convenient coherent structure of
$\Xi_{\varepsilon,s}$. The functions $ \alpha_{\varepsilon},\beta_{\varepsilon} \in \mathfrak{H} $ are chosen of the following
form\footnote{When $\kappa'=0$, the optic phonons' subsystem decouples from the other subsystems, and
    therefore it does not play a role in the analysis. The choice of $\beta_{\varepsilon}$ given in \eqref{eq:alpha} is
    thus to be intended only for $\kappa'>0$.}:
\begin{equation}
  \label{eq:alpha}
	\alpha_{\varepsilon} (\mathbf{k}) = c_{\varepsilon}\sigma_\varepsilon \lambda_0^{-1}(\mathbf{k})\widehat{\mathcal{W}}\left(\sigma_\varepsilon \mathbf{k} \right)\;, \qquad \beta_{\varepsilon} (\mathbf{k}) = c'_{\varepsilon} \sigma_\varepsilon\, k\,\widehat{\mathcal{W}}\left(\sigma_\varepsilon \mathbf{k} \right)\;,
\end{equation}
where $ \mathcal{W} \in C^{\infty}_0(\mathbb{R}^3) $ is a resonant potential in the
sense of \cref{defi:resonant}, and $ \widehat{\mathcal{W}} $ stands for its Fourier transform, $ c_{\varepsilon},c'_{\varepsilon} $
are $\varepsilon$-uniformly bounded constants
\begin{equation}
  c_{\varepsilon} = \tx\frac{1}{2} \gamma_a \sigma_{\varepsilon}+o(\sigma_{\varepsilon})\;,	\qquad c'_{\varepsilon} = \tx\frac{1}{2}\gamma_b \sigma_{\varepsilon}+o(\sigma_{\varepsilon})\; ,
\end{equation}
with $ \gamma_a, \gamma_b \in \R $ and $ \sigma_{\varepsilon} > 0 $ is such that
\begin{equation}
\varepsilon^{1/{j_{*}}} \ll \sigma_{\varepsilon} \ll 1\;,
\end{equation}
where $ j_* := 6+8M$ is inherited from \eqref{eq:11}.

We can now state our main result:

\begin{thm}[Effective dynamics]
  \label{thm:eff}
  \mbox{}	\\
  Let $ \mathcal{B} $ be a bounded operator on $ L^2(\mathbb{R}^3) $. Let also $ \mathcal{B}_{\varepsilon}(t,s) $ and $
  \mathcal{B}(t,s) $ be defined as in \eqref{eq:ev:start} and \eqref{eq:ev:obs}, respectively. If the field
  is in the coherent state \eqref{eq:coherent} at time $ s \in \mathbb{R} $, then, for any $ t \in \mathbb{R} $,
  \begin{equation}
    \label{eq:convergence}
    \mathcal{B}_{\varepsilon}(t,s) \xrightarrow[\varepsilon \to 0]{\mathrm{s}} \mathcal{B}(t,s) = \mathcal{U}_{\mathrm{eff}}^{\dagger}(t,s) \, \mathcal{B} \; \mathcal{U}_{\mathrm{eff}}(t,s),
  \end{equation}
  where $ \mathcal{U}_{\mathrm{eff}} $ is the dynamics generated by $ \mathcal{H}_{\beta(t)} $, with
  \begin{equation}
    \beta(t) = \gamma_a +\gamma_b \cos  \kappa(t-s).
  \end{equation}
\end{thm}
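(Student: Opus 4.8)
The plan is to interpolate between the microscopic reduced dynamics and the point–interaction dynamics through an auxiliary \emph{regular} Schr\"{o}dinger dynamics at the length scale $\sigma_{\varepsilon}$. Since the coupling is scaled so that there is no back-reaction, to leading order the classical fields follow their free flow: the acoustic profile $\alpha_{\varepsilon}(t)$ solves $\mathrm{i}\partial_{t}\alpha_{\varepsilon}(t)=\varepsilon\,\omega\,\alpha_{\varepsilon}(t)$, so that $\alpha_{\varepsilon}(t)\to\alpha_{\varepsilon}$ --- the acoustic field is \emph{frozen} in the limit --- while the optic profile $\beta_{\varepsilon}(t)$ solves $\mathrm{i}\partial_{t}\beta_{\varepsilon}(t)=\kappa\,\beta_{\varepsilon}(t)$, the prefactor $\kappa/\varepsilon$ exactly compensating the $\varepsilon$ hidden in $\mathrm{d}\Gamma^{(b)}_{\varepsilon}=\varepsilon\,\mathrm{d}\Gamma^{(b)}$, so that $\beta_{\varepsilon}(t)=\mathrm{e}^{-\mathrm{i}\kappa(t-s)}\beta_{\varepsilon}$ \emph{rotates at the fixed frequency $\kappa$}. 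Out of the coherent state $\Xi_{\varepsilon,s}$ one reads off the time-dependent classical potential
\begin{equation*}
  V_{\varepsilon}(t,\mathbf{x})\,=\,-2\,\Re\,\langle\,\lambda^{(a)}_{\mathbf{x}},\,\alpha_{\varepsilon}(t)\,\rangle_{\mathfrak{H}}\,-\,2\,\Re\,\langle\,\lambda^{(b)}_{\mathbf{x}},\,\beta_{\varepsilon}(t)\,\rangle_{\mathfrak{H}}\,,
\end{equation*}
and denotes by $\mathcal{U}^{\mathrm{cl}}_{\varepsilon}(t,s)$ the unitary propagator generated by $-\Delta+V_{\varepsilon}(t,\cdot)$ on $L^{2}(\mathbb{R}^{3})$. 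The statement then follows from the two strong limits
\begin{equation*}
  \mathcal{B}_{\varepsilon}(t,s)-\mathcal{U}^{\mathrm{cl}\dagger}_{\varepsilon}(t,s)\,\mathcal{B}\,\mathcal{U}^{\mathrm{cl}}_{\varepsilon}(t,s)\xrightarrow[\varepsilon\to0]{\mathrm{s}}0\,,\qquad\mathcal{U}^{\mathrm{cl}}_{\varepsilon}(t,s)\xrightarrow[\varepsilon\to0]{\mathrm{s}}\mathcal{U}_{\mathrm{eff}}(t,s)\,,
\end{equation*}
together with the uniform bound $\|\mathcal{B}_{\varepsilon}(t,s)\|,\|\mathcal{B}(t,s)\|\le\|\mathcal{B}\|$, which upgrades convergence from a dense set of particle states to all of $L^{2}(\mathbb{R}^{3})$: the first limit is a quasi-classical estimate comparing the microscopic dynamics with the regular $\sigma_{\varepsilon}$-scaled Schr\"{o}dinger dynamics, the second is the approximation of the time-dependent point interaction by such rescaled potentials.

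For the first limit, I would conjugate $\mathrm{e}^{-\mathrm{i}H_{\varepsilon}(t-s)}$ by the time-dependent Weyl operator $W^{(a)}_{\varepsilon}\!\big(\tfrac{\alpha_{\varepsilon}(t)}{\mathrm{i}\varepsilon}\big)W^{(b)}_{\varepsilon}\!\big(\tfrac{\beta_{\varepsilon}(t)}{\mathrm{i}\varepsilon}\big)$ implementing the classical flow; the conjugated dynamics is generated by
\begin{equation*}
  \widetilde{H}_{\varepsilon}(t)\,=\,-\Delta+V_{\varepsilon}(t,\mathbf{x})+\varepsilon\,\mathrm{d}\Gamma^{(a)}(\omega)+\kappa\,\mathrm{d}\Gamma^{(b)}(1)+R_{\varepsilon}(t)\,.
\end{equation*}
Crucially, the $\mathcal{O}(\varepsilon^{-1})$-enhanced linear field terms produced by the shift of $\kappa\varepsilon^{-1}\mathrm{d}\Gamma^{(b)}_{\varepsilon}(1)$ cancel exactly against the ones coming from the time-dependence of the conjugation --- this is precisely what selects $\beta_{\varepsilon}(t)=\mathrm{e}^{-\mathrm{i}\kappa(t-s)}\beta_{\varepsilon}$, and similarly for the acoustic sector --- so that $R_{\varepsilon}(t)$ reduces to the original interaction $a^{\sharp}_{\varepsilon}(\lambda^{(a)}_{\mathbf{x}})+b^{\sharp}_{\varepsilon}(\lambda^{(b)}_{\mathbf{x}})$, of size $\mathcal{O}(\sqrt{\varepsilon})$ in the relevant sense. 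A Duhamel comparison of $\widetilde{U}_{\varepsilon}(t,s)$ applied to $\psi\otimes\Omega$, with $\psi\in H^{2}_{0}(\mathbb{R}^{3}\setminus\{0\})$, against $\big(\mathcal{U}^{\mathrm{cl}}_{\varepsilon}(t,s)\psi\big)\otimes\Omega$ bounds the error by $\int_{s}^{t}\big\|R_{\varepsilon}(\tau)\big(\mathcal{U}^{\mathrm{cl}}_{\varepsilon}(\tau,s)\psi\big)\otimes\Omega\big\|\,\mathrm{d}\tau$; since $a_{\varepsilon}(\cdot)\Omega=0$ only the creation parts survive and each carries a factor $\sqrt{\varepsilon}$, the acoustic one controlled by $\|\lambda_{0}\|_{\mathfrak{H}}<\infty$ (by \eqref{eq:12}) and the optic one by the Fr\"{o}hlich-type form bound of $b^{\sharp}_{\varepsilon}(\lambda^{(b)}_{\mathbf{x}})$ relative to $-\Delta_{\mathbf{x}}+1$. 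Closing the estimate requires an a priori bound on $\|\mathcal{U}^{\mathrm{cl}}_{\varepsilon}(\tau,s)\psi\|_{H^{2}}$, polynomially large in $\sigma_{\varepsilon}^{-1}$ because $V_{\varepsilon}$ is a rescaled, form-bounded potential; the resulting error is $\mathcal{O}(\sqrt{\varepsilon}\,\sigma_{\varepsilon}^{-p})$ with $p$ inherited from \eqref{eq:11}, and it vanishes exactly because $\varepsilon^{1/j_{*}}\ll\sigma_{\varepsilon}$ with $j_{*}=6+8M$. Taking the partial expectation against $\Xi_{\varepsilon,s}$ --- which the conjugation turns into the vacuum $\Omega$ --- and undoing the Weyl transformation then yields the first limit.

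For the second limit, I would compute $V_{\varepsilon}$ explicitly from \eqref{eq:alpha}. With $\lambda^{(a)}_{\mathbf{x}}(\mathbf{k})=\mathrm{e}^{\mathrm{i}\mathbf{k}\cdot\mathbf{x}}\lambda_{0}(\mathbf{k})$, $\lambda^{(b)}_{\mathbf{x}}(\mathbf{k})=\mathrm{e}^{\mathrm{i}\mathbf{k}\cdot\mathbf{x}}/k$ and the change of variables $\mathbf{q}=\sigma_{\varepsilon}\mathbf{k}$, both inner products turn into rescaled copies of $\mathcal{W}$ at scale $\sigma_{\varepsilon}$, with the singular prefactors carried by $c_{\varepsilon}$ and $c'_{\varepsilon}$; the choices $c_{\varepsilon}=\tfrac12\gamma_{a}\sigma_{\varepsilon}+o(\sigma_{\varepsilon})$, $c'_{\varepsilon}=\tfrac12\gamma_{b}\sigma_{\varepsilon}+o(\sigma_{\varepsilon})$ tune the coupling at exactly the order for which $-\Delta+V_{\varepsilon}(t,\cdot)$ falls into the class covered by the approximation result for time-dependent point interactions. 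The acoustic contribution is time-independent (frozen field) and the optic one is modulated by $\Re\,\mathrm{e}^{-\mathrm{i}\kappa(t-s)}=\cos\kappa(t-s)$ (rotating field), while the resonant character of $\mathcal{W}$ in the sense of \cref{defi:resonant} makes the limit nontrivial and insensitive to the $o(\sigma_{\varepsilon})$ corrections; hence the effective boundary parameter converges to $\beta(t)=\gamma_{a}+\gamma_{b}\cos\kappa(t-s)$ and $\mathcal{U}^{\mathrm{cl}}_{\varepsilon}(t,s)\to\mathcal{U}_{\mathrm{eff}}(t,s)$ strongly, $\mathcal{U}_{\mathrm{eff}}$ being the propagator of $\mathcal{H}_{\beta(t)}$. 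It is exactly this contrast between the two kinetic terms --- one vanishing, one of order one --- that turns a time-independent point interaction into a time-periodic one (cf.\ \cref{rem:1}).

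The main obstacle is the quantitative dependence on $\sigma_{\varepsilon}$ in the first limit: the auxiliary generator $-\Delta+V_{\varepsilon}(t,\cdot)$ is singular and rapidly varying, so its propagator loses Sobolev regularity at a $\sigma_{\varepsilon}$-dependent rate, and the whole scheme works only because the $\sqrt{\varepsilon}$ gained from the surviving creation operators beats that loss --- which is what forces the two-sided condition $\varepsilon^{1/j_{*}}\ll\sigma_{\varepsilon}\ll1$ and fixes $j_{*}=6+8M$ through the bound \eqref{eq:11} on $1/\lambda_{0}$ (and the norms of $\alpha_{\varepsilon},\,\omega\alpha_{\varepsilon},\,\beta_{\varepsilon}$ it controls). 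Two further technical points: one must check that $-\Delta+V_{\varepsilon}(t,\cdot)$ generates a genuine unitary propagator admitting $H^{2}(\mathbb{R}^{3})$ as a common invariant core on which the Duhamel comparison with $\widetilde{H}_{\varepsilon}$ is legitimate; and the optic interaction $b^{\sharp}_{\varepsilon}(\lambda^{(b)}_{\mathbf{x}})$, whose form factor is not in $\mathfrak{H}$, has to be handled through the Fr\"{o}hlich form bound (or a Gross-type dressing) rather than naive norm estimates.
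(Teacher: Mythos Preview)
Your strategy coincides with the paper's: conjugate the microscopic evolution by the time-dependent Weyl operator along the classical trajectory, compare the resulting fluctuation dynamics $Z_{\varepsilon}(t,s)$ with the factorized reference $Y_{\sigma}(t,s)=\mathcal{U}_{\sigma}(t,s)\otimes e^{-i\kappa(t-s)\mathrm{d}\Gamma^{(b)}(1)}$ via Duhamel, and then pass from $\mathcal{U}_{\sigma}$ to $\mathcal{U}_{\mathrm{eff}}$ by \cref{pro:approx}. The paper packages the reduction to $Z_{\varepsilon}\to Z$ through the inequality \eqref{eq: estimate Z} and carries out the Duhamel in \cref{pro:1}. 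One cosmetic difference: the paper takes the acoustic trajectory to be exactly frozen, $\alpha_{\varepsilon}(t)=\alpha_{\varepsilon}$, rather than your $e^{-i\varepsilon\omega(t-s)}\alpha_{\varepsilon}$; with that choice the effective potential is literally $\mathcal{W}_{\beta(t),\sigma}$ and the leftover $\varepsilon\,\mathrm{d}\Gamma^{(a)}(\omega)$ is harmless on the dense set.

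There is one step that, as you wrote it, does not go through. Your strong Duhamel bound $\int_s^t\lVert R_{\varepsilon}(\tau)\,(\mathcal{U}^{\mathrm{cl}}_{\varepsilon}(\tau,s)\psi)\otimes\Omega\rVert\,\mathrm{d}\tau$ is infinite: $b^{\dagger}(\lambda^{(b)}_{\mathbf{x}})\Omega$ is not a Fock vector because $\lambda^{(b)}_{\mathbf{x}}\notin\mathfrak{H}$, and a Fr\"{o}hlich \emph{form} bound cannot repair a \emph{norm} estimate here. The paper runs the Duhamel weakly, as a quadratic form in \eqref{sqrtepscontrol}, and then uses the commutator representation $\bm{\lambda}_{>,\mathbf{x}}=[-i\nabla,\bm{\xi}_{\mathbf{x}}]$ so that the gradient can be thrown to \emph{either} side of the pairing. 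This produces two contributions: one needs $\lVert\nabla Y_{\sigma}\Phi\rVert$ (your $H^{1}$-control on $\mathcal{U}^{\mathrm{cl}}_{\varepsilon}$, handled by \cref{lemma:1} and costing $\sigma^{-3}$), but the other needs $\lVert\nabla Z_{\varepsilon}\Phi\rVert$, i.e.\ kinetic-energy control along the \emph{microscopic} flow. The paper obtains the latter via the Gronwall estimate of \cref{laplacecontrol} combined with the Weyl shift of $H_{+}+H_{I}$; after the shift the dominant term is $\lVert\bm{\omega}\bm{\alpha}_{\varepsilon}\rVert^{2}=\mathcal{O}(\sigma^{-(3+4M)})$, and it is this contribution---not the $\mathcal{U}^{\mathrm{cl}}_{\varepsilon}$ regularity you singled out---that actually fixes $j_{*}=6+8M$. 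So the constraint $\varepsilon^{1/j_{*}}\ll\sigma_{\varepsilon}$ is driven by the Weyl-shifted microscopic kinetic energy, with the exponent $4M$ entering through $\lVert\omega\alpha_{\varepsilon}\rVert$ and \eqref{eq:11}.
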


\begin{remark}[Time-dependence]
  \label{rem:1}
  \mbox{}        \\
  It is also possible to obtain time-independent point interactions in the limit, \emph{i.e.}, to
    prove the analogous of \cref{thm:eff} with $\beta(t)$ replaced by $\gamma_1$. This is achieved whenever the
    optic phonons do not play any role in the limit. Concretely, this can be done by choosing suitable
    initial configurations for the optical field $b^{\sharp}_{\varepsilon}$: possible choices are the vacuum, or suitable coherent, whose average number of excitations is subleading w.r.t. $ 1 $.
\end{remark}
\begin{remark}[Initial state]
  \label{rem:initial}
  \mbox{}	\\
  The choice of the initial state $\Xi_{\varepsilon,s}$ is crucial for the derivation of the effective dynamics. Let us
  stress that $\Xi_{\varepsilon,s}$ depends on $\varepsilon$ through the phonon fields (of order $\sqrt{\varepsilon}$), through the factor
  $\varepsilon^{-1}$ in the argument, and additionally through $\alpha_{\varepsilon},\beta_{\varepsilon} $. This latter $\varepsilon$-dependence, that is new
  compared to other situations in which coherent states are used to investigate quasi- and semi-classical
  limits (see,
  \emph{e.g.}, \cite{hepp1974cmp,ginibre1979cmpI,ginibre1979cmp2,ginibre2006ahp,rodnianski2009cmp,falconi2013jmp}),
  is chosen so that the effective potential becomes singular in the limit, due to the $\sigma$-scaling: in the
  language of semiclassical analysis, $\Xi_{\varepsilon,s}$ converges to an additive but not $\sigma$-additive cylindrical
  Wigner measure on the classical space of fields (see \cite{falconi2017ccm,falconi2017arxiv} for additional details on cylindrical Wigner
  measures). In particular, the results proven in \cite{CFO} do not apply to such a class of states, which call for an alternative approach.
\end{remark}
	
\begin{remark}[Field dynamics]
  \label{rem:field}
  \mbox{}	\\
  The effective dynamics $ \mathcal{U}_{\mathrm{eff}} $ of the particle is time-dependent in the
  quasi-classical regime, although the original microscopic dynamics $ U_{\varepsilon} $ generated by $ H_{\varepsilon} $ is
  time-independent. The reason is clearly the interaction of the particle with the optic
    phonons, which produces an entanglement of any initial product state, and gives rise, together
    with the particle-acoustic phonons interaction, to the non-trivial dynamics $
  \mathcal{U}_{\mathrm{eff}} $ for the particle in the limit $ \varepsilon \to 0 $. One may wonder however what is the
  effect on the fields themselves of such an interaction and, as we are going to show, such an
  effect actually disappears as $ \varepsilon \to 0 $, since in the chosen scaling the classical acoustic field is
    constant in time, while the optic field undergoes an evolution which is asymptotically free: if we
  denote by $ \alpha(\mathbf{k}; t), \beta(\mathbf{k};t) $ the classical counterparts of the field
  operators, they satisfy $ i \dot{\alpha}(t) = 0 , i \dot{\beta}(t)= \kappa \beta(t)$, or, equivalently,
  \begin{equation}
  	\label{eq: alpha evolution}
        \alpha(\mathbf{k};t)=\alpha(\mathbf{k})\;,	\qquad \beta(\mathbf{k}; t) = e^{-i\kappa(t-s)} \beta(\mathbf{k})\;,
  \end{equation}
 	where $ \alpha(\kv) $ and $ \beta(\kv) $ stands for the classical fields at initial time $ s $., {\it i.e.}, the classical counterparts of \eqref{eq:alpha}.
  In other words, there is no back-reaction of the particle on the fields in the chosen
  quasi-classical scaling.
\end{remark}
	
\begin{remark}[Heisenberg evolution]
  \label{rem:heisenberg}
  \mbox{}	\\
  The convergence proven in~\eqref{eq:convergence} implies by duality that the effective Schr\"{o}dinger
  dynamics on states in $ L^2(\mathbb{R}^3) $ is also given by the two-parameter unitary group $
  \mathcal{U}_{\mathrm{eff}}(t,s) $: in the limit $ \varepsilon \to 0 $, a state $ \psi \in L^2(\mathbb{R}^3) $ at time $ s $ is
  mapped to $ \mathcal{U}_{\mathrm{eff}}(t,s) \psi $ at time $ t $.
\end{remark}

An important step in the proof of the main Theorem above is the approximation of the effective dynamics $
\mathcal{U}_{\mathrm{eff}} $ by the one generated by Schr\"{o}dinger operators with suitable regular
potentials. It is indeed very well known {\cite[Sect. I.1.2]{albeverio2005sqm}} that a point interaction Hamiltonian can be obtained as the
strong resolvent limit of a sequence of operators with rescaled smooth potentials: let $ \mathcal{W} \in C^{\infty}_0(\mathbb{R}^3)
$ and let $ 0 < \sigma \ll 1 $; set
\begin{equation}\label{eq:vsigma}
  \mathcal{W}_{{\beta},\sigma}(\mathbf{x}) : = \nu(\sigma) {\mathcal{W}_{\sigma}(\xv)},	\qquad		{\mathcal{W}_{\sigma}(\xv) : = \frac{1}{\sigma^2} \mathcal{W}(\xv/\sigma),}
\end{equation}
{for some $ \nu \in C([0,1]) $ given in \eqref{eq:nu} below} and
\begin{equation}
  \label{eq:happ}
  \mathcal{K}_{\beta,\sigma} : = - \Delta + \mathcal{W}_{{\beta},\sigma}(\mathbf{x}),
\end{equation}
which is obviously self-adjoint on $H^2(\mathbb{R}^3) $. In order to generate a point interaction, it is well-known {(see again \cite[Sect. I.1.2]{albeverio2005sqm})} that a zero-energy resonance must be present. Therefore, we formulate the following

	\begin{definition}[Resonant potential]
		\label{defi:resonant}
		\mbox{}	\\
		Let $ \mathcal{W} \in C_0^{\infty}(\mathbb{R}^3) $. We say that $ \mathcal{W} $ is resonant, if 
		\begin{itemize}
			\item $ -1 $ is a simple eigenvalue of $ \mathrm{sgn}(\mathcal{W}) |\mathcal{W}|^{1/2} {(-\Delta)}^{-1} |\mathcal{W}|^{1/2} $ with eigenvector $ \phi \in L^2(\mathbb{R}^3) $;
			\item $ \left.\left\langle |\mathcal{W}|^{1/2}\right|\phi \right\rangle \neq 0 $.
		\end{itemize}
	\end{definition}

Hence, if $ \mathcal{W} $ is {\it resonant}, then $ - \Delta + \mathcal{W} $ has a {\it zero-energy resonance}. If in addition 
\begin{equation}\label{eq:nu}
  \nu(\sigma) = 1 +  \beta  \sigma + o(\sigma),
\end{equation}
for some $ \beta \in \mathbb{R} $, then \cite[Thm. 1.2.5]{albeverio2005sqm}
\begin{equation}
  \mathcal{K}_{\beta,\sigma} \xrightarrow[\sigma \to 0]{\left\| \: \cdot \: \right\| - \mathrm{ res}} \mathcal{H}_{\beta},
\end{equation}
where $ \mathcal{H}_{\beta} $ is defined in~\eqref{eq:eff ham} and $ \left\| \: \cdot \: \right\| - \mathrm{ res} $
is short for norm resolvent convergence.

Obviously, by standard results (see, \emph{e.g.}, \cite[Thm. 2.16]{kato1966gmw}), the one-parameter
unitary group $ e^{- i \mathcal{K}_{\beta,\sigma} t} $ generated by $ \mathcal{K}_{\beta,\sigma} $ strongly converges to $
e^{i \mathcal{H}_\beta t} $, when $ \sigma \to 0 $. Whether the same strong convergence holds true when $ \beta $ depends
on time, and therefore the propagator becomes a two-parameter unitary group, is not a consequence of some
general result of operator theory: there are indeed adaptations of the aforementioned results to
time-dependent operators~\cite{sloan1981,yoshida1974}, but they typically work only for bounded operators
or in presence of a common core independent of time (which is not the case for $ H_{\beta(t)} $). In the
Theorem below we fill this gap for time-dependent point interactions. We believe this result might be
of interest on its own.

\begin{thm}[Time-dependent point interaction dynamics]
  \label{pro:approx}
  \mbox{}	\\
  Let $ \beta \in C^1(\mathbb{R}) $ and $ H_{\beta(t)} $ and $ \mathcal{K}_{\beta(t),\sigma} $ be defined as in \eqref{eq:eff ham} and
  \eqref{eq:happ}, respectively, with $ \nu(\sigma) $ given by \eqref{eq:nu} and with $ \beta(t) $ in place of $ \beta $. Let also $
  \mathcal{U}_{\mathrm{eff}}(t,s) $ and $ \mathcal{U}_{\sigma}(t,s) $, $ t, s \in \mathbb{R} $, be the two-parameter
  unitary groups generated by $ H_{\beta(t)} $ and $ \mathcal{K}_{\beta(t),\sigma} $, then
  \begin{equation}
    \label{eq:approx}
    \mathcal{U}_{\sigma}(t,s) \xrightarrow[\sigma \to 0]{\mathrm{s}} \mathcal{U}_{\mathrm{eff}}(t,s).
  \end{equation}
\end{thm}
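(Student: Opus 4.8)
The plan is to prove the strong convergence $\mathcal{U}_{\sigma}(t,s)\psi \to \mathcal{U}_{\mathrm{eff}}(t,s)\psi$ for a dense set of $\psi$ and then extend by the uniform bound $\|\mathcal{U}_{\sigma}(t,s)\|=1$. Rather than work at the level of the operators $\mathcal{K}_{\beta(t),\sigma}$ directly — which is awkward because their domains move with $t$ and there is no common core — I would exploit the explicit Volterra-type characterisation of the limiting dynamics in~\eqref{eq:ansatz}--\eqref{eq:charge eq} and establish an analogous representation for the regular-potential dynamics, then pass to the limit in the integral equations. The key intermediate object is the resolvent: by \cite[Thm. 1.2.5]{albeverio2005sqm} (in its resolvent form, with the Birman--Schwinger/Konno--Kuroda expansion) one has, for $z$ in the resolvent set, an explicit formula $(\mathcal{K}_{\beta,\sigma}-z)^{-1} = (-\Delta - z)^{-1} + \text{(rank-one-like correction)} + o(1)$ as $\sigma\to 0$, where the correction converges to the Krein-type term in $(\mathcal{H}_{\beta}-z)^{-1}$, uniformly for $z$ in compact subsets of $\mathbb{C}\setminus\mathbb{R}$ and locally uniformly in $\beta$. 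Here is where the resonance condition in \cref{defi:resonant} and the expansion $\nu(\sigma)=1+\beta\sigma+o(\sigma)$ enter: they guarantee the Birman--Schwinger operator $1 + \nu(\sigma)\,\mathrm{sgn}(\mathcal{W})|\mathcal{W}|^{1/2}(-\Delta-z)^{-1}|\mathcal{W}|^{1/2}$ (suitably rescaled) is invertible with the inverse producing precisely the coefficient $\beta - \tfrac{\sqrt{-z}}{4\pi}$ in the denominator of the point-interaction resolvent.

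With this uniform resolvent control in hand, the strategy is a Duhamel/Dyson-series argument adapted to the two-parameter setting. Write $\mathcal{U}_{\sigma}(t,s) - \mathcal{U}_{\mathrm{eff}}(t,s)$ using the charge representation: for $\psi_s \in H^2_0(\mathbb{R}^3\setminus\{0\})$ the function $\psi_t^{\sigma} := \mathcal{U}_{\sigma}(t,s)\psi_s$ solves $i\partial_t \psi_t^{\sigma} = -\Delta \psi_t^{\sigma} + \mathcal{W}_{\beta(t),\sigma}\psi_t^{\sigma}$, and one can expand this against the free propagator to obtain an integral equation whose kernel, after the $\sigma$-rescaling, converges (using the resonance and the uniform resolvent estimates, Fourier-transformed to the time domain) to the Volterra kernel $4\sqrt{\pi i}/\sqrt{t-\tau}$ appearing in~\eqref{eq:charge eq}. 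The continuity $\beta\in C^1(\mathbb{R})$ ensures the coefficients $\beta(\tau)$ in the limiting Volterra equation are continuous, so that equation has a unique continuous solution $q$; the corresponding approximate charges $q_\sigma$ then converge to $q$ by a standard Gronwall argument on the Volterra equation (the $\sqrt{t-\tau}$ singularity being integrable). Feeding $q_\sigma \to q$ back into the Duhamel representation gives $\psi_t^{\sigma}\to\psi_t$ in $L^2$, first locally uniformly in $t$, which is exactly~\eqref{eq:approx} on the dense set $H^2_0(\mathbb{R}^3\setminus\{0\})$.

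For the density step one uses that $H^2_0(\mathbb{R}^3\setminus\{0\})$ is a core for each $\mathcal{H}_{\beta(s)}$ and is dense in $L^2$, together with $\|\mathcal{U}_{\sigma}(t,s) - \mathcal{U}_{\mathrm{eff}}(t,s)\| \le \|\mathcal{U}_{\sigma}(t,s)(\psi_s - \chi)\| + \|(\mathcal{U}_{\sigma}(t,s)-\mathcal{U}_{\mathrm{eff}}(t,s))\chi\| + \|\mathcal{U}_{\mathrm{eff}}(t,s)(\chi - \psi_s)\|$ with $\chi$ in the dense set; both unitary groups have operator norm $1$, so the outer terms are small uniformly in $\sigma$. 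One should also check that the existence and basic properties of $\mathcal{U}_{\sigma}(t,s)$ as a two-parameter unitary group are guaranteed — this is standard for time-dependent Schrödinger operators with smooth bounded-below potentials depending continuously (indeed $C^1$) on time, e.g.\ by the results of~\cite{yoshida1974} applied on the common core $H^2(\mathbb{R}^3)$.

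The main obstacle I expect is making the convergence of the integral kernels \emph{uniform} in the appropriate sense: the $\sigma$-rescaled potential $\mathcal{W}_{\beta(t),\sigma}$ concentrates at the origin, so the Duhamel expansion involves products of free kernels evaluated near $\mathbf{x}=0$ at short time differences, which is exactly the regime where the point-interaction singularity lives. Controlling the error terms requires carefully tracking how the resonance projection extracts the surviving rank-one piece while the orthogonal complement contributes an $o(1)$ that must be shown integrable against the $\sqrt{t-\tau}$ weight and uniformly small in $\beta(\tau)$ over the compact time interval. Concretely this amounts to transferring the norm-resolvent estimate of \cite[Thm. 1.2.5]{albeverio2005sqm} — which is stated for fixed $\beta$ — into an estimate uniform over $\beta$ ranging in the compact set $\beta([s,t])$, and then inverting the (Fourier/Laplace in time) relation between resolvent convergence and propagator convergence while keeping track of the time-dependence of the Birman--Schwinger denominator. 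Once that uniformity is secured, the Volterra/Gronwall machinery closes the argument routinely.
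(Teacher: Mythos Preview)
Your strategy is genuinely different from the paper's and is a reasonable line of attack, but the proof is not complete: the step you yourself flag as the ``main obstacle'' --- extracting from the Duhamel expansion of $\mathcal{U}_{\sigma}$ a scalar Volterra equation whose kernel converges to $4\sqrt{\pi i}/\sqrt{t-\tau}$, uniformly enough to run Gronwall --- is the entire content of the theorem, and you have not done it. At finite $\sigma$ the Birman--Schwinger operator is not rank one, so there is no scalar ``approximate charge'' $q_{\sigma}$; what you really obtain from Duhamel is an operator-valued Volterra equation in $L^2(\mathrm{supp}\,\mathcal{W})$, and showing that its solution collapses onto the resonance eigenvector $\phi$ with the remainder controlled uniformly in $(t,\sigma)$ is a genuine analytic task (it is essentially a time-domain version of the Konno--Kuroda expansion, and the error terms carry negative powers of $\sigma$ that must be beaten by the smallness of the orthogonal piece). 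Saying ``once that uniformity is secured the Volterra/Gronwall machinery closes the argument routinely'' is precisely where the work is hiding.

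The paper circumvents this difficulty by a Yoshida-type time-slicing. It freezes $\beta$ on a partition of $[s,t]$ into $n$ subintervals, so that on each piece one has \emph{time-independent} operators $\mathcal{H}_{\beta_j}$ and $\mathcal{K}_{\beta_j,\sigma}$; for these, norm resolvent convergence gives strong convergence of the one-parameter groups directly, with the rate extracted from Pazy's formula $e^{-itA}=\lim_k (1+itA/k)^{-k}$ and a quantitative re-reading of \cite[Thm.~1.2.5]{albeverio2005sqm} that tracks the spectral parameter. The price is three separate approximation lemmas (the step-function dynamics $\mathcal{V}_n\to\mathcal{U}_{\mathrm{eff}}$ via the charge equation; $\mathcal{V}_{n,\sigma}\to\mathcal{V}_n$ via resolvents; and $\mathcal{U}_{\sigma}\to\mathcal{V}_{n,\sigma}$ via a direct energy estimate on $\|\mathcal{W}_{\sigma}\psi(t)\|$), plus a final optimisation $n=n_{\sigma}$ to make all errors vanish simultaneously. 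This is heavier machinery than your proposed route, but it never requires passing a singular limit \emph{inside} a time-integral with a $\sigma$-dependent integrand --- each limit is taken at the level of resolvents or of fixed-$\beta$ semigroups, where the analysis is standard. If you want to pursue your approach, the missing ingredient is a uniform-in-time version of the rank-one reduction in \cite[Lemma~1.2.4]{albeverio2005sqm}, together with an $L^1_t$ bound on the operator-valued remainder; without that, the proposal remains a plan rather than a proof.
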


\begin{remark}[Many-center point interactions]
  \label{rem:many}
  \mbox{}	\\
  The result in \cref{pro:approx} is proven for a single time-dependent point interaction at
  the origin, but we expect that it is possible to extended it to point interactions with finitely many
  centers. The proof can indeed be adapted to take into account also the off-diagonal terms appearing in the
  time-evolution generated by a Schr\"{o}dinger operator with many point interactions.
\end{remark}

\bigskip

\noindent
{\footnotesize {\bf Acknowledgments.} R.C., M.F. and M.O. are thankful to INdAM group GNFM for the financial
support, through the grant Progetto Giovani GNFM 2017 ``Dinamica quasi-classica del polarone''. M.C. and M.O. are especially
grateful to the Institut Mittag-Leffler, where part of this work was completed. M.F.\ has been supported by the  Swiss National Science Foundation via the grant ``Mathematical Aspects of
  Many-Body Quantum Systems'', and by the European Research Council
    (ERC) under the European Union’s Horizon 2020 research and innovation
    programme (ERC CoG UniCoSM, grant agreement n.724939).}

\section{Approximation of the Effective Dynamics}
\label{sec:approx}

The main goal of this Section is the proof of \cref{pro:approx}. We thus focus on the limiting dynamics
described in \cref{sec:effective}: as anticipated, the existence of such a dynamics was originally
proven in \cite{sayyaf1984}, although later other alternative approaches have been developed to deal with
the same problem (see, \emph{e.g.}, \cite{posilicano2007}). The key idea is to show that the
ansatz~\eqref{eq:ansatz} preserves the domain of the quadratic form associated to $
\mathcal{H}_{\beta(t)} $, which is independent of $ t $. Next, one has to prove that the evolved state still
satisfies the boundary condition in \eqref{eq:eff ham} and therefore belongs to $
\mathscr{D}(\mathcal{H}_{\beta(t)}) $. This second step is easy to obtain if $ \beta(t) $ is regular enough: by
direct inspection of the Volterra integral equation \eqref{eq:charge eq}, one can indeed prove that, if $
\beta \in C^2_{\mathrm{loc}}(\mathbb{R}) $, then $ q \in C^2_{\mathrm{loc}}(\mathbb{R}) $ as well \cite[Thm.\ 1]{sayyaf1984}, and
it is also easy to verify that the boundary condition is satisfied. In the general case, the argument is
more involved but still the existence of the dynamics, as a two-parameter unitary group, can be proven for any
$ \beta \in L^{\infty}_{\mathrm{loc}}(\mathbb{R}) $ \cite[Thm.\ 2]{sayyaf1984}.

An important intermediate step towards \cref{pro:approx} is the approximation of the time-dependent
dynamics generated by $ \mathcal{H}_{\beta(t)} $ in terms of a product of time-independent ones, in the spirit
of \cite[Thm.\ 1, p.\ 432]{yoshida1974}. The idea is to split the interval $ [s,t] $ into smaller intervals and replace in each of those the propagator of $ \mathcal{H}_{\beta(t)} $ with the one associated to $ \mathcal{H}_{\beta_*(t)} $, $  \beta_*(t) $ being the step function approximation of $ \beta(t) $. The proof is then divided into three parts:
\begin{itemize}
	\item first, we show that the dynamics generated by $ \mathcal{H}_{\beta_*(t)} $ strongly converges to $ \mathcal{U}_{\mathrm{eff}} $ (\cref{lem: approx 1});
	\item next, we discuss the approximation of $ \mathcal{H}_{\beta_*(t)} $ and its unitary group in terms of the Schr\"{o}dinger operators \eqref{eq:happ} with rescaled potentials $ \mathcal{W}_{{\beta_*(t)},\sigma} $ (\cref{lem: approx 3}); 
	\item finally, we show that we can undo the step approximation of $ \beta(t) $ at the level of the dynamics generated by the approximant Schr\"{o}dinger operators $ \mathcal{K}_{\beta(t), \sigma} $ (\cref{lem: approx 2}).
\end{itemize}
All the results are then collected at the end of the Section to complete the
proof of \cref{pro:approx}. We remark that, as it is customary in the
  investigations of such questions, all the steps above are verified on a dense subset of the Hilbert space, which is given in our case by functions belonging to 
\beq
	\label{eq: domainD}
	\mathcal{D} : = \lf\{ \psi \in C^{\infty}_0(\R^3) \: \big| \: \psi(\xv) = \OO(|\xv|^{a}), \mbox{ as } |\xv| \to 0 \ri\},	\qquad \mbox{for some } a > \tx\frac{5}{2},
\eeq
which is obviously a dense set for both $  \mathcal{K}_{\beta, \sigma} $ and $ \mathcal{H}_{\beta} $, for any $ \beta \in \R $ and $ \sigma > 0 $.
Each result is then extended to the whole Hilbert space by density and unitarity of the propagators. In the following we do not track the dependence of the constants appearing in the statements on, {\it e.g.}, $ \beta $ or time $ t $, whenever the constant is finite for parameters varying on bounded intervals, as it is always the case in our setting.

Let us then fix the {initial time $ s \in \R $. Then, for any $ t \in \R $, with $ t > s $, we} divide {the interval $ [s,t] $} into $ n \in \mathbb{N} $ smaller intervals
\begin{equation}
  \label{eq:tj}
  I_j : = \left[ t_j, t_{j+1} \right), \qquad t_j : = s + \frac{j (t-s)}{n},
\end{equation}
with $ j = 0, \ldots, n-1 $. Next, we define $ \beta_{*} \in L^{\infty}(\mathbb{R}) $ as the step function
\begin{equation}
  \label{eq:betan}
  \beta_{*}(t) : = \beta(t_j),	\qquad \mbox{for } t \in I_j.
\end{equation}
Thanks to \cite[Thm.\ 2]{sayyaf1984} already mentioned above, the dynamics $
\mathcal{V}_n(t,s) $ generated by $ \mathcal{H}_{\beta_{*}(t)} $ exists as a two-parameter unitary group and, if
the initial state $ \psi_s $ at time $ s $ belongs to $ H^2(\mathbb{R}^3 \setminus \{0 \}) $, it can be represented as
in~\eqref{eq:ansatz}, \emph{i.e.},
\begin{equation}
  \label{eq:ansatz approx}
  \left( \mathcal{V}_{n}(t,s) \psi_s \right)(\mathbf{x}) = \left( U_0(t-s) \psi_s \right)(\mathbf{x}) + i \int_s^t \mathrm{d} \tau \: U_0(t - \tau; \mathbf{x}) q_n(\tau),
\end{equation}
where the charge $ q_n $ now solves the Volterra equation
\begin{equation}
  \label{eq:qapprox eq}
  q_n(t) + 4 \sqrt{\pi i} \int_s^t \mathrm{d} \tau \: \frac{\beta_{*}(\tau) q_n(\tau)}{\sqrt{t - \tau}} = 4 \sqrt{\pi i} \int_s^t \mathrm{d} \tau \: \frac{1}{\sqrt{t - \tau}} \left(U_0(\tau - s) \psi_s \right)(0).
\end{equation}
Note that $ \mathcal{V}_{n}(t,s) $ can be equivalently represented as
\begin{equation}
  \label{eq:uapprox equiv}
  \mathcal{V}_{n}(t,s) = e^{- i\mathcal{H}_{\beta_n}(t - t_{n-1})} e^{- i\mathcal{H}_{\beta_{n-1}}(t_{n-1} - t_{n-2})} \cdots e^{- i\mathcal{H}_{\beta_{0}}(t_1 - s)}.
\end{equation}

\begin{proposition}[Step function approximation of $ \mathcal{U}_{\mathrm{eff}}(t,s) $]
  \label{lem: approx 1}
  \mbox{}	\\
  Let $ \beta(t) \in C^1(\mathbb{R}) $. Then, for any $ s, t \in \mathbb{R} $,
  \begin{equation}
    \label{eq: approx}
    \mathcal{V}_{n}(t,s) \xrightarrow[n \to + \infty]{\mathrm{s}} \mathcal{U}_{\mathrm{eff}}(t,s).
  \end{equation}
\end{proposition}
	
\begin{proof}
  	The idea is to prove the result for a dense subset of initial states $ \psi_s $ and then extend it to the rest of the Hilbert space by density. Let us thus assume that $ \psi_s \in C^{\infty}_0(\mathbb{R}^3\setminus\{0\}) $, in which case the representation \eqref{eq:ansatz approx} holds true with $ q_n \in L^{\infty}_{\mathrm{loc}}(\mathbb{R}) $ solving \eqref{eq:qapprox eq}, as proven in \cite{sayyaf1984}. {Furthermore, the independence on $ n $ of the forcing term (term on the r.h.s. of \eqref{eq:qapprox eq}) and the uniform boundedness of $ \beta_{*} $ also imply that the $ L^{\infty} $ norm of $ q_n(t) $ on any finite interval is bounded uniformly in $ n \in \N $ (see also the general theory of Volterra integral equations in \cite{miller1971}).}
		
  By bootstrap, it is easy to see that, if $ q_n \in L^{\infty}_{\mathrm{loc}}(\mathbb{R}) $, then also $ q_n \in W^{1,1}_{\mathrm{loc}}(\mathbb{R}) $. Indeed, the forcing term can be easily seen to be differentiable, under the assumptions made: thanks to the properties of the Abel $1/2$-operator {\cite{GV}}, one has
  \begin{displaymath}
    \frac{\mathrm{d}}{\mathrm{d} t} \int_s^t \mathrm{d} \tau \: \frac{1}{\sqrt{t - \tau}} \left(U_0(\tau - s) \psi_s \right)(0) =  \int_s^t \mathrm{d} \tau \: \frac{1}{\sqrt{t - \tau}} \frac{\mathrm{d}}{\mathrm{d} \tau} \left(U_0(\tau - s) \psi_s \right)(0),
  \end{displaymath}
  and
  \begin{displaymath}
    \bigg| \frac{\mathrm{d}}{\mathrm{d} \tau} \left(U_0(\tau - s) \psi_s \right)(0) \bigg| = \frac{1}{(2\pi)^{3/2}} \bigg| \int_{\mathbb{R}^3} \mathrm{d} \mathbf{k} \: k^2 e^{-i k^2(\tau - s)} \widehat{\psi_s}(\mathbf{k}) \bigg|
    \leqslant C \int_{\mathbb{R}^3} \mathrm{d} \mathbf{k} \: k^2 \left| \widehat{\psi_s}(\mathbf{k}) \right| \leqslant C,
  \end{displaymath}
  so that {the derivative of the forcing term is bounded on finite intervals. Note that, as a direct consequence, the $ W^{1,1} $ norm of $ q_n $ on finite intervals is bounded uniformly in $ n \in \N $.}
  
  Next, we prove that the sequence $ q_n $ pointwise converges to $ q $, as $ n \to + \infty $: taking the difference between the equations \eqref{eq:charge eq} and \eqref{eq:qapprox eq} and
  setting $ \chi_n : = q - q_n $ for short, we obtain
  	\begin{multline*}
    		\chi_n(t) = 4 \sqrt{\pi i} \int_s^t \mathrm{d} \tau \: \frac{\beta_{*}(\tau) q_n(\tau) - \beta(\tau) q(\tau)}{\sqrt{t - \tau}} \\
    		= 4 \sqrt{\pi i} \int_s^t \mathrm{d} \tau \: \frac{\left(\beta_{*}(\tau) - \beta(\tau) \right) q_n(\tau)}{\sqrt{t - \tau}} - 4 \sqrt{\pi i} \int_s^t \mathrm{d} \tau \: \frac{\beta(\tau) \chi_n(\tau)}{\sqrt{t - \tau}} = : F_n(t) - 4 \sqrt{\pi i} \int_s^t \mathrm{d} \tau \: \frac{\beta(\tau) \chi_n(\tau)}{\sqrt{t - \tau}}\; .
  	\end{multline*}
  	It follows that $ \chi_n \in C(\mathbb{R}) $ (recall that both $ q $ and $ q_n $ belong to $   W^{1,1}_{\mathrm{loc}}(\mathbb{R}) $) is a solution of the Volterra equation
  	\begin{equation}
   		\label{eq:charge eq diff}
    		\chi_n(t) + 4 \sqrt{\pi i} \int_s^t \mathrm{d} \tau \: \frac{\beta(\tau)  \chi_n(\tau)}{\sqrt{t - \tau}} = F_n(t).
  \end{equation}	
  	Furthermore,
  	\begin{equation}
    		F_n(t) \xrightarrow[n \to + \infty]{} 0,
  	\end{equation}
  	pointwise, since
  	\begin{displaymath}
    		\left| F_n(t) \right| \leqslant C \sup_{\tau \in [s,t]} \left| q_n(\tau) \right| \: \int_s^t \mathrm{d} \tau \: \frac{\left| \beta_{*}(\tau) - \beta(\tau) \right|}{\sqrt{t - \tau}}  \xrightarrow[n \to + \infty]{} 0,
	\end{displaymath}
  	because
  	\begin{equation}
   		\beta_{*}  \xrightarrow[n \to + \infty]{L^p}  \beta,	\qquad		\mbox{for any } 1 \leqslant p < + \infty,
  	\end{equation}
  	and $ q_n $ is uniformly bounded. Hence, it just remains to observe that the l.h.s.\ of \eqref{eq:charge eq diff} is invertible for $ t $ small enough, since
  	\begin{equation}
    		\label{eq:est 1}
    		\left| \int_s^t \mathrm{d} \tau \: \frac{\beta(\tau)  \chi_n(\tau)}{\sqrt{t - \tau}} \right| \leqslant C \sup_{\tau \in [s,t]} \left| \chi_n(\tau) \right| \sqrt{t - s} \xrightarrow[s \to t]{} 0,
  	\end{equation}
  	which implies that the l.h.s.\ of \eqref{eq:charge eq diff} can be rewritten as $ \left( 1 + J \right) {\chi_n} $, with $ J: C_{\mathrm{loc}}(\mathbb{R}) \to C_{\mathrm{loc}}(\mathbb{R}) $ and such that $ \left\| J \right\|_{C \to C} < 1  $, if $ t -s $ is small enough. Hence, we get $ \chi_n = ( 1 + J)^{-1} F_n $, with $ F_n \to 0 $ and $  ( 1 +J)^{-1} $ bounded from $ C_{\mathrm{loc}} (\mathbb{R}) $ to $ C_{\mathrm{loc}}(\mathbb{R}) $. We conclude that $ \chi_n \to  0 $ in $ C_{\mathrm{loc}}(\mathbb{R}) $, if $ t - s $ is small enough.		
  
  The argument above guarantees that $ \chi_n \to 0 $ in an interval $ [s,t_1] $, where $ t_1 $ is independent of $ n $ and determined only by the condition that the r.h.s.\ of \eqref{eq:est 1} is strictly smaller  than $ \left\| \chi_n \right\|_{\infty} $. As such, $ t_1 $ depends only on $ s $ and $ \left\| \beta \right\|_{\infty} $. Hence it is not difficult to see that, picking $ n $ large enough and exploiting the convergence to $ 0 $ of $ \chi_n $ in $ [s,t_1] $, it is possible to reproduce the argument in the interval $ [s, 2t_1] $. A  bootstrap then gives pointwise convergence to $ 0 $ of $ \chi_n $ in any finite interval $ [s,t] $. In the last bootstrap, it is crucial that the considered interval is relatively compact.
  
  {Next, exploiting that the $ W^{1,1} $ and $ L^{\infty} $ norms of $ q_n $ are bounded uniformly in $ n $, we use the charge equation \eqref{eq:qapprox eq} once more, to show that the convergence of $ q_n $ to $ q $ is actually uniform on bounded intervals. Indeed, we claim that the sequence of functions $ \lf\{ q_n \ri\}_{n \in \N} $ is uniformly equicontinuous: for any $ 0 < t' < t $, we get}
 	\beq
 		\label{eq: approx 1 proof 1}
 		{q_n(t) - q_n(t') = - 4 \sqrt{\pi i} \int_{t'}^t \mathrm{d} \tau \: \frac{\beta_{*}(\tau) q_n(\tau)}{\sqrt{t - \tau}} + 4 \sqrt{\pi i} \int_{t'}^t \mathrm{d} \tau \: \frac{1}{\sqrt{t - \tau}} \left(U_0(\tau - s) \psi_s \right)(0).}
	\eeq
	{Now, the second term on the r.h.s. is obviously uniformly continuous, since we have shown that the function is actually $ C^1 $. Hence, we have only to consider the first term on the r.h.s. of \eqref{eq: approx 1 proof 1}. However, by the uniform boundedness of both $ \beta_{*} $ and $ q_n $, we can bound}
	\bdm
		{\lf| \int_{t'}^t \mathrm{d} \tau \: \frac{\beta_{*}(\tau) q_n(\tau)}{\sqrt{t - \tau}} \ri| \leq C  \sqrt{t - t'},}
	\edm
	{which is independent of $ n \in \N $, so implying uniform equicontinuity of the sequence. A direct application of the Ascoli-Arzel\'{a} theorem then yields uniform convergence of $ q_n $ to  $ q $ on finite intervals.}
		
  Moreover, using \eqref{eq:ansatz} and \eqref{eq:ansatz approx}, we get
  \begin{equation}
    \label{eq:unit diff}
    \left\| \left( \mathcal{U}_{\mathrm{eff}}(t,s) - \mathcal{V}_{n}(t,s) \right) \psi_s \right\|_2 \leqslant C \sup_{\tau \in [s,t]} \left| q(\tau) - q_n(\tau) \right|,
  \end{equation}
  and therefore the strong convergence of the propagators follows from the uniform convergence of the
  charges, for $ \psi_s \in C_0^{\infty}(\mathbb{R}^3\setminus\{0\}) $. 
\end{proof}

We recall that for any time-independent $ \beta \in \mathbb{R} $, the family of operators $ \mathcal{K}_{\beta,\sigma} $ defined in \eqref{eq:happ} and given by $\mathcal{K}_{\beta,\sigma} = - \Delta + \mathcal{W}_{{\beta},\sigma}(\mathbf{x}) $, where $ - \Delta + \mathcal{W} $ is assumed to have a zero-energy resonance, $ \mathcal{W}_{{\beta},\sigma}(\mathbf{x}) = \nu(\sigma) \sigma^{-2} \mathcal{W}(\mathbf{x}/\sigma) $ and $ \nu(\sigma) = 1 +
\beta \sigma + o(\sigma) $, converges in norm resolvent sense to $ \mathcal{H}_{\beta} $, as $ \sigma \to 0 $ \cite[Thm. 1.2.5]{albeverio2005sqm}.  A by-product of this result is the strong convergence of the
corresponding unitary operators obtained by the piecewise approximations of $ \beta(t) $: in the following we denote by $ \mathcal{V}_{n,\sigma}(t,s) $ the two-parameter unitary group associated with the time evolution generated by $ \mathcal{K}_{\beta_*(t), \sigma} $, {\it i.e.}, such that
\beq
	i \partial_t \mathcal{V}_{n,\sigma}(t,s) = \mathcal{K}_{\beta_*(t), \sigma} \mathcal{V}_{n,\sigma}(t,s),
\eeq
and study the strong limit $ \sigma \to 0 $ of such operators. Note that, by \eqref{eq:betan}, one obviously gets that
\begin{equation}
  	\label{eq: uapproxs}
  	\mathcal{V}_{n,\sigma}(t,s) : = e^{- i\mathcal{K}_{\beta_{n-1},\sigma}(t - t_{n-1})} e^{- i\mathcal{K}_{\beta_{n-2},\sigma}(t_{n-1} - t_{n-2})} \cdots e^{- i\mathcal{K}_{\beta_{0},\sigma}(t_1 - s)}.
\end{equation}
 
Before doing that, we have however to formulate a technical lemma concerning the approximation
of functions evolved with $ e^{-i \mathcal{H}_{\beta} (t-s)} $ or, more in general, with $ \mathcal{V}_n(t,s) $, which is simply a consequence of the density of $ \mathcal{D} $ in $ L^2(\R^3) $. We state a quantitative bound however for concreteness.
 
\begin{lemma}
  \label{lem: approx smooth}
  \mbox{}	\\
  Let $ \chi \in L^2(\R^3) $, such that $ \left\| \chi \right\|_2 = 1 $. Let also $ s < t \in \R $ varying in a compact set. Then, there exists a sequence of functions $ \lf\{ \chi_m \ri\}_{m \in \N} \subset \mathcal{D} $,  and a finite constant $ C > 0 $ independent of $ n $, such that $ \left\| \chi_m \right\|_2 \leqslant \left\| \chi \right\|_2 $ and
  \begin{equation}
    \label{eq: approx smooth}
    \left\| \mathcal{V}_n(t,s)  \chi - \chi_m \right\|_2 \leqslant C m^{-2/3}.
  \end{equation}
\end{lemma}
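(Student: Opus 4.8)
The plan is to use the explicit representation~\eqref{eq:ansatz approx} of the evolution, to excise \emph{by hand} the singularity at the origin and the spatial tail of $\mathcal{V}_n(t,s)\chi$, and then to regularise the result by mollification. Since $\chi\in\mathcal{D}$ vanishes at the origin (and $C^\infty_0(\R^3)$-functions vanishing there to order $a>\tfrac52$ lie in $H^2(\R^3\setminus\{0\})$), the ansatz applies and we may write $\psi_t:=\mathcal{V}_n(t,s)\chi=\varphi_t+\xi_t$, with $\varphi_t:=U_0(t-s)\chi$ a rapidly decreasing function of norm $\|\varphi_t\|_2=\|\chi\|_2$, and charge part $\xi_t(\xv):=i\int_s^t\diff\tau\,U_0(t-\tau;\xv)\,q_n(\tau)$, where $q_n$ solves~\eqref{eq:qapprox eq}. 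We repeatedly use, as in the proof of \cref{lem: approx 1} (note $\widehat{\chi}$ is rapidly decreasing since $\chi\in C^\infty_0(\R^3)$), that $q_n(s)=0$, that $q_n\in W^{1,1}_{\mathrm{loc}}(\R)$, and that $\sup_{n\in\N}\bigl(\sup_{[s,t]}|q_n|+\|q_n'\|_{L^1([s,t])}\bigr)$ is finite and stays bounded for $t,s$ in a fixed compact set.

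The core step is a pointwise analysis of $\xi_t$. Inserting $U_0(t-\tau;\xv)=(4\pi i(t-\tau))^{-3/2}e^{ix^2/(4(t-\tau))}$, substituting $r=t-\tau$, and integrating by parts once via $e^{ix^2/(4r)}=\tfrac{4r^2}{ix^2}\,\partial_r e^{ix^2/(4r)}$, the boundary terms vanish (at $r=0$ because of the surviving factor $r^{1/2}$, at $r=t-s$ because $q_n(s)=0$) and we are left with the \emph{absolutely} convergent integral
\[
\xi_t(\xv)=-\frac{4}{(4\pi i)^{3/2}\,x^2}\int_0^{t-s}\diff r\;e^{ix^2/(4r)}\,\partial_r\bigl(r^{1/2}q_n(t-r)\bigr),
\]
whose integrand is bounded by $\tfrac12 r^{-1/2}|q_n(t-r)|+r^{1/2}|q_n'(t-r)|$; this already yields $|\xi_t(\xv)|\leq C(1+t-s)\,x^{-2}$ for all $\xv\neq0$. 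Since moreover $\int_0^{t-s}\partial_r\bigl(r^{1/2}q_n(t-r)\bigr)\diff r=r^{1/2}q_n(t-r)\big|_0^{t-s}=0$, one may replace $e^{ix^2/(4r)}$ by $e^{ix^2/(4r)}-1$ in the last display and, splitting the $r$-integral at $r=x^2$ and using $|e^{ix^2/(4r)}-1|\leq\min(2,x^2/(4r))$, improve this to $|\xi_t(\xv)|\leq C(1+t-s)\,x^{-1}$ for $|\xv|\leq1$; here $C$ depends only on $\chi$ and on the uniform-in-$n$ bounds on $q_n$ recalled above. Hence, for $0<\delta\leq1\leq R$,
\[
\|\xi_t\|_{L^2(\{|\xv|<\delta\})}\leq C(1+t-s)\sqrt{\delta},\qquad \|\xi_t\|_{L^2(\{|\xv|>R\})}\leq C(1+t-s)R^{-1/2},
\]
and the corresponding bounds for $\varphi_t$ are much better, so that $\|\psi_t\|_{L^2(\{|\xv|<\delta\})}+\|\psi_t\|_{L^2(\{|\xv|>R\})}\leq C(1+t-s)\bigl(\sqrt\delta+R^{-1/2}\bigr)$. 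Separately, on the Fourier side $\widehat{\xi_t}(\kv)=\tfrac{i}{(2\pi)^{3/2}}\int_0^{t-s}q_n(t-r)e^{-ik^2 r}\diff r$ has modulus $\leq C(1+t-s)\sup_{[s,t]}|q_n|$ everywhere and, after one integration by parts in $r$ (again $q_n(s)=0$), $\leq C\langle k\rangle^{-2}\bigl(\sup_{[s,t]}|q_n|+\|q_n'\|_{L^1}\bigr)$ for $|\kv|\geq1$; therefore $\psi_t\in H^{\vartheta}(\R^3)$ for every $\vartheta\in(0,\tfrac12)$, with $\|\psi_t\|_{H^{\vartheta}}\leq C(1+t-s)$ uniformly in $n$.

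It remains to assemble the approximant. Fix $\eta\in C^\infty_0(\R^3)$ with $0\leq\eta\leq1$, $\eta\equiv1$ on $\{|\xv|\leq1\}$, $\eta\equiv0$ outside $\{|\xv|\leq2\}$; a mollifier $\rho_\epsilon\geq0$ with $\int\rho_\epsilon=1$ supported in $\{|\xv|\leq\epsilon\}$; and an exponent $\vartheta\in(\tfrac25,\tfrac12)$. With $\delta_m:=m^{-4/3}$, $R_m:=m^{4/3}$, $\epsilon_m:=m^{-3}$, put $g_m:=\psi_t\,(1-\eta(\xv/\delta_m))\,\eta(\xv/R_m)$ and $\chi_m:=\rho_{\epsilon_m}*g_m$. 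As $g_m$ is supported in $\{\delta_m\leq|\xv|\leq2R_m\}$ and $\epsilon_m<\delta_m$ (for $m$ large), the function $\chi_m$ is $C^\infty$ with compact support vanishing identically near the origin, i.e.\ $\chi_m\in\mathcal{D}$; and since $|g_m|\leq|\psi_t|$ pointwise and mollification is an $L^2$-contraction, $\|\chi_m\|_2\leq\|g_m\|_2\leq\|\psi_t\|_2=\|\chi\|_2$ by unitarity of $\mathcal{V}_n(t,s)$. Finally, from
\[
\|\psi_t-\chi_m\|_2\leq\|\psi_t-g_m\|_2+\|g_m-\rho_{\epsilon_m}*g_m\|_2,
\]
the previous step gives $\|\psi_t-g_m\|_2\leq\|\psi_t\|_{L^2(\{|\xv|\leq2\delta_m\})}+\|\psi_t\|_{L^2(\{|\xv|>R_m\})}\leq C(1+t-s)(\sqrt{\delta_m}+R_m^{-1/2})=C(1+t-s)m^{-2/3}$, while the standard estimates $\|g-\rho_\epsilon*g\|_2\leq C\epsilon^{\vartheta}\|g\|_{H^{\vartheta}}$ and $\|\eta(\cdot/\delta)h\|_{H^{\vartheta}}\leq C\delta^{-\vartheta}\|h\|_{H^{\vartheta}}$ (for $\vartheta\in(0,1)$, $0<\delta\leq1$), together with the $H^{\vartheta}$-bound above, give $\|g_m-\rho_{\epsilon_m}*g_m\|_2\leq C\epsilon_m^{\vartheta}\delta_m^{-\vartheta}\|\psi_t\|_{H^{\vartheta}}\leq C(1+t-s)m^{-5\vartheta/3}\leq C(1+t-s)m^{-2/3}$. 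As $t,s$ range in a fixed compact set, every polynomial in $1+(t-s)$ produced along the way is absorbed into the displayed factor $1+t-s$, and $C$ depends only on $\chi$; this proves~\eqref{eq: approx smooth}. The hard part is the pointwise control of the oscillatory charge integral $\xi_t$ in the second paragraph — the $|\xv|^{-1}$ behaviour near the origin, the $|\xv|^{-2}$ decay at infinity, and the $H^{\vartheta}$ bound, all uniform in $n$; once these are in hand the conclusion follows by a routine cut-off-and-mollify argument.
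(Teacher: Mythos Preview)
Your proof is correct and takes a genuinely different route from the paper's. The paper exploits the operator-domain decomposition $\mathcal{V}_n(t,s)\chi=\phi_t+\tfrac{q_n(t)e^{-x}}{4\pi x}$ with $\phi_t\in H^2(\R^3)$, together with the short-time charge bound $\sup_{[s,t]}|q_n|\leq C\sqrt{t-s}$, and simply sets $\chi_m(\xv):=e^{-1/(m|\xv|)}\,\mathcal{V}_n(t,s)\chi(\xv)$; the estimate \eqref{eq: approx smooth} then follows by splitting the $L^2$ norm at radius $R$ and optimising $R=m^{-2/3}$. Your argument instead stays with the ansatz decomposition $\psi_t=U_0(t-s)\chi+\xi_t$, extracts pointwise and $H^{\vartheta}$ control on the oscillatory charge integral $\xi_t$ via integration by parts in $r$, and builds $\chi_m$ by a cut-off-and-mollify procedure with parameters tuned to hit the same rate $m^{-2/3}$.

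Two remarks on what each approach buys. First, your $\chi_m$ is honestly an element of $\mathcal{D}$ as defined in \eqref{eq: domainD}: it is $C^\infty$, compactly supported, and vanishes identically near the origin. The paper's choice $e^{-1/(m|\xv|)}\psi_t$ vanishes to infinite order at the origin and is $H^2$, but it is neither compactly supported nor $C^\infty$ (since $\phi_t$ is only $H^2$), so strictly speaking it lies only in $H^2(\R^3)$ with $\chi_m(0)=0$ rather than in $\mathcal{D}$; this is harmless for the downstream use in \cref{lem: strong weak,lem: approx 3}, but your construction is cleaner in this respect. Second, the paper's argument is shorter because it leans on the known $H^2$ structure of the domain and the $\sqrt{t-s}$ smallness of the charge, whereas you replace that input by direct oscillatory-integral analysis (the $x^{-1}$ bound near the origin, the $x^{-2}$ decay at infinity, and the $H^{\vartheta}$ bound), which is more labour but more self-contained. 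Note also that once you observe $\psi_t$ is smooth away from the origin (which follows from your own IBP representation, differentiated in $\xv$), your $g_m$ is already in $\mathcal{D}$ and the mollification step becomes redundant; keeping it does no harm.
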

	
\begin{proof}
  	The first simple observation is that, by direct inspection of the charge equation \eqref{eq:ansatz approx}, which reduces to \eqref{eq:charge eq} in each interval $ [t_j,t_{j+1}] $ with $ \beta = \beta(t_j) $ and initial datum $ \psi(t_j) $, one gets that the charge $ q(t) $ is uniformly bounded on compact sets irrespective of $ n \in \N $.
		
  	Then, it is sufficient to set for $ m \in \mathbb{N}_0 $
  	\begin{equation}
    		\label{eqp: chim}
    		\chi_m(\mathbf{x}) : = e^{-\frac{1}{m x}} \left( \mathcal{V}_n(t,s) \chi \right) (\mathbf{x}).
  \end{equation}
  	We claim that such a sequence satisfies \eqref{eq: approx smooth}. By the characterization of the
  operator domain, we know that $ \mathcal{V}_{n}(\tau,s) \chi \in \mathscr{D}(\mathcal{H}_{\beta_*(\tau)}) $, {\it i.e.}, we can decompose for any $ \tau \in [s,t] $
  	\begin{equation}
    		\left( \mathcal{V}_n(\tau,s) \chi \right)(\mathbf{x}) = \phi_{{\tau}}(\mathbf{x}) + \frac{q(\tau) e^{-x}}{4\pi x},
  	\end{equation}
  	where $ q(\tau) $ is a solution of \eqref{eq:ansatz approx} and $ \phi_{\tau} \in H^2(\mathbb{R}^3) $. Note that we have used a slightly different domain decomposition than the one given in \eqref{eq:eff ham}: starting from the domain given there, one simply recovers the expression above adding and subtracting $ \frac{q(\tau) e^{-x}}{4\pi x} $ and observing that $ \frac{e^{-x}}{x} \in L^2(\mathbb{R}^3) $. Hence,
  	\begin{multline}
    		\left\| \mathcal{V}_n(t,s) \chi - \chi_m \right\|_2^2 = \int_{\mathcal{B}_R} \mathrm{d} \mathbf{x} \: \left(1 - e^{-\frac{1}{m x}} \right)^2 \left| \phi_{t}(\mathbf{x}) + \frac{q(t) e^{-x}}{4\pi x} \right|^2 	\\
    + \int_{\mathcal{B}_R^c} \mathrm{d} \mathbf{x} \: \left(1 - e^{-\frac{1}{m x}} \right)^2 \left|
      \phi_{t}(\mathbf{x}) + \frac{q(t) e^{-x}}{4\pi x} \right|^2
  	\end{multline}
  	where $ R > 0 $ is a parameter to be chosen later.		
  	Next, we estimate
 	 \begin{multline}
    		\int_{\mathcal{B}_R} \mathrm{d} \mathbf{x} \: \left(1 - e^{-\frac{1}{m x}} \right)^2 \left| \phi_{t}(\mathbf{x}) + \frac{q(t) e^{-x}}{4\pi x} \right|^2 \leqslant C m^{-3} \int_{\mathcal{B}_{m R}} \mathrm{d} \mathbf{x} \left[ \left| \phi_{t}(\mathbf{x}) \right|^2 + \frac{ m^2|q(t)|^2}{x^2} \right] \\
    		\leqslant C \left( R^3 + R \right),
  	\end{multline}
  	by \eqref{eqp: charge est 2} and the fact that $ \phi_{t} \in H^2(\mathbb{R}^3) $. On the
  	other hand,
  	\begin{multline}
    		\int_{\mathcal{B}_R^c} \mathrm{d} \mathbf{x} \: \left(1 - e^{-\frac{1}{m x}} \right)^2 \left| \phi_{t}(\mathbf{x}) + \frac{q(t) e^{-x}}{4\pi x} \right|^2 \leqslant \frac{C}{m^{2}} \int_{\mathcal{B}_R^c} \mathrm{d} \mathbf{x} \: \frac{1}{x^2} \left[ \left| \phi_{t}(\mathbf{x}) \right|^2 + \frac{|q(t)|^2}{x^2} \right] \\
    \leqslant C m^{-2} \left( R^{-2} + R^{-1} \right).
  	\end{multline}
		
  	Altogether the above estimates yield
  	\beq
    		\left\| \mathcal{V}_n(\tau,s)  \chi - \chi_m \right\|_2^2 \leqslant C \left[ R^3 + m^{-2} R^{-2} + ( R + m^{-2} R^{-1} \right] 
    		\leqslant C m^{-2/3}
 	\eeq
  	after an optimization over $ R $, {\it i.e.}, taking $ R = m^{-2/3} $.
\end{proof}
	
Before discussing the convergence of $ \mathcal{V}_{n,\sigma} $ to $ \mathcal{V}_n $, we need one more technical
result: we are going to show that a strong estimate over a dense set of the difference between the unitary
evolutions generated by $ \mathcal{H}_{\beta} $ and $ \mathcal{K}_{\beta,\sigma} $ is sufficient to control the
difference between the unitaries $ \mathcal{V}_{n}(t,s) $ and $ \mathcal{V}_{n,\sigma}(t,s) $ in strong sense. The result is going to be used in next \cref{lem: approx 3}, where we are going to show that the bound \eqref{eq: hp lemma} holds true on a dense set with some explicit error $ \delta $. The result below then allows to translate the bound \eqref{eq: hp lemma} into an estimate of the difference between the unitary groups on the same dense set.
	
\begin{lemma}
  \label{lem: strong weak}
  \mbox{}	\\
  Let $ t_0 = s  <  t_1  < \cdots  <  t_n = t $ be a partition of the bounded interval $ [s,t] $ as in
  \eqref{eq:tj}. Let $ \psi \in \mathcal{D} $, with
  $ \left\| \psi \right\|_2 \leqslant 1 $. Assume
  that there exists $ \delta>0 $ such that, for all $\phi\in \mathcal{D}$ and for all finite $ \tau $,
  \begin{equation}
    \label{eq: hp lemma}
    \left\| \left( e^{- i \mathcal{H}_{\beta} \tau} - e^{- i \mathcal{K}_{\beta,\sigma} \tau} \right) \phi \right\|_2 \leqslant \delta\lVert \phi  \rVert_2^{}\; .
  \end{equation}
  Then,
  \begin{equation}
    \label{eq: strong weak}
    \lf\| \left( \mathcal{V}_{n}(t,s) - \mathcal{V}_{n,\sigma}(t,s) \right) \psi \ri\|_2 \leqslant  n \delta + \mathcal{O}(n^{-1}).
  \end{equation}
\end{lemma}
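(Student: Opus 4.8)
The plan is to run a telescoping argument over the $ n $ sub-intervals of \eqref{eq:tj}, feeding in the smooth approximation of \cref{lem: approx smooth} to get around the fact that the intermediate states are not in $ \mathcal{D} $. Write $ \tau_n := (t-s)/n $, $ A_j := e^{-i\mathcal{H}_{\beta(t_j)}\tau_n} $ and $ B_j := e^{-i\mathcal{K}_{\beta(t_j),\sigma}\tau_n} $, so that by \eqref{eq:uapprox equiv} one has $ \mathcal{V}_n(t,s) = A_{n-1}\cdots A_0 $ and $ \mathcal{V}_{n,\sigma}(t,s) = B_{n-1}\cdots B_0 $. The usual identity for a difference of products then gives
\begin{equation*}
  \mathcal{V}_n(t,s) - \mathcal{V}_{n,\sigma}(t,s) = \sum_{j=0}^{n-1} \left( B_{n-1}\cdots B_{j+1} \right)\left( A_j - B_j \right)\left( A_{j-1}\cdots A_0 \right),
\end{equation*}
with the convention that empty products are the identity. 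Since the $ B_k $ are unitary and, again by \eqref{eq:uapprox equiv}, $ A_{j-1}\cdots A_0\,\psi = \mathcal{V}_n(t_j,s)\,\psi $, the triangle inequality reduces the claim to the bound $ \sum_{j=0}^{n-1} \left\| \left( A_j - B_j \right) \mathcal{V}_n(t_j,s)\psi \right\|_2 \leqslant n\delta + \mathcal{O}(n^{-1}) $. It is essential here to have ordered the telescoping so that it is the point-interaction propagator $ \mathcal{V}_n $ — and not $ \mathcal{V}_{n,\sigma} $ — that acts first on $ \psi $, since \cref{lem: approx smooth} is available only for $ \mathcal{V}_n $.

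The obstruction to concluding directly is that $ \mathcal{V}_n(t_j,s)\psi \notin \mathcal{D} $: it develops a singular part proportional to the charge $ q(t_j) $, so the hypothesis \eqref{eq: hp lemma} — which only controls vectors in $ \mathcal{D} $ — does not apply to it. This is exactly what \cref{lem: approx smooth} is for: for each $ j $ it furnishes $ \chi_m^{(j)} \in \mathcal{D} $ with $ \|\chi_m^{(j)}\|_2 \leqslant \|\psi\|_2 \leqslant 1 $ and, after a trivial normalization of $ \psi $, $ \left\| \mathcal{V}_n(t_j,s)\psi - \chi_m^{(j)} \right\|_2 \leqslant C m^{-2/3}(1 + t_j - s) \leqslant C m^{-2/3}(1 + t - s) $, the constant $ C $ being independent of $ j, n, s, t $. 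Inserting $ \chi_m^{(j)} $, using unitarity of $ A_j $ and $ B_j $ for the remainder term, and applying \eqref{eq: hp lemma} at $ \tau = \tau_n $ (with $ \delta $ chosen uniform over the finitely many values $ \beta(t_j) \in \beta([s,t]) $, and, if one wishes, over $ \tau \in [0,t-s] $), I would estimate, for each $ j $,
\begin{equation*}
  \left\| \left( A_j - B_j \right) \mathcal{V}_n(t_j,s)\psi \right\|_2 \leqslant \left\| \left( A_j - B_j \right)\chi_m^{(j)} \right\|_2 + 2\left\| \mathcal{V}_n(t_j,s)\psi - \chi_m^{(j)} \right\|_2 \leqslant \delta + 2C m^{-2/3}(1 + t - s).
\end{equation*}

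Summing over $ j = 0,\dots,n-1 $ yields $ \left\| \left( \mathcal{V}_n(t,s) - \mathcal{V}_{n,\sigma}(t,s) \right)\psi \right\|_2 \leqslant n\delta + 2Cn\,m^{-2/3}(1 + t - s) $, and it only remains to balance the two error sources by choosing $ m $ in terms of $ n $: taking $ m = \lceil n^3 \rceil $ makes $ n\,m^{-2/3} \leqslant n^{-1} $, so the second term is $ \mathcal{O}(n^{-1}) $ with an implied constant depending only on $ t - s $, which is fixed. This proves \eqref{eq: strong weak}. The only genuinely delicate point is this bookkeeping: each of the $ n $ intervals costs $ \delta $ from \eqref{eq: hp lemma}, but the approximation error accumulates like $ n\,m^{-2/3} $, and the choice $ m \sim n^3 $ is what keeps the latter $ \mathcal{O}(n^{-1}) $ while leaving the former as the advertised $ n\delta $.
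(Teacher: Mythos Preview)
Your proof is correct and follows essentially the same approach as the paper: both telescope over the $n$ sub-intervals (you via an explicit sum, the paper via an equivalent iterative/recursive formulation), invoke \cref{lem: approx smooth} at each step to replace $\mathcal{V}_n(t_j,s)\psi$ by a vector in $\mathcal{D}$, and take $m \sim n^3$ to make the accumulated approximation error $\mathcal{O}(n^{-1})$. Your remark on the ordering of the telescoping so that $\mathcal{V}_n$ rather than $\mathcal{V}_{n,\sigma}$ hits $\psi$ first is exactly the point, and matches the paper's choice.
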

\begin{proof}
  The result is proved iteratively on the quantity
  \begin{displaymath}
    \left\lVert \left(  \mathcal{V}_{n}(t,s) - \mathcal{V}_{n,\sigma}(t,s) \right) \psi  \right\rVert_{2}^{}\;.
  \end{displaymath}
  We have that
  \begin{multline*}
    \left\lVert \left(  \mathcal{V}_{n}(t,s) - \mathcal{V}_{n,\sigma}(t,s) \right) \psi  \right\rVert_{2}^{}\leq \left\lVert \bigl(e^{i\mathcal{H}_{\beta_{n-1}}(t-t_{n-1})}-e^{i\mathcal{K}_{\beta_{n-1},\sigma}(t-t_{n-1})}\bigr)\mathcal{V}_{n}(t_{n-1},s)\psi  \right\rVert_{2}^{}\\ + \left\lVert e^{i\mathcal{K}_{\beta_{n-1},\sigma}(t-t_{n-1})}\bigl(\mathcal{V}_{n}(t_{n-1},s)-V_{n,\sigma}(t_{n-1},s)\bigr)\psi  \right\rVert_{2}^{}\; .
  \end{multline*}  
  Now, we can apply \cref{lem: approx smooth} {with} $ \chi =
  \mathcal{V}_{n}(t_{n-1},s)\psi$, and we get a sequence of functions $ \chi_m \in
  \mathcal{D} $, such that $ \left\| \chi_m \right\|_2 \leqslant \left\|
    \mathcal{V}_{n}(t_{n-1},s)\psi \right\|_2 \leqslant 1 $ and
  \begin{displaymath}
    \mathcal{V}_{n}(t_{n-1},s)\psi  - \chi_m = \mathcal{O}(m^{-2/3})\;,
  \end{displaymath}
  for any $ m \in \mathbb{N}_0 $. In particular, if we take $ m = n^3 $, we obtain
  \begin{equation}
    \mathcal{V}_{n}(t_{n-1},s)\psi  - \chi_m = \mathcal{O}(n^{-2})\;.
  \end{equation}
  Therefore,
  \begin{multline*}
    \left\lVert \left(  \mathcal{V}_{n}(t,s) - \mathcal{V}_{n,\sigma}(t,s) \right) \psi  \right\rVert_{2}^{}\leq \left\lVert \bigl(e^{i\mathcal{H}_{\beta_{n-1}}(t-t_{n-1})}-e^{i\mathcal{K}_{\beta_{n-1},\sigma}(t-t_{n-1})}\bigr)\chi_m  \right\rVert_{2}^{}\\
    + \left\lVert \bigl(\mathcal{V}_{n}(t_{n-1},s)-V_{n,\sigma}(t_{n-1},s)\bigr)\psi  \right\rVert_{2}^{}+ \mathcal{O}(n^{-2})\; .
  \end{multline*}
  Now, using the assumption \eqref{eq: hp lemma}, we get
  \begin{equation*}
    \left\lVert \left(  \mathcal{V}_{n}(t,s) - \mathcal{V}_{n,\sigma}(t,s) \right) \psi  \right\rVert_{2}^{}\leq \left\lVert \bigl(\mathcal{V}_{n}(t_{n-1},s)-V_{n,\sigma}(t_{n-1},s)\bigr)\psi  \right\rVert_{2}^{}+ \delta + \mathcal{O}(n^{-2})\; .
  \end{equation*}
  Iterating such reasoning for the remaining $n-1$ intervals yields
  \begin{equation*}
    \left\lVert \left(  \mathcal{V}_{n}(t,s) - \mathcal{V}_{n,\sigma}(t,s) \right) \psi  \right\rVert_{2}^{}\leq n\delta + \mathcal{O}(n^{-1})\; ,
  \end{equation*}
  thus concluding the proof.
\end{proof}

The technical results proven in \cref{lem: approx smooth,lem: strong weak} are used to
prove the convergence of $ \mathcal{V}_{n,\sigma}(t,s) $ to $ \mathcal{V}_{n}(t,s) $, which in turn is going to
play a key role in the proof of \cref{pro:approx}. We are going to use a known representation formula for the propagator, provided in next Lemma.

\begin{lemma}
	\label{lem: pazy}
	\mbox{}	\\
	For any $ t \in \R $ finite and for any $ \psi \in \mathscr{D}\left(\mathcal{H}_{\beta}\right) \cap \mathscr{D}\left(\mathcal{K}_{\sigma, \beta} \right) $, with $ \lf\| \mathcal{K}_{\sigma, \beta} \psi \ri\| \leq C $,
	\begin{multline}
    		\label{eqp: approx est 2}
    		{\left\| \left[ e^{- i\mathcal{H}_{\beta} t} - e^{-i \mathcal{K}_{\sigma, \beta} t} \right] \psi \right\|_2 =  \left(\frac{k}{\lvert t  \rvert_{}^{}}\right)^k \sup_{\chi \in L^2(\mathbb{R}^3), \left\| \chi \right\|_2 \leqslant 1} \left| \left\langle \chi\left|\left[ \left( \mathcal{H}_{\beta} - \tx\frac{i k}{t} \right)^{-k} -\left(\mathcal{K}_{\sigma, \beta} - \tx\frac{i k}{t} \right)^{-k} \right]\psi\right. \right\rangle_2\right|} \\
    		{+ \lvert t  \rvert_{}^{} o_k(1),}
  	\end{multline}
  	{unformly in $ \sigma \in [0,1] $.}
\end{lemma}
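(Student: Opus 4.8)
The plan is to deduce \eqref{eqp: approx est 2} from a Hille--Yosida / Chernoff-type representation of a unitary group as a strong limit of powers of its resolvent, specialised to self-adjoint generators, keeping a quantitative rate that is uniform in $ \sigma $. The starting point is the elementary algebraic identity, valid for any self-adjoint operator $ A $ on $ L^2(\R^3) $ and any $ t \neq 0 $,
\[
  \lf( \tfrac{k}{it} \ri)^{k} \lf( A - \tfrac{ik}{t} \ri)^{-k} = \lf( 1 + \tfrac{it}{k} A \ri)^{-k},
\]
which makes sense because $ ik/t \notin \R \supseteq \sigma(A) $, with $ \lf\| (A - ik/t)^{-1} \ri\| = |t|/k $, so that $ (k/|t|)^{k} \lf\| (A - ik/t)^{-k} \ri\| \leqslant 1 $. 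Specialising to $ A = \mathcal{H}_{\beta} $ and $ A = \mathcal{K}_{\sigma,\beta} $, and recalling $ \lf\| v \ri\|_2 = \sup_{\|\chi\|_2 \leqslant 1} | \langle \chi | v \rangle_2 | $, the right-hand side of \eqref{eqp: approx est 2} is exactly $ \bigl\| \bigl[ (1 + \tfrac{it}{k}\mathcal{H}_{\beta})^{-k} - (1 + \tfrac{it}{k}\mathcal{K}_{\sigma,\beta})^{-k} \bigr] \psi \bigr\|_2 $, so the statement reduces to estimating the difference between this quantity and $ \bigl\| \bigl[ e^{- i\mathcal{H}_{\beta} t} - e^{-i \mathcal{K}_{\sigma, \beta} t} \bigr] \psi \bigr\|_2 $.

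The heart of the matter is then the quantitative approximation bound: for any self-adjoint $ A $ and any $ \phi \in \mathscr{D}(A) $,
\[
  \lf\| \lf[ \lf( 1 + \tfrac{it}{k} A \ri)^{-k} - e^{-iAt} \ri] \phi \ri\|_2 \leqslant \frac{C\, |t|}{\sqrt{k}}\, \lf\| A \phi \ri\|_2\,.
\]
This follows from the spectral theorem: the square of the left-hand side equals $ \int_{\R} \bigl| (1 + \tfrac{it\lambda}{k})^{-k} - e^{-i\lambda t} \bigr|^2 \, \diff \| E_\lambda \phi \|_2^2 $; expanding $ -k \log(1 + \tfrac{i\theta}{k}) = -i\theta - \tfrac{\theta^2}{2k} + \OO(\theta^3/k^2) $ for $ |\theta| \leqslant \sqrt{k} $ and using the trivial bound $ 2 $ for $ |\theta| > \sqrt{k} $ yields the pointwise estimate $ \bigl| (1 + \tfrac{i\theta}{k})^{-k} - e^{-i\theta} \bigr| \leqslant C \min\{ 1, \theta^2/k \} $, so that (by $ \min\{1,x\}^2 \leqslant x $) the integral is bounded by $ \tfrac{C t^2}{k} \int_{\R} \lambda^2 \, \diff \| E_\lambda \phi \|_2^2 = \tfrac{C t^2}{k} \| A \phi \|_2^2 $.

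To conclude I would apply this bound twice: with $ A = \mathcal{H}_{\beta} $ (error at most $ C|t| \| \mathcal{H}_{\beta} \psi \|_2 / \sqrt{k} $) and with $ A = \mathcal{K}_{\sigma,\beta} $ (error at most $ C|t| \| \mathcal{K}_{\sigma,\beta} \psi \|_2 / \sqrt{k} \leqslant C' |t|/\sqrt{k} $ by the hypothesis $ \| \mathcal{K}_{\sigma,\beta} \psi \|_2 \leqslant C $), subtract the two resulting identities, and invoke the reverse triangle inequality; this produces \eqref{eqp: approx est 2} with the error term controlled by $ \tfrac{C|t|}{\sqrt{k}} \bigl( \| \mathcal{H}_{\beta}\psi\|_2 + C' \bigr) = |t|\, o_k(1) $. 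The only genuinely delicate point is the uniformity in $ \sigma \in [0,1] $, and it rests entirely on the assumed uniform bound $ \| \mathcal{K}_{\sigma,\beta}\psi \|_2 \leqslant C $: this holds for $ \psi \in \mathcal{D} $ since the rescaled potential $ \nu(\sigma)\sigma^{-2}\mathcal{W}(\cdot/\sigma) $ applied to a function vanishing like $ |\mathbf{x}|^{a} $, $ a > \tfrac{5}{2} $, near the origin contributes only $ \OO(\sigma^{a-1/2}) $ in $ L^2 $, so that none of the constants above degenerates as $ \sigma \to 0 $.
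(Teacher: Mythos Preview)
Your argument is correct and in fact cleaner than the paper's. Both proofs rest on the Hille--Yosida identity $ (k/it)^k(A - ik/t)^{-k} = (1 + \tfrac{it}{k}A)^{-k} $ and reduce the statement to controlling $ \bigl\| [(1+\tfrac{it}{k}A)^{-k} - e^{-itA}]\psi \bigr\|_2 $ for each operator separately, with the $\sigma$-uniformity coming from the hypothesis $ \|\mathcal{K}_{\sigma,\beta}\psi\|_2 \leqslant C $. The difference lies in how that rate is obtained. You invoke the spectral theorem and the scalar bound $ |(1+i\theta/k)^{-k} - e^{-i\theta}| \leqslant C\min\{1,\theta^2/k\} $ to produce an explicit $ C|t|\|A\psi\|_2/\sqrt{k} $, applied uniformly to both operators. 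The paper instead treats the two operators asymmetrically: for $ \mathcal{H}_\beta $ it differentiates in $t$ and writes the error as an integral $ -i\int_0^t [\,e^{-i\tau\mathcal{H}_\beta} - (1+\tfrac{i\tau}{k}\mathcal{H}_\beta)^{-k-1}\,]\mathcal{H}_\beta\psi\,\diff\tau $, while for $ \mathcal{K}_{\sigma,\beta} $ it uses the Gamma-integral representation from \cite[Thm.~8.3, Ch.~1]{pazy1983ams}, bounds the integrand pointwise, and appeals to dominated convergence to conclude $o_k(1)$ uniformly in $\sigma$. Your route is shorter, treats both operators on the same footing, and yields a quantitative rate rather than merely $o_k(1)$; the paper's route has the virtue of not being specific to self-adjoint operators (the Pazy representation holds for general $C_0$-semigroups), though that extra generality is not used here.
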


\begin{proof}
	The starting point is the representation formula \cite[Thm.\ 8.3, Ch.\ 1]{pazy1983ams} for a unitary group $ e^{-it A} $ generated by a self-adjoint operator $ A $: for any $ \psi \in L^2(\mathbb{R}^3) $, one has the following convergence in norm, uniformly w.r.t. $t$ on bounded intervals:
  	\begin{equation}
    		e^{-it A} \psi = \lim_{k \to + \infty} \left( 1 + \frac{i t A}{k} \right)^{-k} \psi = \lim_{k \to + \infty} \left( \frac{k}{it} \right)^k \left( A - \frac{i k}{t} \right)^{-k} \psi\; .
 	\end{equation}
  	Since the vector $\psi$ belongs to the intersection of the domains of the two operators, it is possible to write a more explicit bound on the error:
  	\begin{equation}
		\label{eq: pazy proof 1}
    		\begin{split}
      	\lf(e^{- it\mathcal{H}_{\beta} }  - \left( \tx\frac{k}{i t} \right)^k \left( \mathcal{H}_{\beta} - \tx\frac{i k}{t} \right)^{-k} \ri) \psi =-i\int_0^t \lf(e^{- i\tau\mathcal{H}_{\beta}}  - \left( 1 + \tx\frac{i \tau }{k}\mathcal{H}_{\beta} \right)^{-k-1} \ri) \mathcal{H}_{\beta}\psi \, \mathrm{d}\tau\\= |t| o_k(1),
    	\end{split}
  	\end{equation}
  	where $o_k(1)\in L^2(\mathbb{R}^3)$ converges to zero as $k\to \infty$ uniformly with respect to $t$ and $\beta$ on bounded  intervals. 
  
  	{Let us prove that a similar estimates holds true for $ \mathcal{K}_{\beta,\sigma} $: following \cite{pazy1983ams}, we write}
  	\bmln{
  		{\lf\| \lf(e^{- it \mathcal{K}_{\sigma, \beta} }  - \left( \tx\frac{k}{i t} \right)^k \left( \mathcal{K}_{\sigma, \beta} - \tx\frac{i k}{t} \right)^{-k} \ri) \psi \ri\|_2 = \frac{k^{k+1}}{k!} \lf\| \int_0^{+\infty} \diff \xi \: \xi^k e^{- k \xi} \lf( e^{- it \mathcal{K}_{\sigma, \beta} }  -e^{-i \xi t \mathcal{K}_{\sigma, \beta}}  \ri) \psi \ri\|_2}	\\
  		{\leq C \frac{k^{k+1}}{k!} \int_0^{+\infty} \diff \xi \: \xi^k e^{- k \xi} = C \frac{1}{k!} \int_0^{+\infty} \diff \xi \: \xi^k e^{- \xi} = C},
	}
	{so that}
	\beq
		{\lf\| \lf(e^{- it \mathcal{K}_{\sigma, \beta} }  - \left( \tx\frac{k}{i t} \right)^k \left( \mathcal{K}_{\sigma, \beta} - \tx\frac{i k}{t} \right)^{-k} \ri) \psi \ri\|_2 \xrightarrow[k \to +\infty]{} 0,}
	\eeq
	{by dominated convergence and pointwise convergence to $ 0 $ of the integrand:}
	\bdm
		{\frac{\xi^k e^{-\xi}}{k!} \leq \frac{k^k e^{-k}}{k!} \leq \frac{C}{\sqrt{k}}  \xrightarrow[k \to + \infty]{} 0.}
	\edm 
	{Note that the convergence is uniform in $ \sigma $, since the estimates above are independent of $ \sigma $.}
	
	{Furthermore, for any $ \psi \in \dom\lf( \mathcal{K}_{\sigma, \beta}  \right) $,}
	\beq	
		{\lf\| \lf( \tx\frac{k}{i t}  \left( \mathcal{K}_{\sigma, \beta} - \tx\frac{i k}{t} \right)^{-1} - 1 \ri) \psi \ri\|_2 = \lf\|  \left( \mathcal{K}_{\sigma, \beta} - \tx\frac{i k}{t} \right)^{-1} \mathcal{K}_{\sigma, \beta}  \psi \ri\|_2 \leq C t/k \xrightarrow[k \to + \infty]{} 0,}
	\eeq
	{uniformly in $ \sigma $. This implies the analogue of \eqref{eq: pazy proof 1} for $   \mathcal{K}_{\sigma, \beta} $, via triangular inequality.}
\end{proof}

	The core of the proof of \cref{pro:approx} is next \cref{lem: approx 3}, where we show that the norm resolvent convergence of $ \mathcal{K}_{\sigma,\beta} $ to $ \mathcal{H}_{\beta} $ is in fact sufficient, via \cref{lem: approx smooth,lem: strong weak,lem: pazy} to deduce a quantitive estimate of the convergence on a suitable dense subset of the unitary groups $ \mathcal{V}_{n,\sigma} $ and $ \mathcal{V}_{n} $ as $ \sigma \to 0 $. It is precisely at this point that we need to restrict the argument to the set $ \mathcal{D} $ introduced in \eqref{eq: domainD}.
	
\begin{proposition}[Rescaled potential approximation of $ \mathcal{V}_{n}(t,s) $]
  \label{lem: approx 3}
  \mbox{}	\\
  For any finite $ t,s \in \mathbb{R} $ and for any $ \psi, \phi \in \mathcal{D} $, with $ \left\| \phi \right\|_2 = \left\| \psi \right\|_2 = 1 $, as $ \sigma \to 0 $,
  with  $\sigma n \ll 1$,
  \begin{equation}
    \label{eq: approx 3}
    \left| \left\langle \phi\left| \left( \mathcal{V}_{n,\sigma}(t,s) - \mathcal{V}_{n}(t,s) \right) \psi\right. \right\rangle \right| = \mathcal{O}(n^5 \sigma^{2}) + n^{-1} o_{\sigma}(1) + o_n(1).
  \end{equation}
\end{proposition}

\begin{proof}
  	The idea is to use \cref{lem: strong weak}, and thus focus on estimating the quantity
  	\begin{equation}
    		\label{eqp: approx est 1}
    		\left\| \left[ e^{- i\mathcal{H}_{\beta} \tau} - e^{-i \mathcal{K}_{\sigma, \beta} \tau} \right] \psi  \right\|_2		
  	\end{equation}
	for any $ \psi \in H^2(\mathbb{R}^3) $ such that $ \psi(0) = 0 $, $ \beta \in \mathbb{R} $ and
  	\begin{equation}
   		\tau \sim \frac{1}{n}\; .
  	\end{equation}
  	By assumption $  \psi \in \mathscr{D}\left(\mathcal{H}_{\beta}\right) \cap \mathscr{D}\left(\mathcal{K}_{\sigma, \beta} \right) $ and we are thus in position to apply \cref{lem: pazy}, yielding
  	\begin{multline}
    		\label{eqp: approx est 2}
    		\left\| \left[ e^{- i\mathcal{H}_{\beta} \tau} - e^{-i \mathcal{K}_{\sigma, \beta} \tau} \right] \psi \right\|_2 \\
    		= \left(\frac{k}{\lvert \tau  \rvert_{}^{}}\right)^k \sup_{\chi \in L^2(\mathbb{R}^3), \left\| \chi \right\|_2 \leqslant 1} \left| \left\langle \chi\left|\left[ \left( \mathcal{H}_{\beta} - \tx\frac{i k}{\tau} \right)^{-k} -\left(\mathcal{K}_{\sigma, \beta} - \tx\frac{i k}{\tau} \right)^{-k} \right]\psi\right. \right\rangle_2\right| + \lvert \tau  \rvert_{}^{} o_k(1),
  	\end{multline}
  	{uniformly in $ \sigma $.}
		
  Now we claim that, given two self-adjoint operators $ A $ and $ B $, $ z \in \rho(A) \cap \rho(B) $ and $ k \in \mathbb{N} $,
  if there exists some $ \delta_z < + \infty $ such that, for any $ \phi, \psi \in L^2(\mathbb{R}^3) $, such that $ \left\| \psi
  \right\|_2 \leqslant 1 $, $ \left\| \phi \right\|_2 \leqslant 1 $,
  \begin{equation}
    \label{eqp: condition}
    \left| \left\langle \phi\left|\left[ (A - z)^{-1} - (B-z)^{-1} \right] \psi\right. \right\rangle_2\right| \leqslant  \delta_z \left\| \phi \right\|_2 \left\| \psi \right\|_2,
  \end{equation}
  then
  \begin{equation}
    \label{eqp: conclusion}
    \left| \left\langle \phi\left|\left[ (A - z)^{-k} - (B-z)^{-k} \right] \psi\right. \right\rangle_2\right| \leqslant \frac{k \delta_z}{\left|\Im(z)\right|^{k-1}}.
  \end{equation}
  The result can be proven by induction writing
  \begin{multline*}
    (A - z)^{-k} - (B-z)^{-k} = \left( (A - z)^{-1} - (B-z)^{-1} \right) (A - z)^{-k+1} \\
    + (B - z)^{-1} \left((A - z)^{-k+1} - (B-z)^{-k+1} \right),
  \end{multline*}
  and using the inequalities
  \begin{displaymath}
    \left\| (A-z)^{-1} \right\| \leqslant \frac{1}{|\Im(z)|},	\qquad		\left\| (B-z)^{-1} \right\| \leqslant \frac{1}{|\Im(z)|},
  \end{displaymath}
  in the consequent estimate
  \begin{multline*}
    \left| \left\langle \phi\left|\left[ (A - z)^{-k} - (B-z)^{-k} \right] \psi\right. \right\rangle_2\right| \leqslant \left| \left\langle \phi\left|\left[ (A - z)^{-1} - (B-z)^{-1} \right] (A - z)^{-k+1} \psi\right. \right\rangle_2 \right| \\
    + \left| \left\langle (B - z^*)^{-1}\phi\left|\left[ (A - z)^{-k+1} - (B-z)^{-k+1}
          \right]\psi\right. \right\rangle_2\right| \\ \leqslant \frac{\delta_z}{\left|\Im(z)\right|^{k-1}} + \left| \left\langle (B -
        z^*)^{-1}\phi\left|\left[ (A - z)^{-k+1} - (B-z)^{-k+1} \right] \psi\right. \right\rangle_2 \right|,
  \end{multline*}
  which leads to \eqref{eqp: conclusion} by the induction hypothesis.
		
  {Therefore, using \eqref{eqp: conclusion} in \eqref{eqp: approx est 2}, we get (keeping in mind that $\psi$
  has norm one)
  \begin{equation}
    \label{eqp: approx 3}
    \left\| \left[ e^{- i\mathcal{H}_{\beta} \tau} - e^{-i \mathcal{K}_{\sigma, \beta} \tau} \right] \psi \right\|_2 \leqslant \frac{k^2 \delta_{n,k, \sigma}}{\lvert \tau  \rvert_{}^{}}  + \lvert \tau \rvert_{}^{} o_k(1),
  \end{equation}
  where $ \delta_{n,k, \sigma} $ is such that, for any normalized $ \phi, \psi \in L^2(\mathbb{R}^3) $,
  \begin{equation}
    \left| \left\langle \phi\left|\left[ \left( \mathcal{H}_{\beta} - \tx\frac{i k}{\tau} \right)^{-1} - \left(\mathcal{K}_{\sigma, \beta} - \tx\frac{i k}{\tau} \right)^{-1} \right] \psi\right. \right\rangle_2 \right| \leqslant  \delta_{n,k, \sigma}.
  \end{equation}
  Notice that we already know that such a $ \delta_{n,k, \sigma} $ does exist thanks to the norm resolvent
  convergence of $ \mathcal{K}_{\sigma, \beta} $ to $ \mathcal{H}_{\beta} $ stated in
  \cite[Thm. 1.2.5]{albeverio2005sqm} and, more precisely, $ \delta_{n, k,\sigma} \to 0 $, as $ \sigma \to 0 $, for fixed $ n,k
  \in \mathbb{N} $. However, we are going to estimate the dependence of $ \delta_{n,k,\sigma} $ on the parameters $ n $, $k$, and $
  \sigma $, showing that
  \begin{equation}
    \label{eqp: delta nks}
    \delta_{n,k, \sigma} =  \mathcal{O}\left(\sigma^2 \sqrt{nk} \right) + n^{-5/2} k^{-5/2} o_{\sigma}(1).
  \end{equation}}
  In fact, the result is proven by simply tracking down in \cite[Proof of Thm. 1.2.5]{albeverio2005sqm}
  the dependence of the remainders on the spectral parameter. With the same notation as in
  \cite{albeverio2005sqm}, we have (recall the definition \eqref{eq:nu} of $ \nu(\sigma) $)
  \begin{eqnarray}
    \label{eqp: resolvent 1}
    \left( \mathcal{H}_{\beta}+\lambda_n \right)^{-1} - \left( - \Delta + \lambda_n \right)^{-1} & =: & - \frac{1}{\beta + \frac{\sqrt{\lambda_n}}{\pi}} \left|G_{\lambda_n} \right\rangle \left\langle G_{\lambda_n}\right|,	\\	
    \label{eqp: resolvent 2}
    \left( \mathcal{K}_{\beta,\sigma}+\lambda_n \right)^{-1} - \left( - \Delta+\lambda_n \right)^{-1} & =: & - \sigma \nu(\sigma) A_{\sigma, n} \left(1 + B_{\sigma,n} \right)^{-1} C_{\sigma,n},	
  \end{eqnarray}
  where the operators $ A_{\sigma,n} $, $ B_{\sigma,n} $ and $ C_{\sigma,n} $ are defined in \eqref{eqp: operators}
  below, we have set for short
 \begin{equation}
     {\lambda_n : = \frac{i k}{\tau},}
  \end{equation}
  $ G_{\lambda}(\mathbf{x}) : = (-\Delta + \lambda)^{-1}(\mathbf{x})  $ is the Green function of the Laplacian, $ v =
  \sqrt{|\mathcal{W}|} $, $ u = \mathrm{sgn}(\mathcal{W}) \sqrt{|\mathcal{W}|} $ and $ \phi_0 $ is the $ L^2$ solution of the zero-energy
  equation
  \begin{equation}
    u (-\Delta)^{-1} v \phi_0 = - \phi_0,
  \end{equation}
  which is known to exist and being non-trivial thanks to the resonance condition, which also ensures that
  $ \left\langle v|\phi_0 \right\rangle \neq 0 $. Note that, since $ \lambda_n $ is purely imaginary ad its modulus diverges as $
  n, k \to + \infty $, the Green function $ G_{\lambda_n} $ belongs to $ L^2(\mathbb{R}^3) $ uniformly in $ n $ and $ k $. The
  operators $ A_{\sigma,n}, B_{\sigma,n} $ and $ C_{\sigma,n} $ are integral operators whose kernels are given by (see
  also \cite[Defs. (1.2.12) -- (1.2.14)]{albeverio2005sqm})
  \begin{eqnarray}
    \label{eqp: operators}
    A_{\sigma,n}(\mathbf{x},\mathbf{x}^{\prime}) & : = & G_{\lambda_n}(\mathbf{x} - \sigma \mathbf{x}^{\prime}) v(\mathbf{x}^{\prime}),	\\
    B_{\sigma,n}(\mathbf{x},\mathbf{x}^{\prime}) & : = & \nu(\sigma) u(\mathbf{x}) G_{\sigma^2 \lambda_n}(\mathbf{x} - \mathbf{x}^{\prime}) v(\mathbf{x}^{\prime}),	\\
    C_{\sigma,n}(\mathbf{x},\mathbf{x}^{\prime}) & : = & u(\mathbf{x}) G_{\lambda_n}(\varepsilon \mathbf{x} - \mathbf{x}^{\prime}).
  \end{eqnarray}
  At fixed $ n $ and $ k $ it is not difficult to see \cite[Lemma 1.2.2]{albeverio2005sqm} that
  \begin{eqnarray}
    \label{eqp: operators limit}
    A_{\sigma,n} & \xrightarrow[\sigma \to 0]{} & A_n := \left|G_{\lambda_n} \right\rangle \left\langle v\right|,		\label{eqp: A}\\
    B_{\sigma,n} & \xrightarrow[\sigma \to 0]{} & B := u (-\Delta)^{-1} v,			\label{eqp: B}\\
    C_{\sigma,n} & \xrightarrow[\sigma \to 0]{} & C_n := \left|u \right\rangle \left\langle G_{\lambda_n}\right|,	\label{eqp: C}
  \end{eqnarray}
  where $ u $ and $ v $ are meant as the multiplication operators by $ u $ and $ v $, respectively. The
  convergences above can be proven in Hilbert-Schmidt norm. Note that the operator $ B$ is independent of
  $ n $ and $ k $.
		
  In order to estimate the difference between the resolvent we use \eqref{eqp: resolvent 1} and
  \eqref{eqp: resolvent 2} to write
  \begin{multline*}
    \left\langle \phi\left|\left[ \left( \mathcal{H}_{\beta}+\lambda_n \right)^{-1} - \left( \mathcal{K}_{\beta,\sigma}+\lambda_n \right)^{-1} \right] \psi\right. \right\rangle_2 = -  \underbrace{\sigma \nu(\sigma) \left\langle \phi\left|\left( A_{\sigma, n} - A_n \right) \left(1 + B_{\sigma,n} \right)^{-1} C_{\sigma,n} \psi\right. \right\rangle_2}_{(1)} \\
    - \underbrace{\sigma \nu(\sigma) \left\langle \psi\left|A_n \left(1 + B_{\sigma,n} \right)^{-1} \left(C_{\sigma,n} - C_n \right)
          \psi\right. \right\rangle_2}_{(2)} \\
          - \underbrace{\left\langle \phi\left| \left[\sigma \nu(\sigma) A_n\left(1 +
              B_{\sigma,n}\right)^{-1} C_n - \textstyle\frac{1}{\beta + \frac{\sqrt{\lambda_n}}{\pi}} \left|G_{\lambda_n} \right\rangle \left\langle G_{\lambda_n}\right| \right]
         \psi\right. \right\rangle_2}_{(3)}.
  \end{multline*}
  Note that the quantity $ \beta + \frac{\sqrt{\lambda_n}}{\pi} $ is invertible because
  \begin{displaymath}
    \Im \left( \beta + \frac{\sqrt{\lambda_n}}{\pi} \right) = \frac{1}{\pi} \sqrt{\frac{k}{\tau_n}} \Im \left(\sqrt{i} \right) \neq 0,
  \end{displaymath}
  for any $ k, n $.  Terms (1) and (2) above are the easiest to bound and can in fact bounded in the very
  same way: since however the bound requires an estimate of the norm of $ B_{\sigma,n} $, which is also
  involved in term (3), we start from this last one. By \eqref{eqp: A} and \eqref{eqp: C},
  \begin{multline}
    \left\langle \phi\left|\left[ \sigma \nu(\sigma) A_n \left(1 + B_{\sigma,n} \right)^{-1}C_n - \textstyle\frac{1}{\beta + \frac{\sqrt{\lambda_n}}{\pi}} \left|G_{\lambda_n} \right\rangle \left\langle G_{\lambda_n}\right| \right]  \psi\right. \right\rangle_2 \\
    =\left\langle \phi\left|G_{\lambda_n}\right. \right\rangle_2 \left.\left\langle G_{\lambda_n}\right|\psi \right\rangle_2 \left| \left\langle v\left| \sigma \nu(\sigma) \left(1 + B_{\sigma,n} \right)^{-1}\right|u\right\rangle_2 - \textstyle\frac{1}{\beta + \frac{\sqrt{\lambda_n}}{\pi}} \right| \\
    \leqslant \frac{C}{\sqrt{|\lambda_n|}} \left\| \phi \right\|_2 \left\| \psi \right\|_2 \left| \left\langle v\left| \sigma \nu(\sigma) \left(1 + B_{\sigma,n} \right)^{-1}\right|u\right\rangle_2 - \textstyle\frac{1}{\beta + \frac{\sqrt{\lambda_n}}{\pi}} \right|,
  \end{multline}
  where we have used that $ \left\| G_{\lambda_n} \right\|_2 \leqslant C |\lambda_n|^{-1/4} $. The estimate of the last factor
  can essentially be done as in \cite[Lemma 1.2.4]{albeverio2005sqm}: expanding around $ \sigma = 0 $, we get
  \begin{multline}
    \label{eqp: G expansion}
    \nu(\sigma) G_{\sigma^2 \lambda_n}(\mathbf{x} - \mathbf{x}^{\prime}) = \left( 1 + \sigma \nu^{\prime}(\sigma \theta_1(\sigma) \right) G_{0}(\mathbf{x} - \mathbf{x}^{\prime}) - \frac{\sigma \sqrt{\lambda_n}}{4\pi} e^{- \sqrt{\lambda_n} \sigma \theta_2(\sigma) |\mathbf{x} - \mathbf{x}^{\prime}|} \\
    = \left( 1 + \beta \sigma + o(\sigma) \right) G_{0}(\mathbf{x} - \mathbf{x}^{\prime}) - \frac{\sigma \sqrt{\lambda_n}}{4\pi} e^{-
      \sqrt{\lambda_n} \sigma \theta_2(\sigma) |\mathbf{x} - \mathbf{x}^{\prime}|},
  \end{multline}
  for some $ 0 \leqslant \theta_1, \theta_2 \leqslant 1 $ and the remainder $ o(\sigma) $ in the first term is uniform in $ n $. If we
  plug the above expansion in the expression of $ B_{\sigma,n} $, we deduce
  \begin{equation}
  	\label{eq: 2.56}
    B_{\sigma,n} = (1 + \beta \sigma) B - \tx\frac{\sigma \sqrt{\lambda_n}}{4\pi} \left|u \right\rangle \left\langle v\right| + \mathcal{O}\left(\sigma^2|\lambda_n| \right) + o(\sigma),
  \end{equation}
  where we have estimated the Hilbert-Schmidt norm (recall that $ {\mathcal{W}} $ is smooth and compactly supported).
  \begin{multline*}
    \left\| B_{\sigma,n} - (1 + \sigma \beta) B + \tx\frac{\sigma \sqrt{\lambda_n}}{4\pi} \left|u \right\rangle \left\langle v\right| \right\|_{\mathrm{HS}}^2 \\
    \leqslant C \sigma^2 \left| \lambda_n \right| \int_{\mathbb{R}^6} \mathrm{d} \mathbf{x} \mathrm{d} \mathbf{x}^{\prime} \left| 1 - (1 + \beta \sigma)
      e^{-\sqrt{\lambda_n} \sigma \theta_2(\sigma) |\mathbf{x} - \mathbf{x}^{\prime}|} \right|^2 \left| \mathcal{W}(\mathbf{x}) \right| \left|
      \mathcal{W}(\mathbf{x}^{\prime}) \right| \leqslant C \sigma^{4} \lf(|\lambda_n|^{2} + \lf| \lambda_n \ri| \ri).
  \end{multline*}
  We now reproduce the same estimates as in  \cite[eqs. (1.2.45) -- (1.2.47)]{albeverio2005sqm}: setting $B_1 :=\beta B_0  - (4\pi)^{-1}\sqrt{\lambda_n} \,|u\rangle\, \langle v|$ and denoting by $ \epsilon_{\sigma} $ the error $ O(\sigma^2 \lambda_n) + o(\sigma)$ in \eqref{eq: 2.56}, we get
\bmln{
	\sigma \lf(1 + B_{\sigma,n} \ri)^{-1} = \sigma \lf(1 + B_0+\sigma B_1 + \epsilon_{\sigma} \ri)^{-1} \\
	 = \lf(1 + \sigma(1+ \sigma + B_0)^{-1}(B_1 - 1 ) \ri)^{-1} \sigma \lf(1+ \sigma + B_0\ri)^{-1} + \epsilon_{\sigma}.
}
Let $\phi$ be the resonant function appearing in \cref{defi:resonant} and set $P:= - |\tilde{\phi} \rangle \bra{\phi} $, where $\tilde{\phi} :=\mathrm{sgn}(\mathcal{W})\phi$, we have the expansion
\bdm
	\sigma(1+ \sigma + B_0)^{-1} = P + \mathcal{O}(\sigma),
\edm
which plugged in the previous expression yields
\bmln{
	\sigma \lf(1 + B_{\sigma,n} \ri)^{-1} -\epsilon_{\sigma} = \lf(1 + (P + \mathcal{O}(\sigma) ) (B_1 - 1 )\ri)^{-1} \lf(P + \mathcal{O}(\sigma) \ri) \\
	= \lf(1 + P (B_1 - 1 ) + \mathcal{O}(\sigma) \ri)^{-1} P + \mathcal{O}(\sigma) 	\\
	= \left(1 + \left( \beta + \tx\frac{\sqrt{\lambda_n}}{4\pi} |\langle v\,|\, \phi\rangle|^2 \right)^{-1} (1-\beta) |\phi\rangle \,\langle \tilde{\phi}|  + \tx\frac{\sqrt{\lambda_n}}{4\pi} \left( \beta + \tx\frac{\sqrt{\lambda_n}}{4\pi} |\langle v\,|\, \phi\rangle|^2  \right)^{-1}  \langle \phi  | v\rangle  \, |\phi\rangle  \langle v|  \right)P  + \mathcal{O}(\sigma)  \\
	=-\left( \beta + \tx\frac{\sqrt{\lambda_n}}{4\pi} |\langle v\,|\, \phi\rangle|^2  \right)^{-1} P + \mathcal{O}(\sigma),
}
or, equivalently,
\begin{equation}
\sigma (1 + B_{\sigma,n})^{-1} = -\left( \beta + \frac{\sqrt{\lambda_n}}{4\pi} |\langle v\,|\, \phi\rangle|^2  \right)^{-1} P + \mathcal{O}(\sigma) + \mathcal{O}(\sigma^2\lambda_n )+ o(\sigma).
\end{equation}
Therefore, we have
\begin{equation}
|(3)| = \mathcal{O}(\sigma^2|\lambda_n|^{1/2}) + \mathcal{O}(\sigma |\lambda_n|^{-1/2}) + o(\sigma) |\lambda_n|^{-1/2}
\end{equation}
which, imposing the condition (since we are going to choose $ k = n $ later, this would result in the condition $ \sigma n \ll 1 $ in the statement)
  \begin{equation}
    \sigma \sqrt{|\lambda_n|} = C \sigma \sqrt{k n} \ll 1,
  \end{equation}
becomes
\begin{equation}\label{eqp: (3)bound}
|(3)| =  \mathcal{O}(\sigma^2|\lambda_n|^{1/2})  + o(\sigma).
\end{equation}
Note that this implies that
\begin{equation}\label{eqp: norm Bs}
\lf\|\sigma \nu(\sigma) (1 + B_{\sigma,n})^{-1}\ri\| \leq C |\lambda_n|^{-1/2}.
\end{equation}

  Let us now consider the terms (1) and (2) and, since the argument is the same, let us focus on (1). A
  direct inspection of \cite[Lemma 1.2.2]{albeverio2005sqm} reveals that the Hilbert-Schmidt convergence
  of $ A_{\sigma,n} $ to $ A_n $ is in fact uniform in $ n $ and $ k $. More precisely, the dependence on $ \lambda_n
  $ can be easily scaled out: by setting $ \mathbf{y} = \sqrt{|\lambda_n|} \mathbf{x} $ and $ \mathbf{y}^{\prime} =
  \sqrt{|\lambda_n|} \mathbf{x}^{\prime} $, one has
  \begin{multline}
    |\lambda_n|^{-3/4} \left[ (A_{\sigma,n} - A) \psi \right] (|\lambda_n|^{-1/2} \mathbf{y})  = |\lambda_n|^{-1} \int_{\mathbb{R}^3} \mathrm{d} \mathbf{y}^{\prime} \left[ G_{i}(\mathbf{y}) - G_{i}(\mathbf{y} - \sigma \mathbf{y}^{\prime}) \right] v(|\lambda_n|^{-1/2} \mathbf{y}^{\prime}) \tilde\psi(\mathbf{y}^{\prime})	\\
    = : |\lambda_n|^{-1} \left( D_{\sigma,n} \tilde\psi \right)(\mathbf{y}),
  \end{multline}
  where we have set
  \begin{displaymath}
    \tilde\psi(\mathbf{y}) = |\lambda_n|^{-3/4} \psi(|\lambda_n|^{-1/2} \mathbf{y}),		
  \end{displaymath}
  so that the $ L^2 $ norms of the function is preserved.  Exploiting then the smoothness and boundedness
  of $ v $, one can apply the argument of \cite[Proof of Lemma 1.2.2]{albeverio2005sqm} to show that
  \begin{equation}
    \lim_{\sigma \to 0} \left\| D_{\sigma,n} \right\| = 0, \qquad \mbox{uniformly in } n \in \mathbb{N}.
  \end{equation} 
  Moreover, by the very same scaling argument, one can also easily show that
  \begin{equation}
    \left\| A_{\sigma,n} \right\| \leqslant \frac{C}{|\lambda_n|}, \qquad \left\| C_{\sigma,n} \right\| \leqslant \frac{C}{|\lambda_n|}.
  \end{equation}
  Therefore, using \eqref{eqp: norm Bs} and the above estimates, we obtain
  \begin{equation}
    |(1)| = k^{-5/2} n^{-5/2} o_{\sigma}(1),
  \end{equation}
  which combined with \eqref{eqp: (3)bound} yields \eqref{eqp: delta nks}.
		
  Hence, if $ \tau \sim \frac{1}{n} $, \eqref{eqp: approx 3} becomes
  \begin{displaymath}
    \left\| \left[ e^{- i\mathcal{H}_{\beta} \tau} - e^{-i \mathcal{K}_{\sigma, \beta} \tau} \right] \psi \right\|_2 \leqslant C \left( k^{5/2} n^{3/2}  \sigma^2+ k^{-1/2} n^{-3/2} o_{\sigma}(1) \right)  + \lvert n^{-1}  \rvert_{}^{} o_k(1).
  \end{displaymath}
  If we finally choose $ k = n $ for simplicity, we deduce
  \begin{displaymath}
    \left\| \left[ e^{- i\mathcal{H}_{\beta} \tau} - e^{-i \mathcal{K}_{\sigma, \beta} \tau} \right] \psi \right\|_2 \leqslant C n^{4}  \sigma^{2} + n^{-2} o_{\sigma}(1)   +  o_n(n^{-1}),
  \end{displaymath}
  and plugging this in \eqref{eq: strong weak}, we get for any $ \phi, \psi \in \mathcal{D} $ normalized,
  \begin{displaymath}
    \left| \left\langle \phi\left|\left( \mathcal{V}_{n}(t,s) - \mathcal{V}_{n,\sigma}(t,s) \right) \psi\right. \right\rangle \right| \leqslant C n^{5}  \sigma^{2} + n^{-1} o_{\sigma}(1) + o_n(1).
  \end{displaymath}
\end{proof}
		
Finally, we address the approximation {\it \`{a} la} Yoshida for the two-parameter unitary group $ \mathcal{U}_{\sigma}(t,s) $ generated by the approximant operators $ \mathcal{K}_{\beta(t),\sigma} $ (recall the definition \eqref{eq:happ} of the operator $ \mathcal{K}_{\beta,\sigma}  $). Before addressing the main question, it is
useful to state one more technical Lemma.

\begin{lemma}
  	\label{lem: strong est}
  	\mbox{}	\\
  	Let $ \beta \in C(\R) $ and $ \beta' \in \mathbb{R} $ finite, with $ \beta(s) = \beta' $. Let also $ \psi \in \mathcal{D} $ with $ \left\| \psi \right\|_2 \leq 1 $. Then, for any finite $ s < t \in \R $, with $ t - s \gg \sigma $, and for any $ 0 < \epsilon < 1 $, 
  	\begin{equation}
    		\label{eq: strong est}
    		\left\| \left( {\mathcal{U}_{\sigma}(t,s)} - e^{-i \mathcal{K}_{\beta',\sigma} {(t - s)}} \right) \psi \right\|^{{2}}_2 = \OO\big(\sigma^{-2} (t-s)^{8 - \epsilon}\big).
  	\end{equation}
\end{lemma}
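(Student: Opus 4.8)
The plan is to run a single Duhamel expansion comparing $ \mathcal{U}_{\sigma}(t,s) $ with the frozen propagator $ e^{-i\mathcal{K}_{\beta',\sigma}(t-s)} $, and to reduce the estimate to controlling how much of the evolved state sits on the support of the regularising potential. Set $ R(t) := \bigl( \mathcal{U}_{\sigma}(t,s) - e^{-i\mathcal{K}_{\beta',\sigma}(t-s)} \bigr)\psi $; the variation-of-constants formula gives
\begin{equation*}
  R(t) = -i \int_s^t e^{-i\mathcal{K}_{\beta',\sigma}(t-\tau)} \bigl( \mathcal{K}_{\beta(\tau),\sigma} - \mathcal{K}_{\beta',\sigma} \bigr)\, \mathcal{U}_{\sigma}(\tau,s)\psi \,\diff\tau .
\end{equation*}
By \eqref{eq:vsigma}--\eqref{eq:nu}, $ \mathcal{K}_{\beta(\tau),\sigma} - \mathcal{K}_{\beta',\sigma} $ is multiplication by $ \mathcal{W}_{\sigma} $ times the scalar $ \nu_{\beta(\tau)}(\sigma) - \nu_{\beta'}(\sigma) $, where $ \nu_{\beta}(\sigma) := 1 + \beta\sigma + o(\sigma) $ as in \eqref{eq:nu}; since $ \beta \in C(\R) $ with $ \beta(s) = \beta' $, this scalar equals $ \bigl(\beta(\tau)-\beta(s)\bigr)\sigma + o(\sigma) $, hence is $ \mathcal{O}(\sigma) $ on the bounded interval $ [s,t] $ (and $ o(\sigma) $ as $ \tau \to s $). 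Therefore
\begin{equation*}
  \lVert R(t) \rVert_2 \leqslant C\sigma \int_s^t \bigl\lVert \mathcal{W}_{\sigma}\, \mathcal{U}_{\sigma}(\tau,s)\psi \bigr\rVert_2 \,\diff\tau ,
\end{equation*}
so everything hinges on a good bound for $ \lVert \mathcal{W}_{\sigma}\, \mathcal{U}_{\sigma}(\tau,s)\psi \rVert_2 $ with $ \psi \in \mathcal{D} $.

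The key local estimate I would establish is
\begin{equation*}
  \bigl\lVert \mathcal{W}_{\sigma}\, \mathcal{U}_{\sigma}(\tau,s)\psi \bigr\rVert_2 \leqslant C\, \sigma^{-3/2}\, (\tau-s)^{5/2-\epsilon'} , \qquad \epsilon' > 0 \text{ arbitrary},
\end{equation*}
uniformly in $ \sigma $ small. Because $ \mathcal{W}_{\sigma} $ is supported in a ball of radius $ \mathcal{O}(\sigma) $ with $ \lVert \mathcal{W}_{\sigma} \rVert_{\infty} = \mathcal{O}(\sigma^{-2}) $, the left-hand side only sees the profile of $ \mathcal{U}_{\sigma}(\tau,s)\psi $ near the origin; I would use that, up to an $ H^2 $-part of bounded norm, $ \mathcal{U}_{\sigma}(\tau,s)\psi $ carries a component concentrated at scale $ \sigma $ of amplitude $ \sim q(\tau)/\sigma $ (the regularised counterpart of the singular part $ \tfrac{q(\tau)}{4\pi\lvert\xv\rvert} $ of a point-interaction state, cf. \eqref{eq:eff ham}, \eqref{eq:ansatz}), with $ q(\tau) $ the charge of the Volterra equation \eqref{eq:charge eq} (read off with $ \beta(\tau) $). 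The decisive point is the vanishing built into $ \mathcal{D} $: since $ \psi(\xv) = \mathcal{O}(\lvert\xv\rvert^{a}) $ with $ a > 5/2 $ one has $ \psi(0)=\Delta\psi(0)=0 $, hence $ \bigl( U_0(\tau-s)\psi \bigr)(0) = \mathcal{O}\bigl((\tau-s)^{2}\bigr) $; feeding this into the Abel/Volterra inversion exactly as in the proofs of \cref{lem: approx 1,lem: approx smooth} yields $ \lvert q(\tau) \rvert \leqslant C(\tau-s)^{5/2-\epsilon'} $ instead of the generic $ \sqrt{\tau-s} $. As a profile of amplitude $ q(\tau)/\sigma $ on $ \{\lvert\xv\rvert\lesssim\sigma\} $ contributes $ \mathcal{O}(\sigma^{-3/2}\lvert q(\tau)\rvert) $ to $ \lVert \mathcal{W}_{\sigma}\,\cdot\,\rVert_2 $, while the $ H^2 $-part contributes only $ \mathcal{O}(\sigma^{-1/2}) $, the local estimate follows, and inserting it into the Duhamel bound,
\begin{equation*}
  \lVert R(t) \rVert_2 \leqslant C\sigma \int_s^t \sigma^{-3/2}(\tau-s)^{5/2-\epsilon'}\,\diff\tau \leqslant C\, \sigma^{-1/2} (t-s)^{7/2-\epsilon'} ,
\end{equation*}
whence $ \lVert R(t) \rVert_2^2 \leqslant C\,\sigma^{-1}(t-s)^{7-2\epsilon'} $, which comfortably implies the asserted $ \mathcal{O}\bigl(\sigma^{-2}(t-s)^{8-\epsilon}\bigr) $ in the regime $ t-s \gg \sigma $ (the statement is phrased with a margin, which is all that is needed when this lemma is summed over the $ n $ slices in \cref{lem: approx 2}).

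The main obstacle is the local estimate itself, because it must hold for the regularised, finite-$ \sigma $ propagator with explicit $ \sigma $-dependence, not merely for $ \mathcal{U}_{\mathrm{eff}} $. The naive route --- decomposing $ \mathcal{U}_{\sigma}(\tau,s)\psi = e^{-i\mathcal{K}_{\beta',\sigma}(\tau-s)}\psi + R(\tau) $, bounding $ \lVert \mathcal{W}_{\sigma} R(\tau) \rVert_2 \leqslant \lVert \mathcal{W}_{\sigma} \rVert_{\infty} \lVert R(\tau) \rVert_2 = \mathcal{O}(\sigma^{-2})\lVert R(\tau)\rVert_2 $ and closing by Gr\"{o}nwall --- is useless, since it creates a factor $ e^{C(t-s)/\sigma} $ that explodes precisely when $ t-s \gg \sigma $. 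One must instead prove the charge-type bound directly, which relies on the \emph{quantitative} norm-resolvent convergence $ \mathcal{K}_{\beta,\sigma} \to \mathcal{H}_{\beta} $ extracted in \cref{lem: approx 3} (to justify, with controlled constants, the splitting of $ \mathcal{U}_{\sigma}(\tau,s)\psi $ into an $ H^2 $-part plus a scale-$ \sigma $ near-resonance part attached to the resonant profile of \cref{defi:resonant}), together with the Volterra bookkeeping for the charge that exploits the high-order vanishing in $ \mathcal{D} $. The $ \epsilon $ in the statement absorbs the borderline terms produced by the $ \tfrac12 $-Abel operator and by the fact that membership in $ \mathcal{D} $ only guarantees vanishing to some order $ a>5/2 $.
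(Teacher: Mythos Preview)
Your overall Duhamel reduction is the right idea and matches the paper's strategy, but you put $\mathcal{W}_{\sigma}$ on the wrong factor, and this creates the very obstacle you then struggle with. If instead you use the \emph{other} Duhamel representation
\[
R(t)=-i\int_s^t \mathcal{U}_{\sigma}(t,\tau)\bigl(\mathcal{K}_{\beta(\tau),\sigma}-\mathcal{K}_{\beta',\sigma}\bigr)\,e^{-i\mathcal{K}_{\beta',\sigma}(\tau-s)}\psi\,\diff\tau,
\]
unitarity of $\mathcal{U}_{\sigma}(t,\tau)$ reduces everything to $\lVert \mathcal{W}_{\sigma}\,e^{-i\mathcal{K}_{\beta',\sigma}(\tau-s)}\psi\rVert_2$, i.e.\ to the \emph{frozen} propagator only. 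This is exactly what the paper does (they differentiate $\lVert R\rVert_2^2$ and Cauchy--Schwarz the resulting bilinear term so that $\mathcal{W}_{\sigma}$ lands on $e^{-i\mathcal{K}_{\beta',\sigma}(t-s)}\psi$, which is equivalent). The need to control $\lVert \mathcal{W}_{\sigma}\,\mathcal{U}_{\sigma}(\tau,s)\psi\rVert_2$ never arises.

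For the frozen quantity the paper then proves $\lVert \mathcal{W}_{\sigma}\,e^{-i\mathcal{K}_{\beta',\sigma}(\tau-s)}\psi\rVert_2=\mathcal{O}\bigl(\sigma^{-2}(\tau-s)^{2-\epsilon}\bigr)$ by an elementary bootstrap: compute $\partial_t$ and $\partial_t^2$ of $\lVert \mathcal{W}_{\sigma}\psi(t)\rVert_2^2$, use that $\lVert \mathcal{K}_{\beta',\sigma}^j\psi(t)\rVert_2$ is conserved for $j=1,2$ (this is where the one-parameter group is essential), use the initial smallness $\lVert \mathcal{W}_{\sigma}^m\psi\rVert_2=\mathcal{O}(\sigma^{a-(4m-3)/2})$ coming from $\psi\in\mathcal{D}$, and iterate. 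No charge equation, no Volterra analysis, no resolvent expansion is needed. Your proposed route via a regularised charge decomposition and the quantitative resolvent convergence from \cref{lem: approx 3} would be substantially more delicate and is not actually carried out in your sketch; moreover, invoking the splitting ``$H^2$-part plus scale-$\sigma$ resonance profile'' with controlled constants for the \emph{time-dependent} $\mathcal{U}_{\sigma}$ is precisely the kind of statement the surrounding lemmas are designed to establish, so relying on it here risks circularity. A smaller point: the paper also uses $\lvert\beta(\tau)-\beta'\rvert\leq C(\tau-s)$ (i.e.\ $\beta\in C^1$, as in \cref{pro:approx}) to gain one extra power of $(\tau-s)$; your bound treats the scalar prefactor only as $\mathcal{O}(\sigma)$, which would cost you that power.
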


\begin{proof}
  	Since at time $ \tau = 0 $ the l.h.s.\ of \eqref{eq: strong est} vanishes, it is sufficient to estimate its time derivative:
  	\begin{multline}
  		\label{eq: proof strong est 1}
    		\partial_{t} \left\| \left( {\mathcal{U}_{\sigma}(t,s)} - e^{-i \mathcal{K}_{\beta',\sigma} {(t - s)}} \right) \psi \right\|_2^2 = - 2 \Re \left[ \partial_t \left\langle  {\mathcal{U}_{\sigma}(t,s)} \psi\left|  e^{-i \mathcal{K}_{\beta',\sigma} {(t - s)}} \psi\right. \right\rangle	\right] \\
    		= 2 \sigma \lf( \beta' - \beta(t) \ri)  \Im \left[ \left\langle  e^{-i \mathcal{K}_{\beta',\sigma} {(t - s)}} \psi\left|  \mathcal{W}_\sigma \lf(e^{-i \mathcal{K}_{\beta',\sigma} {(t - s)}} - {\mathcal{U}_{\sigma}(t,s)} \ri) \psi\right. \right\rangle	\right]  
  	\end{multline}
  	where we used that the expectation of $ \mathcal{W}_{\sigma} $ is real. Hence, exploiting the differentiability of $ \beta(t) $ to bound $  \lf| \beta(t) - \beta' \ri| \leq C (t - s) $, we get
  	\bmln{
  		\partial_{t} \left\| \left( {\mathcal{U}_{\sigma}(t,s)} - e^{-i \mathcal{K}_{\beta',\sigma} {(t - s)}} \right) \psi \right\|_2^2 \leq C \sigma (t-s)  \lf\|  \mathcal{W}_{\sigma}  e^{-i \mathcal{K}_{\beta',\sigma} {(t - s)}} \psi \ri\|_2 \left\| \left( {\mathcal{U}_{\sigma}(t,s)} - e^{-i \mathcal{K}_{\beta',\sigma} {(t - s)}} \right) \psi \right\|_2.
	}
	Now, we claim that, for any $  \epsilon > 0 $ for any $ s,t \in \R $ such that $ t - s \gg \sigma $,
	\beq
		\label{eq: Wsigma bound}
		\lf\|  \mathcal{W}_{\sigma}  e^{-i \mathcal{K}_{\beta',\sigma} {(t - s)}} \psi \ri\|_2 = \OO\big(\sigma^{-2} (t-s)^{2 - \epsilon} \big),
	\eeq
	so that we obtain the result, {\it i.e.},
	\bml{
		\left\| \left( {\mathcal{U}_{\sigma}(t,s)} - e^{-i \mathcal{K}_{\beta',\sigma} {(t - s)}} \right) \psi \right\|_2 \leq C \sigma (t-s)  \int_s^t \diff \tau \lf\|  \mathcal{W}_{\sigma}  e^{-i \mathcal{K}_{\beta',\sigma} {(\tau - s)}} \psi \ri\|_2 \\
		\leq  C \sigma^{-1} (t - s)^{4 - \epsilon}.
	}
	
	Let us now prove \eqref{eq: Wsigma bound}. We first observe that, for any $ \psi \in \mathcal{D} $ and for some $ a > 5/2 $,
	\beqn
		\label{eq: function 0}
		\lf\| \mathcal{W}_{\sigma}  \psi \ri\|_2^2 & = & \frac{1}{\sigma} \int_{\R^3} \diff \xv \: \mathcal{W}^2(\xv) \lf| \psi(\sigma x) \ri|^2 = \OO(\sigma^{2a - 1}),	\\
		\lf\| \mathcal{W}^2_{\sigma}  \psi \ri\|_2^2 & = & \frac{1}{\sigma^5} \int_{\R^3} \diff \xv \: \mathcal{W}^4(\xv) \lf| \psi(\sigma x) \ri|^2 = \OO(\sigma^{2a - 5}),
	\eeqn
	since $ W $ is compactly supported and by \eqref{eq: domainD}. Furthermore,
	\beq
		\label{eq: derivative}
		\partial_t \lf\| \mathcal{W}_{\sigma}  e^{-i \mathcal{K}_{\beta',\sigma} {(t - s)}} \psi \ri\|_2^2 = 2 \Im \meanlrlr{\mathcal{K}_{\beta',\sigma} \psi(t)}{\mathcal{W}_{\sigma}^2}{\psi(t)},
	\eeq
	where we abbreviated $ \psi(t) : = e^{-i \mathcal{K}_{\beta',\sigma} (t - s)} \psi $. Hence,
	\beq
		\label{eq: derivative 0}
		\lf. \partial_t \lf\| \mathcal{W}_{\sigma}  \psi(t) \ri\|_2^2 \ri|_{t = s}  \leq C \lf\| \mathcal{K}_{\beta',\sigma} \psi \ri\|_2 \lf\| \mathcal{W}_{\sigma}^2 \psi \ri\|_2 \leq  C \lf( 1 + \sigma^{a - \frac{1}{2}} \ri) \sigma^{a - \frac{5}{2}} = \OO\big(\sigma^{a - \frac{5}{2}}\big).
	\eeq
	Finally, we compute now and estimate the second derivative of the quantity in \eqref{eq: Wsigma bound}:
	\bml{
		\label{eq: second derivative}
		 \partial^2_t \lf\| \mathcal{W}_{\sigma}  \psi(t) \ri\|_2^2 = 2 \Re \lf\{ \meanlrlr{\mathcal{K}_{\beta',\sigma}^2 \psi(t)}{\mathcal{W}_{\sigma}^2}{\psi(t)} -  \meanlrlr{\mathcal{K}_{\beta',\sigma}\psi(t)}{\mathcal{W}_{\sigma}^2}{\mathcal{K}_{\beta',\sigma} \psi(t)} \ri\} \\
		\leq 2 \lf\| \mathcal{K}_{\beta',\sigma}^2 \psi(t) \ri\|_2 \lf\| \mathcal{W}_{\sigma}^2 \psi(t) \ri\|_2 \leq \frac{C}{\sigma^2} \lf\| \mathcal{K}_{\beta',\sigma}^2 \psi(t) \ri\|_2 \lf\| \mathcal{W}_{\sigma} \psi(t) \ri\|_2 \leq \frac{C}{\sigma^2} \lf\| \mathcal{W}_{\sigma} \psi(t) \ri\|_2 ,
	}
	where we have bounded the $ L^{\infty} $ norm of $ \mathcal{W}_{\sigma} $ by $ C/\sigma^2 $ and estimated
	\beq
		\lf\| \mathcal{K}_{\beta',\sigma}^2 \psi(t) \ri\|_2 = \lf\| \mathcal{K}_{\beta',\sigma}^2 \psi(s) \ri\|_2\leq C \lf[ \big\| \lf( -\Delta \ri)^2 \psi \big\|_2	+ \lf\| \mathcal{W}_{\sigma}^2 \psi \ri\|_2 \ri] = \OO(1) + \OO\big( \sigma^{a - \frac{5}{2}} \big).
	\eeq
	
	The idea is to start with a preliminary bound on $  \lf\| \mathcal{W}_{\sigma} \psi(t) \ri\|_2 $ and then refine recursively through \eqref{eq: function 0}, \eqref{eq: derivative 0} and \eqref{eq: second derivative}. The starting point is the bound
	\beq
		\label{eq: starting estimate}
		\lf\| \mathcal{W}_{\sigma}  e^{-i \mathcal{K}_{\beta',\sigma} {(t - s)}} \psi \ri\|_2 = \OO(\sigma^{-2} (t-s)),
	\eeq
	which can be proven as follows:  using the $L^{\infty}$ bound on $ \mathcal{W}_{\sigma} $ in \eqref{eq: derivative}, we get
	\beq
		\partial_t \lf\| \mathcal{W}_{\sigma}  \psi(t) \ri\|_2^2 \leq \frac{C}{\sigma^{2}} \lf\| \mathcal{K}_{\beta',\sigma} \psi(t) \ri\|_2 \lf\| \mathcal{W}_{\sigma}  \psi(t) \ri\|_2,
	\eeq
	yielding $ \partial_t \lf\| \mathcal{W}_{\sigma}  \psi(t) \ri\|_2 \leq C \sigma^{-2} $, which implies \eqref{eq: starting estimate} via \eqref{eq: function 0} and the condition $ t -s \gg \sigma $. Plugging \eqref{eq: starting estimate} into \eqref{eq: second derivative} and combining it with \eqref{eq: function 0} and \eqref{eq: derivative 0}, we find
	\beq
		\lf\| \mathcal{W}_{\sigma}  \psi(t) \ri\|^2_2 = \mathcal{O}(\sigma^{2a  -1}) + \mathcal{O}\big( \sigma^{a - \frac{5}{2}} (t-s) \big) + \mathcal{O}\big( \sigma^{-4} (t-s)^3 \big) = \mathcal{O}\big( \sigma^{-4} (t-s)^3 \big).
	\eeq
	If now we plug the above in place of \eqref{eq: starting estimate} into \eqref{eq: second derivative}, we end up with the bound $ \lf\| \mathcal{W}_{\sigma}  \psi(t) \ri\|^2_2 = \mathcal{O}\big( \sigma^{-4} (t-s)^{7/2} \big) $, {\it i.e.}, at each step we improve the dependence on $ t -s $. After a direct check, one realizes that after $ N \in \N $ steps of the bootstrap, the exponent of $(t-s) $ in the estimate above reads $ 2 + \sum_{k = 0}^{N-1} \frac{1}{2^k} $, which yields \eqref{eq: Wsigma bound}, after suitably large number of repetitions of the argument.
\end{proof}
	
We are now in position to prove the last estimate on the Yoshida approximation for the dynamics generated
by $ \mathcal{K}_{\beta(t),\sigma} $.
	
	\begin{proposition}[Step function approximation of $ \mathcal{U}_{\sigma}(t,s) $]
  		\label{lem: approx 2}
  		\mbox{}	\\
  		Let $ \beta  \in C(\R) $ and let $ \psi \in \mathcal{D}  $ with $ \left\| \psi \right\|_2 = 1 $. Then, for any finite $ s < t \in \R $ and for any $ 0 < \epsilon < 1 $, there exists a constant $ C > 0 $ independent of $ t,s $
  		\begin{equation}
    			\label{eq: approx 2}
    			\left\| \left( \mathcal{V}_{\sigma, n}(t,s) -  \mathcal{U}_{\sigma}(t,s) \right) \psi \right\|_2^2 = \OO\big(n^{-6 + \epsilon} \sigma^{-2} \big).
 	 	\end{equation}
	\end{proposition}

\begin{proof}
	By a direct application of \cref{lem: strong est} with $ t = t_1 $, we get
	\beq
		\left\| \left( \mathcal{V}_{\sigma, n}(t_1,s) -  \mathcal{U}_{\sigma}(t_1,s) \right) \psi \right\|^2_2  = \OO\big( n^{-8 + \epsilon} \sigma^{-2}\big)
	\eeq
	by differentiability of $ \beta(t) $. Note the condition $ n \sigma \ll 1 $ inherited from $ t_1-s \gg \sigma $ in the statement of \cref{lem: strong est}. Furthermore, we claim that, for any $ j \in \lf\{1, \ldots, n-1 \ri\} $,
	\beq
		\left\| \left( \mathcal{V}_{\sigma, n}(t_{j+1},s) -  \mathcal{U}_{\sigma}(t_{j+1},s) \right) \psi \right\|_2 \leq \left\| \left( \mathcal{V}_{\sigma, n}(t_{j},s) -  \mathcal{U}_{\sigma}(t_{j},s) \right) \psi \right\|_2 + C n^{-4 + \epsilon/2} \sigma^{-1},
	\eeq
	and the result then follows by a trivial recursive argument. In order to show that the above estimate holds true, we write
	\bmln{
		{\left\| \left(  \mathcal{U}_{\sigma}(t_{j+1},s) - \mathcal{V}_{\sigma, n}(t_{j+1},s) \right) \psi \right\|_2}	\\
		\leq \left\| \left(\mathcal{U}_{\sigma}(t_{j+1},s) -  \mathcal{U}_{\sigma}(t_{j+1},t_j)\mathcal{V}_{n,\sigma}(t_{j},s) \ri) \psi \ri\|_2 + \lf\| \left( \mathcal{U}_{\sigma}(t_{j+1},t_j)\mathcal{V}_{n,\sigma}(t_{j},s) -  \mathcal{V}_{\sigma, n}(t_{j+1},s) \right) \psi \ri\|_2 \\
		\leq  \left\| \left(\mathcal{U}_{\sigma}(t_{j},s) - \mathcal{V}_{\sigma, n}(t_{j},s)  \ri) \psi \ri\|_2 + \lf\| \left( \mathcal{U}_{\sigma}(t_{j+1},t_j) -  \mathcal{V}_{\sigma, n}(t_{j+1},t_j)  \right) \psi(t_j) \ri\|_2	\\
		\leq 	\left\| \left(\mathcal{U}_{\sigma}(t_{j},s) - \mathcal{V}_{\sigma, n}(t_{j},s)  \ri) \psi \ri\|_2 + \lf\| \left( \mathcal{U}_{\sigma}(t_{j+1},t_j) -  \mathcal{V}_{\sigma, n}(t_{j+1},t_j)  \right) \chi_m \ri\|_2 + C m^{-2/3}
	}
	where we have set $ \psi(t_j) : = \mathcal{V}_{n,\sigma}(t_{j},s) \psi $ for short and applied once more \cref{lem: approx smooth}. Now,  we can use \cref{lem: strong est} to bound the second term in the expression above and choose $ m \gg n^{6 - 3\epsilon} \sigma^{3/2} $ to get that the sum of last two terms above is bounded by $ C n^{-4 + \epsilon/2} \sigma^{-1} $. 
\end{proof}
        
We now complete the proof of the main result in this Section.

\begin{proof}[Proof of \cref{pro:approx}]
  The idea is to prove the result in three steps: we first replace $ \mathcal{U}_{\mathrm{eff}}(t,s) $
  with its Yoshida approximants; the resulting dynamics is then generated by a time-independent point
  interaction and, as such, can be approximated by $ \mathcal{V}_{n,\sigma}(t,s) $, i.e., the dynamics
  generated by $ \mathcal{K}_{\beta_j,\sigma} $ in the corresponding interval; finally, we undo the step function
  approximation of $ \beta(t) $ and obtain $ \mathcal{U}_{\sigma}(t,s) $.
          
  To prove strong convergence of unitary operators, it is sufficient to prove weak
    convergence on a dense subset. This can be easily proved as follows. Let
    $V_n \xrightarrow[n\to \infty]{\mathrm{w}} V$ on a dense subset $\mathscr{D}\subset \mathscr{H}$. In addition, given
    $\psi\in \mathscr{H}$, let us denote by $ \lf\{ \psi_m \ri\}_{m\in \mathbb{N}}$ its approximation in $\mathscr{D}$. Then
    \begin{equation*}
      \begin{split}
        \textstyle\frac{1}{2} \left\|  (V-V_n)\psi  \right\|_{}^2 \leqslant \left\| (V-V_n)\psi_m  \right\|_{}^2 + \left\| (V-V_n)(\psi-\psi_m)  \right\|_{}^2\\\leqslant \left\langle \psi_m\left|\left(2-V^{*}V_n-V^{*}_n V \right)\psi_m\right. \right\rangle_2 + 4 \left\| \psi-\psi_m \right\|_{2}^2\\
        \leqslant 2\Re \left\langle  V\psi_m\left| \left(V-V_n \right)\psi_m\right. \right\rangle_2+4 \left\|  \psi-\psi_m \right\|_{2}^2
      \end{split}
    \end{equation*}
    that converges to zero as $n\to \infty$, since $m$ can be chosen arbitrarily large. Hence, it is sufficient
    to prove the convergence for all $\psi,\phi\in \mathcal{D} $.

  Now, for any $ \psi,\phi\in \mathcal{D} $ and for all $ t,s \in \mathbb{R} $ with $ s < t $:
  \begin{multline}
    \left| \left\langle \phi\left|\left( \mathcal{U}_{\mathrm{eff}}(t,s) - \mathcal{U}_{\sigma}(t,s) \right) \psi\right. \right\rangle \right| \leqslant \left| \left\langle \phi\left|\left( \mathcal{U}_{\mathrm{eff}}(t,s) - \mathcal{V}_{n}(t,s) \right) \psi\right. \right\rangle \right|	\\
    + \left| \left\langle \phi\left|\left( \mathcal{V}_{n}(t,s) - \mathcal{V}_{n,\sigma}(t,s) \right)\psi\right. \right\rangle \right| + \left| \left\langle \phi\left|\left( \mathcal{V}_{n,\sigma}(t,s) - \mathcal{U}_{\sigma}(t,s) \right) \psi\right. \right\rangle \right|.
  \end{multline}
  Hence, using the results proven in \cref{lem: approx 1,lem: approx 3,lem: approx
    2}, respectively, we get
  \beq
    \left| \left\langle \phi\left|\left( \mathcal{U}_{\mathrm{eff}}(t,s) - \mathcal{U}_{\sigma}(t,s) \right) \psi\right. \right\rangle \right|  = \mathcal{O}(n^5 \sigma^{2}) + n^{-1} o_{\sigma}(1) + o_n(1) + \OO\big(n^{-3+\epsilon/2} \sigma^{-1} \big)
  \eeq
  provided that $\sigma n \ll 1$. If we now optimize the first and last terms, {\it i.e.}, we pick
  \begin{equation}
    n = n_{\sigma} = \sigma^{-3/(8 - \epsilon/2)} \xrightarrow[\sigma \to 0]{} + \infty,
  \end{equation}
  for $ \epsilon > 0 $ small enough the condition $ n \sigma \ll 1 $ is satisfied, and
  \begin{displaymath}
    \left| \left\langle \phi\left|\left( \mathcal{U}_{\mathrm{eff}}(t,s) - \mathcal{U}_{\sigma}(t,s) \right) \psi\right. \right\rangle \right| = \mathcal{O}\big(\sigma^{1/8 - C \epsilon} \big)  + o_{\sigma}(1) \xrightarrow[\sigma \to 0]{} 0.
  \end{displaymath}
\end{proof}
        
\section{Convergence of Fluctuations}
\label{sec:conv-fluct}

In this Sect.\ we prove the quasi-classical convergence, in strong topology, of the unitary operator of
microscopic coherent quantum fluctuations, perturbing the quasi-classical solution. The key idea is to use
coherent states that, in the quasi-classical limit, are ``singular enough'' to produce an effective point
interaction. The strong convergence of fluctuations is sufficient to prove the strong convergence of
evolved particle observables given in \cref{thm:eff}. Let us start with some preliminary definitions and
remarks.

In order to have a more compact notation for the two phonon fields, we introduce a single boson field
  encompassing both. As it is well known, $\Gamma_{\mathrm{sym}}(\mathfrak{H})\otimes \Gamma_{\mathrm{sym}}(\mathfrak{H})\cong \Gamma_{\mathrm{sym}}(\mathfrak{H}\oplus
  \mathfrak{H})$. With this identification, we can introduce vector creation and annihilation operators
\begin{equation}
  \label{eq:3}
  \mathbf{a}_{\varepsilon}^{\sharp}(\bm{\eta})=a_{\varepsilon}^{\sharp}(\eta_1)+b_{\varepsilon}^{\sharp}(\eta_2)\; , \quad \forall \bm{\eta}=(\eta_1,\eta_2)\in \mathfrak{H}\oplus \mathfrak{H}\; ;
\end{equation}
and analogously the second quantization
\begin{equation}
  \label{eq:4}
  \mathrm{d}\bm{\Gamma}_{\varepsilon}(\mathbf{h})=\mathrm{d}\Gamma_{\varepsilon}^{(a)}(h_1)+\mathrm{d}\Gamma_{\varepsilon}^{(b)}(h_2)\; ,
\end{equation}
where $h_1,h_2$ are self-adjoint operators on $\mathfrak{H}$. Hence it follows that, defining
  $\bm{\lambda}_{\mathbf{x}}=(\lambda_{\mathbf{x}}^{(a)},\lambda_{\mathbf{x}}^{(b)})$, and
  $\bm{\omega}_{\varepsilon}=(\omega,\frac{\kappa}{\varepsilon})$, the Hamiltonian $H_{\varepsilon}$ can be rewritten in the compact form:
\begin{equation}
  \label{eq:7}
  H_{\varepsilon}=-\Delta+\mathrm{d}\bm{\Gamma}_{\varepsilon}(\bm{\omega}_{\varepsilon})+\mathbf{a}_{\varepsilon}(\bm{\lambda}_{\mathbf{x}})+\mathbf{a}_{\varepsilon}^{\dagger}(\bm{\lambda}_{\mathbf{x}})\; .
\end{equation}
Let us remark that the form factor
  $\bm{\lambda}_{\mathbf{x}}(\mathbf{k})=e^{i \mathbf{k}\cdot\mathbf{x}}(\lambda_0(\mathbf{k}),k^{-1})$ has
  the following important property, as first remarked in \cite{lieb1997cmp}:
\begin{align}\label{lambdacond}
  &\bm{\lambda}_\mathbf{x}  = \bm{\lambda}_{>,\mathbf{x}} + \bm{\lambda}_{<,\mathbf{x}}\;; \\
  & \bm{\lambda}_{<,\mathbf{x}} \in L^{\infty}(\mathbb{R}^3; \mathfrak{H}\oplus \mathfrak{H})\;; \\
  & \bm{\lambda}_{>,\mathbf{x}} = [-i\nabla, \bm{\xi}_\mathbf{x}], \qquad \bm{\xi}_\mathbf{x} \in L^{\infty}(\mathbb{R}^{3}; (\mathfrak{H}\oplus \mathfrak{H})^3)\;.
\end{align}
{More precisely, for all $r> 0$, it is possible to make the splitting
  $\bm{\lambda}_{<}=(\lambda_{1}, \lambda_{2,<})$, $\bm{\xi}=(0,\xi_{2,r})$ in a way such that
  $\lVert \xi_{2,r}  \rVert_{L^{\infty}(\mathbb{R}^{3}; (\mathfrak{H})^3)}^{}\xrightarrow[r\to
  +\infty]{}0$  and $\lVert \lambda_{2,<}  \rVert_{L^{\infty}(\mathbb{R}^{3}; \mathfrak{H})}^{}\xrightarrow[r\to
  +\infty]{} +\infty$.}

{Throughout the rest of the paper, whenever a Fock space estimate is
  used, it will be a direct application of the following basic estimate,
  whose well-known proof stems combining a direct calculation and the
  canonical commutation relations: for any couple of positive self-adjoint
  operators $\bm{\tau}=(\tau_1,\tau_2)$ on $ \mathfrak{H} \oplus \mathfrak{H}$, for any $\mathbf{f},\mathbf{g}$ such that
  $\mathbf{f},\mathbf{g},\bm{\tau}^{-\frac{1}{2}}\mathbf{f},\bm{\tau}^{-\frac{1}{2}}\mathbf{g}\in
  L^{\infty}\bigl(\mathbb{R}^3; (\mathfrak{H}\oplus \mathfrak{H})^3\bigr)$, and for any $0<\varepsilon\leq 1$,}
\begin{multline}
  \label{eq:2}
  {\lf\| \lVert \lf( \mathbf{a}_{\varepsilon}(\mathbf{f})+\mathbf{a}^{\dagger}_{\varepsilon}(\mathbf{g}) \ri) \lf( \mathrm{d}\Gamma_{\varepsilon}(\bm{\tau})+1 \ri)^{-\frac{1}{2}}  \ri\rVert_{}^{}\leq C \lf( \lf\| \bm{\tau}^{-\frac{1}{2}}\mathbf{f} \ri\|_{L^{\infty}(\mathbb{R}^3; (\mathfrak{H}\oplus \mathfrak{H})^3)}^{}+ \lf\| \bm{\tau}^{-\frac{1}{2}}\mathbf{g}  \ri\|_{L^{\infty}(\mathbb{R}^3; (\mathfrak{H}\oplus \mathfrak{H})^3)}^{} \ri.} \\
  {\lf. +\varepsilon \lf\|\mathbf{g}  \ri\|_{L^{\infty}(\mathbb{R}^3; (\mathfrak{H}\oplus \mathfrak{H})^3)}^{} \ri)\; ,}
\end{multline}
{where the norm on the l.h.s.\ is the operator norm on $\mathscr{H}$. As
  a straightforward application, making use of \eqref{lambdacond} this
  inequality yields that $H_{\varepsilon}$ is a densely defined quadratic form on
  $\mathscr{D}[H_0]$: there exist $A>0$ such that for all $\Theta\in
  \mathscr{D}[H_0]$, and for all $\delta>0$, there exists $B_{\delta}>0$ such that,
  uniformly w.r.t. $\varepsilon\in (0,1)$,}
\begin{equation}
  \label{eq:13}
  {\meanlrlr{\Theta}{H_{\varepsilon}}{\Theta}_{\mathscr{H}}\leq A \lf(\delta \lf\| \bm{\lambda}_{<} \ri\|_{L^{\infty}(\mathbb{R}^3; \mathfrak{H}\oplus \mathfrak{H})}^{} + \lf\| \xi_{2,r} \ri\|_{L^{\infty}(\mathbb{R}^3; (\mathfrak{H})^3)}^{} \ri) \meanlrlr{\Theta}{H_0}{\Theta}_{\mathscr{H}}+ B_{\delta} \lf\|  \Theta  \ri\|_{\mathscr{H}}^2\; ,}
\end{equation}
{where $B_{\delta}$ depends additionally on $\lVert \bm{\lambda}_{<}
  \rVert_{L^{\infty}(\mathbb{R}^3; \mathfrak{H}\oplus \mathfrak{H})}^{}$ and $\lVert \xi_{2,r}
  \rVert_{L^{\infty}(\mathbb{R}^3; \mathfrak{H}^3)}^{}$. Furthermore, it is possible to choose
  $\delta>0$ and $r>0$ in the above bound such that}
\begin{equation}
  \label{eq:14}
  {A \lf(\delta \lf\| \bm{\lambda}_{<} \ri\|_{L^{\infty}(\mathbb{R}^3; \mathfrak{H}\oplus \mathfrak{H})}^{} + \lf\| \xi_{2,r} \ri\|_{L^{\infty}(\mathbb{R}^3; (\mathfrak{H})^3)}^{} \ri)<1\; ,}
\end{equation}
{so that the quadratic form induced by $H_{\varepsilon}$ is bounded from below and
  symmetric by KLMN's theorem. As already remarked, the above argument is
  completely analogous to the one commonly given to prove self-adjointness of
  the optical polaron model (see, \emph{e.g.}, \cite{frank2014lmp}).}

The operator of fluctuations for coherent states is now defined as follows. Let
  $\mathbf{W}_{\varepsilon}(\frac{\bm{\alpha}_{\varepsilon}(t)}{i\varepsilon})=W^{(a)}_{\varepsilon}(\frac{\alpha_{\varepsilon}}{i\varepsilon})W_{\varepsilon}^{(b)}(\frac{\beta_{\varepsilon}}{i\varepsilon})$
be the Weyl operator appearing in the definition of $\Xi_{\varepsilon}$ (recall \eqref{eq:coherent}). Then, the
operator of \emph{microscopic fluctuations} $Z_{\varepsilon}(t,s)$ is defined by
\begin{equation}
  \label{eq:8}
  Z_{\varepsilon}(t,s):=\mathbf{W}_{\varepsilon}^{\dagger}\left(\tfrac{\bm{\alpha}_{\varepsilon}(t)}{i\varepsilon}\right)e^{-iH_{\varepsilon}(t-s)}\mathbf{W}_{\varepsilon}\left(\tfrac{\bm{\alpha}_{\varepsilon}}{i\varepsilon}\right)\; ,
\end{equation}
where $ \bm{\alpha}_{\varepsilon}(t)=(\alpha_{\varepsilon}(t),\beta_{\varepsilon}(t)) $ satisfies the classical dynamics \eqref{eq: alpha evolution}, {\it i.e.}, recalling \eqref{eq:alpha},
\beq
\bm{\alpha}_{\varepsilon}(\kv; t) : = \lf(\alpha_{\varepsilon}(\kv), e^{-i\kappa(t-s)}\beta_{\varepsilon}(\kv) \ri).
\eeq
The strong limit of $ Z_{\eps}(t,s) $ as $\varepsilon\to 0$, of
which we prove the existence, is the operator of \emph{quasi-classical fluctuations} $Z(t,s)$, defined by
\begin{equation}
  \label{eq:9}
  Z(t,s):= \mathcal{U}_{\mathrm{eff}}(t,s)\otimes e^{-i(t-s) \mathrm{d}\bm{\Gamma}((0,\kappa))}= \mathcal{U}_{\mathrm{eff}}(t,s)\otimes 1\otimes e^{-i \kappa(t-s)\mathrm{d}\Gamma^{(b)}(1)} \;,
\end{equation}
where $ \mathrm{d}\bm{\Gamma},\mathrm{d} \Gamma^{(b)} $ stand for the second-quantized operators defined in
terms of the unscaled creation and annihilation operators $ a^{\sharp},b^{\sharp} $ (recall
\eqref{eq:dgamma}). Therefore, $ Z $ is a factorized unitary operator on the full space $\mathscr{H}$, and
its factorization is due to the chosen scaling in $H_{\varepsilon}$, that guarantees no quasi-classical
back-reaction on the field.

The relation between the fluctuation operators $ Z_{\varepsilon}(t,s), Z(t,s) $ and the full Heisenberg evolution of a particle observable $ \mathcal{B} $ can be derived as follows. Let $\mathcal{B}$ be a bounded particle operator, acting on $L^2(\mathbb{R}^3)$, and let
$\mathcal{B}_{\varepsilon}(t,s)$ and $\mathcal{B}(t,s)$ be the associated microscopic and quasi-classical Heisenberg
evolved operators as defined in \eqref{eq:ev:start} and \eqref{eq:ev:obs}, respectively, then, for any $t,s\in \mathbb{R}$ and any $\psi\in L^2 (\mathbb{R}^3)$,
\begin{multline*} 
    \left\| \bigl(\mathcal{B}(t,s)-\mathcal{B}_{\varepsilon}(t,s)\bigr)\psi  \right\|_{L^2(\mathbb{R}^3)}^2 =  \left\| \bigl(B(t,s)-B_{\varepsilon}(t,s)\bigr)\psi \otimes \Xi_{\varepsilon,s}  \right\|_{\mathscr{H}}^2 \\
    = \left\langle \psi \otimes \Xi_{\varepsilon,s}\left| \left|{B}(t,s) \right|^2 + \left| {B}_{\varepsilon}(t,s)\right|^2 - {B}^*(t,s){B}_{\varepsilon}(t,s) -{B}^*_{\varepsilon}(t,s){B}(t,s) \right| \psi \otimes \Xi_{\varepsilon,s}\right\rangle_{\mathscr{H}}.
 \end{multline*}
If we now plug in the definition of $ Z $ and $ Z_{\varepsilon} $ and use \eqref{eq:coherent}, we can write
\begin{multline*} 
	   \left\| \bigl(\mathcal{B}(t,s)-\mathcal{B}_{\varepsilon}(t,s)\bigr)\psi  \right\|_{L^2(\mathbb{R}^3)}^2 = \left\langle \psi \otimes \Omega\left| \mathbf{W}_{\varepsilon}^{\dagger} \left( {B}(t,s) - {B}_{\varepsilon}(t,s) \right)^2 \mathbf{W}_{\varepsilon} \right| \psi \otimes \Omega\right\rangle_{\mathscr{H}} \\
	   = \left\langle \psi \otimes \Omega\left| Z_{\varepsilon}^\dagger \mathcal{B}^2 \otimes 1 Z_{\varepsilon} + {Z}_{\varepsilon}^\dagger \mathcal{B}^2 \otimes 1 {Z}_{\varepsilon} - {Z}_{\varepsilon}^{\dagger} \mathcal{B} \otimes 1 {Z}_{\varepsilon} Z_{\varepsilon}^{\dagger} \mathcal{B} \otimes 1 Z_{\varepsilon} - Z_{\varepsilon}^{\dagger} \mathcal{B} \otimes 1 Z_{\varepsilon} {Z}_{\varepsilon}^{\dagger} \mathcal{B} \otimes 1 {Z}_{\varepsilon}\right| \psi \otimes \Omega\right\rangle_{\mathscr{H}}	
 \end{multline*}	   
where we used the following property of Weyl operators (recall \eqref{eq: alpha evolution})
\begin{equation}
	e^{-i (t - s)\mathrm{d} \bm{\Gamma}(\mathbf{h}) } \mathbf{W}_{\varepsilon}\left( \tfrac{\bm{\alpha}_{\varepsilon}}{i \varepsilon} \right) = \mathbf{W}_{\varepsilon}\left( \tfrac{e^{-i(t-s) \mathbf{h}} \bm{\alpha}_{\varepsilon}}{i \varepsilon} \right)  e^{-i(t - s) \mathrm{d} \bm{\Gamma}(\mathbf{h}) } = \mathbf{W}_{\varepsilon}\left( \tfrac{\bm{\alpha}_{\varepsilon}(t)}{i \varepsilon} \right)  e^{-i(t-s) \mathrm{d} \bm{\Gamma}(\mathbf{h})}\; ,
\end{equation}
which, combined with \eqref{eq:dgamma}, {\it i.e.}, $ \varepsilon^{-1} \mathrm{d} \bm{\Gamma}_{\varepsilon}(\mathbf{h}) =
\mathrm{d} \bm{\Gamma}(\mathbf{h}) $, implies, for any particle operator $ \mathcal{A} $,
\begin{displaymath}
	\mathbf{W}_{\varepsilon}^{\dagger}\left(\textstyle\frac{\bm{\alpha}_{\varepsilon}(t)}{i\varepsilon}\right) \left( \mathcal{U}_{\mathrm{eff}}^{\dagger} \otimes 1 \right) \mathcal{A} \otimes 1 \left( \mathcal{U}_{\mathrm{eff}} \otimes 1 \right) \mathbf{W}_{\varepsilon}\left(\textstyle\frac{\bm{\alpha}_{\varepsilon}}{i\varepsilon}\right) = {Z}^{\dagger} \mathcal{A} \otimes 1 {Z},
\end{displaymath}
and we omit the dependence on $t$ and $s$ of $Z(t,s)$ and $Z_{\varepsilon}(t,s)$ for convenience. If we now exploit the identity
\begin{multline*} 
	Z_{\varepsilon}^\dagger \mathcal{B}^2 \otimes 1 Z_{\varepsilon} + {Z}^\dagger \mathcal{B}^2 \otimes 1 {Z} - {Z}^{\dagger} \mathcal{B} \otimes 1 {Z} Z_{\varepsilon}^{\dagger} \mathcal{B} \otimes 1 Z_{\varepsilon} - Z_{\varepsilon}^{\dagger} \mathcal{B} \otimes 1 Z_{\varepsilon} {Z}^{\dagger} \mathcal{B} \otimes 1 {Z} \\
	= (Z^{\dagger}_{\varepsilon} -{Z}^{\dagger})\mathcal{B}^2\otimes 1 Z_{\varepsilon} 
	+ {Z}^{\dagger} \mathcal{B}^2\otimes 1({Z}-Z_{\varepsilon}) - {Z}^{\dagger} \mathcal{B}\otimes 1 {Z} (Z_{\varepsilon}^{\dagger}-{Z}^{\dagger}) \mathcal{B}\otimes 1 Z_{\varepsilon} + \mbox{h.c.}
 \end{multline*}
in the expression above, we deduce that
\begin{equation}
	\label{eq: estimate Z}
	\left\| \bigl(\mathcal{B}(t,s)-\mathcal{B}_{\varepsilon}(t,s)\bigr)\psi  \right\|_{L^2(\mathbb{R}^3)}^2     \leqslant 4 \left\| \mathcal{B}  \right\|^{{2}} \left\| \psi  \right\|_2^{} \left\| \Omega  \right\|_{\Gamma_{s}\otimes \Gamma_{s}}^{} \left\| ({Z}-Z_{\varepsilon})\psi\otimes \Omega  \right\|_{\mathscr{H}}^{}.
\end{equation}

The estimate \eqref{eq: estimate Z} makes apparent the link between the Heisenberg evolution of the observable $ \mathcal{B} $ and the fluctuation operators $ Z_{\varepsilon} $, $ Z $. More precisely, the convergence stated in \cref{thm:eff} is equivalent to show strong convergence of the fluctuation operator $ Z_{\varepsilon} $ to $ Z $, which we are going to prove in next \cref{pro:1}. Note that a similar quasi-classical limit of coherent state fluctuations has been studied (with less singular
  coherent states that do not carry any $\varepsilon$-dependence on the classical solution $\bm{\alpha}$) for the renormalized Nelson model in \cite{ginibre2006ahp}.

\subsection{Strong convergence}

In order to prove strong convergence of $Z_{\varepsilon}$, we make use of an intermediate auxiliary operator
\begin{equation}
  \label{eq:1}
  {Y}_\sigma(t,s)=\mathcal{U}_\sigma(t,s)\otimes e^{-i(t-s)\mathrm{d}\bm{\Gamma}((0,\kappa))}\; ,
\end{equation}
with $ \sigma = \sigma_\varepsilon $ properly chosen, {\it i.e.,} such that
\begin{equation}
	\varepsilon^{1/{j_*}} \ll \sigma \ll 1,
\end{equation} 
and where $\mathcal{U}_\sigma$ is the two-parameter group defined in \cref{sec:approx}, and generated by
$\mathcal{K}_{\beta(t),\sigma}$. The precise
result is given in the following

\begin{proposition}[Convergence of fluctuations]
  \label{pro:1}
  	\mbox{}	\\
  For any $\Phi\in \mathscr{H}$,
  \begin{equation}
    \lim_{\varepsilon\to 0} \, \left\| \bigl(Z(t,s)-Z_{\varepsilon}(t,s)\bigr)\Phi  \right\|_{\mathscr{H}}^{}=0\;.
  \end{equation}
\end{proposition}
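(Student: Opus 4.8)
The plan is to interpolate with the auxiliary cocycle $ Y_{\sigma}(t,s) $ defined above, with $ \sigma = \sigma_\varepsilon $ in the window $ \varepsilon^{1/j_*} \ll \sigma_\varepsilon \ll 1 $, and to treat separately the two pieces of $ Z(t,s) - Z_{\varepsilon}(t,s) = \big( Z(t,s) - Y_{\sigma_\varepsilon}(t,s) \big) + \big( Y_{\sigma_\varepsilon}(t,s) - Z_{\varepsilon}(t,s) \big) $. The first piece is immediate: since $ Y_{\sigma}(t,s) - Z(t,s) = \big( \mathcal{U}_{\sigma}(t,s) - \mathcal{U}_{\mathrm{eff}}(t,s) \big) \otimes e^{-i(t-s)\mathrm{d}\bm{\Gamma}((0,\kappa))} $ and $ \sigma_\varepsilon \to 0 $, \cref{pro:approx} applies — the strength $ \beta(t) = \gamma_a + \gamma_b \cos \kappa(t-s) $ is of class $ C^1 $ — so $ \mathcal{U}_{\sigma_\varepsilon}(t,s) \xrightarrow{\mathrm{s}} \mathcal{U}_{\mathrm{eff}}(t,s) $, and tensoring with the fixed unitary $ e^{-i(t-s)\mathrm{d}\bm{\Gamma}((0,\kappa))} $ preserves strong convergence; hence $ \big( Z(t,s) - Y_{\sigma_\varepsilon}(t,s) \big)\Phi \to 0 $ for all $ \Phi \in \mathscr{H} $.

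For the second piece I would argue by Duhamel on the generators. By density it suffices to take $ \Phi = \psi \otimes \Psi $ with $ \psi \in \mathcal{D} $ (in particular $ \psi \in H^2(\mathbb{R}^3) $, $ \psi(0) = 0 $) and $ \Psi $ in the finite-excitation subspace of $ \Gamma_{\mathrm{sym}}(\mathfrak{H} \oplus \mathfrak{H}) $. Both $ Z_{\varepsilon}(\cdot,\cdot) $ and $ Y_{\sigma}(\cdot,\cdot) $ are two-parameter unitary cocycles, so
\[
\big( Z_{\varepsilon}(t,s) - Y_{\sigma}(t,s) \big)\Phi = i \int_s^t \mathrm{d}\tau \; Z_{\varepsilon}(t,\tau) \big( \mathcal{G}_{\varepsilon}(\tau) - \mathcal{G}^{Y}_{\sigma}(\tau) \big) Y_{\sigma}(\tau,s) \Phi,
\]
where $ \mathcal{G}^{Y}_{\sigma}(\tau) = \mathcal{K}_{\beta(\tau),\sigma} \otimes 1 + 1 \otimes \mathrm{d}\bm{\Gamma}((0,\kappa)) $ generates $ Y_{\sigma} $ and $ \mathcal{G}_{\varepsilon}(\tau) = \mathbf{W}^{\dagger}_{\varepsilon}\big( \tfrac{\bm{\alpha}_{\varepsilon}(\tau)}{i\varepsilon} \big) H_{\varepsilon} \mathbf{W}_{\varepsilon}\big( \tfrac{\bm{\alpha}_{\varepsilon}(\tau)}{i\varepsilon} \big) + i\big( \partial_\tau \mathbf{W}^{\dagger}_{\varepsilon}\big( \tfrac{\bm{\alpha}_{\varepsilon}(\tau)}{i\varepsilon} \big) \big) \mathbf{W}_{\varepsilon}\big( \tfrac{\bm{\alpha}_{\varepsilon}(\tau)}{i\varepsilon} \big) $ generates $ Z_{\varepsilon} $; by unitarity of $ Z_{\varepsilon}(t,\tau) $ it is enough to bound $ \big\| \big( \mathcal{G}_{\varepsilon}(\tau) - \mathcal{G}^{Y}_{\sigma}(\tau) \big) Y_{\sigma}(\tau,s) \Phi \big\| $ uniformly and to let $ \varepsilon \to 0 $.

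The core of the argument is the structure of $ \mathcal{G}_{\varepsilon}(\tau) - \mathcal{G}^{Y}_{\sigma}(\tau) $ after expanding the Weyl conjugation $ \mathbf{W}^{\dagger}_{\varepsilon} \mathbf{a}_{\varepsilon}(\mathbf{f}) \mathbf{W}_{\varepsilon} = \mathbf{a}_{\varepsilon}(\mathbf{f}) + \langle \mathbf{f} | \bm{\alpha}_{\varepsilon}(\tau) \rangle $. The pure c-number (classical-action) terms cancel; the field-linear terms produced by conjugating $ \mathrm{d}\bm{\Gamma}_{\varepsilon}(\bm{\omega}_{\varepsilon}) $ cancel the phase-derivative term on the optic sector, precisely because $ \bm{\alpha}_{\varepsilon}(\tau) $ solves the classical equations $ i\dot\alpha = 0 $, $ i\dot\beta = \kappa\beta $ of \eqref{eq: alpha evolution}; and the c-number $ 2\Re \langle \bm{\lambda}_{\mathbf{x}} | \bm{\alpha}_{\varepsilon}(\tau) \rangle $ generated by conjugating the interaction reproduces, via the explicit choices \eqref{eq:alpha} together with the expansions of $ c_\varepsilon, c'_\varepsilon $, the regularized resonant potential $ \mathcal{W}_{\beta(\tau),\sigma_\varepsilon}(\mathbf{x}) $ up to a remainder $ \mathcal{R}_\varepsilon $ that is negligible relative to the particle energy. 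What is left over is thus (i) the remainder $ \mathcal{R}_\varepsilon $, (ii) the quantum interaction fluctuation $ \mathbf{a}_{\varepsilon}(\bm{\lambda}_{\mathbf{x}}) + \mathbf{a}^{\dagger}_{\varepsilon}(\bm{\lambda}_{\mathbf{x}}) $, and (iii) the acoustic field-linear term $ \mathbf{a}_{\varepsilon}(\omega\alpha_\varepsilon) + \mathbf{a}^{\dagger}_{\varepsilon}(\omega\alpha_\varepsilon) $, which is not cancelled because the acoustic classical amplitude is frozen. On $ Y_{\sigma}(\tau,s)\Phi = \big( \mathcal{U}_{\sigma}(\tau,s)\psi \big) \otimes \big( e^{-i\kappa(\tau-s)\mathrm{d}\Gamma^{(b)}(1)} \Psi \big) $ the field part still carries finitely many excitations and the particle part is in $ H^2(\mathbb{R}^3) $, so I would estimate (i)--(iii) by the basic Fock inequality \eqref{eq:2} combined with the splitting \eqref{lambdacond}: the scaled field operators give a gain $ \varepsilon^{1/2} $, while the high-momentum part $ \bm{\lambda}_{>,\mathbf{x}} = [-i\nabla, \bm{\xi}_{\mathbf{x}}] $ forces the particle gradient $ \nabla \mathcal{U}_{\sigma}(\tau,s)\psi $ into the bound, which grows like $ \mathcal{O}(\sigma_\varepsilon^{-1}) $ — controlled via Gronwall from $ \mathcal{K}_{\beta(\tau),\sigma} \geq -\Delta - C\sigma^{-2} $ and $ \| \partial_\tau \mathcal{W}_{\beta(\tau),\sigma} \|_\infty = \mathcal{O}(\sigma^{-1}) $ — while (iii) and $ \mathcal{R}_\varepsilon $ bring in further negative powers of $ \sigma_\varepsilon $ and powers of the growth exponent $ M $ from \eqref{eq:11}. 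Every piece is then $ \mathcal{O}( \varepsilon^{1/2} \sigma_\varepsilon^{-p} ) $ with $ p $ fixed by $ M $, and the window $ \varepsilon^{1/j_*} \ll \sigma_\varepsilon $ with $ j_* = 6 + 8M $ is calibrated exactly so that $ \varepsilon^{1/2}\sigma_\varepsilon^{-p} \to 0 $; integrating over the bounded interval $ [s,t] $ adds only a factor $ t - s $. This gives $ \big( Z_{\varepsilon}(t,s) - Y_{\sigma_\varepsilon}(t,s) \big)\Phi \to 0 $ on the dense set, hence — by density — for all $ \Phi \in \mathscr{H} $, and combined with the first step concludes the proof.

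I expect the main obstacle to be estimate (ii). Since $ \bm{\lambda}_{\mathbf{x}} \notin \mathfrak{H} $, the fluctuation $ \mathbf{a}^{\sharp}_{\varepsilon}(\bm{\lambda}_{\mathbf{x}}) Y_{\sigma}(\tau,s)\Phi $ cannot be estimated naively and one is forced through the commutator representation of $ \bm{\lambda}_{>,\mathbf{x}} $ to control $ \nabla \mathcal{U}_{\sigma}(\tau,s)\psi $, whose norm unavoidably deteriorates like $ \sigma_\varepsilon^{-1} $ because the approximating potential has sup-norm of order $ \sigma_\varepsilon^{-2} $. Keeping the $ \varepsilon^{1/2} $ gain ahead of these negative powers of $ \sigma_\varepsilon $ — and of the further ones coming from $ \| \omega\alpha_\varepsilon \| $ and from a priori bounds on $ \mathcal{U}_{\sigma}(\tau,s)\psi $ beyond $ H^1 $ — is exactly what pins down $ j_* = 6 + 8M $, and requires the $ r $-dependent refinement of the splitting \eqref{lambdacond} together with a careful tracking of the growth bound \eqref{eq:11}.
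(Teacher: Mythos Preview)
Your overall strategy --- interpolating through $Y_{\sigma_\varepsilon}$ and handling the first piece via \cref{pro:approx} --- matches the paper. The gap is in the second piece, and it is structural rather than quantitative.

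You write a \emph{strong} Duhamel formula and then propose to bound $\big\| \big( \mathcal{G}_{\varepsilon}(\tau) - \mathcal{G}^{Y}_{\sigma}(\tau) \big) Y_{\sigma}(\tau,s) \Phi \big\|_{\mathscr{H}}$. But the difference of generators contains $\mathbf{a}^{\dagger}_{\varepsilon}(\bm{\lambda}_{\mathbf{x}})$, and since $\lambda^{(b)}_{\mathbf{x}}(\mathbf{k})=e^{i\mathbf{k}\cdot\mathbf{x}}/k\notin L^2(\mathbb{R}^3)$, this is \emph{not} a densely defined operator: applied to any Fock vector it produces something outside $\mathscr{H}$, regardless of how regular the particle part of $Y_{\sigma}(\tau,s)\Phi$ is. The commutator representation $\bm{\lambda}_{>,\mathbf{x}}=[-i\nabla,\bm{\xi}_{\mathbf{x}}]$ does not rescue a strong-norm estimate, because expanding $\nabla\cdot\mathbf{a}^{\dagger}(\bm{\xi}_{\mathbf{x}})Y_{\sigma}\Phi$ by the product rule regenerates $\mathbf{a}^{\dagger}(\bm{\lambda}_{>,\mathbf{x}})Y_{\sigma}\Phi$; the identity is only useful as a \emph{quadratic-form} device that lets you integrate the gradient by parts onto the other slot of the pairing.

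This is precisely why the paper works weakly: it differentiates $\langle \Phi\,|\,(Y_{\sigma}^{\dagger}Z_{\varepsilon}-1)\Phi\rangle$ (using that $Z_{\varepsilon}$ is only \emph{weakly} differentiable on $\mathscr{D}[H_0]$, \cref{leps}), and in the singular piece $\widetilde{T}_2=\int|\langle \nabla\cdot\mathbf{a}^{\dagger}(\bm{\xi}_{\mathbf{x}})Y_{\sigma}\Phi\,|\,Z_{\varepsilon}\Phi\rangle|$ moves the gradient onto $Z_{\varepsilon}(\tau,s)\Phi$. That forces one to control $\|\nabla Z_{\varepsilon}(\tau,s)\Phi\|$, i.e.\ the kinetic energy propagated by the \emph{full microscopic} evolution $e^{-iH_{\varepsilon}(t-s)}$ after Weyl conjugation. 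This is the content of \cref{laplacecontrol}, a Gronwall estimate for $e^{itH_{\varepsilon}}(-\Delta)e^{-itH_{\varepsilon}}$ in terms of $H_+ + H_I$, which then has to be unfolded through the Weyl translation (producing the $\|\bm{\omega}\bm{\alpha}_{\varepsilon}\|^2\sim\sigma^{-3-4M}$ contribution that actually fixes $j_*$). Your sketch only ever controls $\nabla\mathcal{U}_{\sigma}(\tau,s)\psi$ --- the analogue of the paper's \cref{lemma:1}, giving $\widetilde{T}_1\leqslant C\sigma^{-3}$ --- and never $\nabla Z_{\varepsilon}\Phi$; without the latter the argument cannot close. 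Incidentally, your claimed growth $\|\nabla\mathcal{U}_{\sigma}\psi\|=\mathcal{O}(\sigma^{-1})$ is also too optimistic: the Gronwall in \cref{lemma:1} gives $\mathcal{O}(\sigma^{-3})$, and it is the combination of this with the $\sigma^{-3-4M}$ from the Weyl-translated $\widetilde{T}_2$ that yields $j_*=6+8M$.
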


Before proving the above result, let us give some preparatory lemmas. The first one is a well-known result
about Weyl operators (for an explicit proof, see, \emph{e.g.}, \cite{falconi2012phd}). For the last
property, it is useful to remark that $ \bm{\alpha}_{\varepsilon}(\mathbf{k}; t) $ is by construction a rapidly
decaying function in Schwartz class and therefore its scalar product by any polynomial, as, \emph{e.g.},
$\omega$, is always bounded.

	\begin{lemma}\label{wderivative}
		\mbox{}	\\
  		The Weyl operators $\mathbf{W}_{\varepsilon}\left(\tfrac{\bm{\alpha}_{\varepsilon}(t)}{i \varepsilon}\right)$ are strongly differentiable with respect to
  $t\in \mathbb{R}$ on $\mathscr{D}\bigl(\mathrm{d}\bm{\Gamma}(\bm{1})^{1/2}_{\varepsilon}\bigr)$, with derivative given by
  		\begin{align}
  			  i \partial_t  \mathbf{W}_{\varepsilon}\left(\tfrac{\bm{\alpha}_{\varepsilon}(t)}{i \varepsilon}\right) & =\frac{i}{\varepsilon} \Bigl(\mathbf{a}_{\varepsilon}^{\dagger}(\dot{\bm{\alpha}}_{\varepsilon}) - \mathbf{a}_{\varepsilon}(\dot{\bm{\alpha}}_{\varepsilon}) - i\, \Im \left\langle \bm{\alpha}_{\varepsilon}\left|\dot{\bm{\alpha}}_{\varepsilon}\right. \right\rangle_{\mathfrak{H}\oplus \mathfrak{H}} \Bigr) \, \mathbf{W}_{\varepsilon}\left(\tfrac{\bm{\alpha}_{\varepsilon}(t)}{i \varepsilon}\right)	\nonumber \\
  			&= \frac{i}{\varepsilon}  \mathbf{W}_{\varepsilon}\left(\tfrac{\bm{\alpha}_{\varepsilon}(t)}{i \varepsilon}\right) \Bigl(\mathbf{a}_{\varepsilon}^{\dagger}(\dot{\bm{\alpha}}_{\varepsilon}) - \mathbf{a}_{\varepsilon}(\dot{\bm{\alpha}}_{\varepsilon}) + i\, \Im \left\langle \bm{\alpha}_{\varepsilon}\left|\dot{\bm{\alpha}}_{\varepsilon}\right. \right\rangle_{\mathfrak{H}\oplus \mathfrak{H}} \Bigr)\; .
 		\end{align}
                In addition, $\mathbf{W}_{\varepsilon}(\zv)$ maps $\mathscr{D}[H_0]$ and
                  $\mathscr{D}\bigl(\mathrm{d}\bm{\Gamma}_{\varepsilon}((h,h'))^{1/2}\bigr)$ into themselves for any
                  self-adjoint and positive $h,h'$, provided that $ \zv \in \dom(h) \oplus \dom\lf(h'\ri) $.
	\end{lemma}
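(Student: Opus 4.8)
The plan is to split the statement into its two halves --- the mapping properties of $\mathbf{W}_{\varepsilon}(\mathbf z)$, and the strong differentiability of $t\mapsto\mathbf{W}_{\varepsilon}\bigl(\tfrac{\bm{\alpha}_{\varepsilon}(t)}{i\varepsilon}\bigr)$ with the explicit generator --- and to reduce both to two standard ingredients: the Weyl cocycle identity in the $\varepsilon$-scaled representation, $\mathbf{W}_{\varepsilon}(\bm{\xi})\mathbf{W}_{\varepsilon}(\bm{\eta})=e^{-i\varepsilon\,\Im\langle\bm{\xi}|\bm{\eta}\rangle}\mathbf{W}_{\varepsilon}(\bm{\xi}+\bm{\eta})$, together with the conjugation rule $\mathbf{W}_{\varepsilon}^{\dagger}(\bm{\xi})\bigl(\mathbf{a}_{\varepsilon}(\mathbf f)+\mathbf{a}_{\varepsilon}^{\dagger}(\mathbf f)\bigr)\mathbf{W}_{\varepsilon}(\bm{\xi})=\mathbf{a}_{\varepsilon}(\mathbf f)+\mathbf{a}_{\varepsilon}^{\dagger}(\mathbf f)-2\varepsilon\,\Im\langle\mathbf f|\bm{\xi}\rangle$; and the Fock-space bound \eqref{eq:2} specialized to $\bm{\tau}=\bm 1$, which controls field operators relatively to $(\mathrm{d}\bm{\Gamma}_{\varepsilon}(\bm 1)+1)^{1/2}$. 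A fully detailed derivation of these elementary facts is the one in \cite{falconi2012phd}.

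For the mapping property I would first note that $\mathbf{W}_{\varepsilon}(\mathbf z)$ acts only on the Fock factor and hence commutes with $-\Delta$, so it is enough to show it preserves $\mathscr{D}\bigl(\mathrm{d}\bm{\Gamma}_{\varepsilon}((h,h'))^{1/2}\bigr)$ for positive self-adjoint $h,h'$ with $\mathbf z\in\mathscr{D}(h)\oplus\mathscr{D}(h')$; the case of $\mathscr{D}[H_0]$ then follows taking $(h,h')=(\omega,\kappa/\varepsilon)$ and using that $\bm{\alpha}_{\varepsilon}(t)$ is Schwartz, hence in $\mathscr{D}(\bm{\omega}_{\varepsilon})$. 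Conjugating $\mathrm{d}\bm{\Gamma}_{\varepsilon}(\bm h)$ term by term on the finite-particle core via the rule above gives the algebraic identity $\mathbf{W}_{\varepsilon}^{\dagger}(\mathbf z)\,\mathrm{d}\bm{\Gamma}_{\varepsilon}(\bm h)\,\mathbf{W}_{\varepsilon}(\mathbf z)=\mathrm{d}\bm{\Gamma}_{\varepsilon}(\bm h)+\mathbf{a}_{\varepsilon}(\bm h\mathbf z)+\mathbf{a}_{\varepsilon}^{\dagger}(\bm h\mathbf z)+\varepsilon\langle\mathbf z|\bm h\mathbf z\rangle$; since $\bm h\mathbf z\in\mathfrak{H}\oplus\mathfrak{H}$ precisely because $\mathbf z\in\mathscr{D}(\bm h)$, the perturbation is form-bounded with respect to $\mathrm{d}\bm{\Gamma}_{\varepsilon}(\bm h)+1$ with arbitrarily small relative bound by \eqref{eq:2}, so by KLMN the right-hand side is a nonnegative self-adjoint operator with form domain $\mathscr{D}\bigl(\mathrm{d}\bm{\Gamma}_{\varepsilon}(\bm h)^{1/2}\bigr)$. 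As unitary conjugation preserves form domains, this is exactly the assertion.

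For the differentiability, I would fix $\Psi\in\mathscr{D}\bigl(\mathrm{d}\bm{\Gamma}_{\varepsilon}(\bm 1)^{1/2}\bigr)$ and, using the Weyl cocycle, write the difference quotient at $t$ as
\[
  \frac{\mathbf{W}_{\varepsilon}\bigl(\tfrac{\bm{\alpha}_{\varepsilon}(t+h)}{i\varepsilon}\bigr)-\mathbf{W}_{\varepsilon}\bigl(\tfrac{\bm{\alpha}_{\varepsilon}(t)}{i\varepsilon}\bigr)}{h}\,\Psi
  =\frac{e^{i\theta_h}\,\mathbf{W}_{\varepsilon}\bigl(\tfrac{\bm{\alpha}_{\varepsilon}(t+h)-\bm{\alpha}_{\varepsilon}(t)}{i\varepsilon}\bigr)-1}{h}\;\Phi,
  \qquad \Phi:=\mathbf{W}_{\varepsilon}\bigl(\tfrac{\bm{\alpha}_{\varepsilon}(t)}{i\varepsilon}\bigr)\Psi,
\]
with $\theta_h=\tfrac1\varepsilon\Im\langle\bm{\alpha}_{\varepsilon}(t+h)-\bm{\alpha}_{\varepsilon}(t)\,|\,\bm{\alpha}_{\varepsilon}(t)\rangle$ the real phase from the cocycle; by the mapping property just established, $\Phi\in\mathscr{D}\bigl(\mathrm{d}\bm{\Gamma}_{\varepsilon}(\bm 1)^{1/2}\bigr)$. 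Since $t\mapsto\bm{\alpha}_{\varepsilon}(t)$ is $C^1$ into $\mathfrak{H}\oplus\mathfrak{H}$ with derivative $\dot{\bm{\alpha}}_{\varepsilon}$ given by \eqref{eq: alpha evolution}, one has $h^{-1}(\bm{\alpha}_{\varepsilon}(t+h)-\bm{\alpha}_{\varepsilon}(t))\to\dot{\bm{\alpha}}_{\varepsilon}(t)$ in $\mathfrak{H}\oplus\mathfrak{H}$ and $h^{-1}\theta_h\to\tfrac1\varepsilon\Im\langle\dot{\bm{\alpha}}_{\varepsilon}(t)\,|\,\bm{\alpha}_{\varepsilon}(t)\rangle$. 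Writing $\mathbf{W}_{\varepsilon}(\mathbf w)-1=i\int_0^1 e^{is(\mathbf{a}_{\varepsilon}(\mathbf w)+\mathbf{a}_{\varepsilon}^{\dagger}(\mathbf w))}\bigl(\mathbf{a}_{\varepsilon}(\mathbf w)+\mathbf{a}_{\varepsilon}^{\dagger}(\mathbf w)\bigr)\,\diff s$ with $\mathbf w=\tfrac{\bm{\alpha}_{\varepsilon}(t+h)-\bm{\alpha}_{\varepsilon}(t)}{i\varepsilon}$, dividing by $h$, and using that $\bigl(\mathbf{a}_{\varepsilon}(\,\cdot\,)+\mathbf{a}_{\varepsilon}^{\dagger}(\,\cdot\,)\bigr)(\mathrm{d}\bm{\Gamma}_{\varepsilon}(\bm 1)+1)^{-1/2}$ is bounded uniformly in its argument by \eqref{eq:2}, together with strong continuity of $s\mapsto e^{is(\mathbf{a}_{\varepsilon}+\mathbf{a}_{\varepsilon}^{\dagger})(\mathbf w)}$, I would pass to the limit $h\to0$ on $\Phi\in\mathscr{D}\bigl(\mathrm{d}\bm{\Gamma}_{\varepsilon}(\bm 1)^{1/2}\bigr)$ and obtain $i\,\bigl(\mathbf{a}_{\varepsilon}+\mathbf{a}_{\varepsilon}^{\dagger}\bigr)\bigl(\tfrac{\dot{\bm{\alpha}}_{\varepsilon}}{i\varepsilon}\bigr)\Phi$ for the bracketed factor; reassembling the $\theta_h$-term and the $i/(i\varepsilon)$ prefactors gives precisely the first displayed formula in the statement, and conjugating the field operators back through $\mathbf{W}_{\varepsilon}\bigl(\tfrac{\bm{\alpha}_{\varepsilon}(t)}{i\varepsilon}\bigr)$ with the rule above --- which only flips the sign of the constant --- yields the second form.

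The only genuinely delicate step is justifying the term-by-term conjugation of the unbounded operators $\mathrm{d}\bm{\Gamma}_{\varepsilon}(\bm h)$ and $\mathbf{a}_{\varepsilon}^{\dagger}(\bm h\mathbf z)$ through $\mathbf{W}_{\varepsilon}(\mathbf z)$: the identities should be proved first on finite-particle vectors, where they are purely algebraic, and then extended using the essential self-adjointness of $\mathrm{d}\bm{\Gamma}_{\varepsilon}(\bm h)$ on that core together with the bound \eqref{eq:2} and a density argument. Everything else is routine bookkeeping with the $\varepsilon$-scaling, and no idea beyond the standard theory of Weyl operators is involved; a complete treatment of all these points can be found in \cite{falconi2012phd}.
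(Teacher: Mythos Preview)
Your proposal is correct and in fact supplies more detail than the paper itself: the paper does not give a proof of this lemma at all, simply remarking that it is a well-known result about Weyl operators and referring to \cite{falconi2012phd} for an explicit argument, together with the observation that $\bm{\alpha}_{\varepsilon}(\,\cdot\,;t)$ is Schwartz so that the domain conditions on $\mathbf z$ are met. Your sketch --- Weyl cocycle plus the integral representation of $\mathbf W_\varepsilon(\mathbf w)-1$ for the derivative, and the conjugation identity for $\mathrm d\bm\Gamma_\varepsilon(\bm h)$ combined with \eqref{eq:2} and KLMN for the mapping property --- is exactly the standard route one finds in that reference, so there is nothing to compare.
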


Another useful result is the weak differentiability of $Z_{\varepsilon}$ in a suitable dense domain.

\begin{lemma}
  \label{leps}
  \mbox{}	\\
  The operator $Z_{\varepsilon}(t,s)$ is weakly differentiable, with respect to both $t\in \mathbb{R}$ and $s\in \mathbb{R}$, on
  $\mathscr{D}[H_0]\cap \mathscr{D}(\mathrm{d}\bm{\Gamma}_{\varepsilon}(\mathbf{1}))$. The weak derivatives have the
  following form:
  \begin{equation*}
    \begin{split}
      &i\partial_tZ_{\varepsilon}(t,s)= L_{\varepsilon}(t)Z_{\varepsilon}(t,s)\; ,\\
      &i\partial_sZ_{\varepsilon}(t,s)= -Z_{\varepsilon}(t,s)L_{\varepsilon}(s)\; ,
    \end{split}
  \end{equation*}
  where $\bigl(L_{\varepsilon}(t)\bigr)_{t\in \mathbb{R}}$ is the family of operators
  \begin{equation*}
    L_{\varepsilon}(t) = -\Delta\otimes 1+ 2 \Re \left\langle \bm{\lambda}_{\mathbf{x}}|\bm{\alpha}_\varepsilon(t) \right\rangle_{\mathfrak{H}\oplus \mathfrak{H}} + \mathbf{a}_{\varepsilon}(\bm{\lambda}_{\mathbf{x}} ) + \mathbf{a}_{\varepsilon}^{\dagger}(\bm{\lambda}_{\mathbf{x}} ) + \mathrm{d} \bm{\Gamma}_{\varepsilon}(\bm{\omega}_{\varepsilon})\; .
  \end{equation*}
\end{lemma}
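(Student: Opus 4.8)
The plan is to read off the two weak derivatives by differentiating the three--factor product $Z_{\varepsilon}(t,s)=\mathbf{W}_{\varepsilon}^{\dagger}\bigl(\tfrac{\bm{\alpha}_{\varepsilon}(t)}{i\varepsilon}\bigr)\,e^{-iH_{\varepsilon}(t-s)}\,\mathbf{W}_{\varepsilon}\bigl(\tfrac{\bm{\alpha}_{\varepsilon}}{i\varepsilon}\bigr)$ factor by factor, on matrix elements $\bigl\langle\Phi\,\big|\,Z_{\varepsilon}(t,s)\Psi\bigr\rangle$ with $\Phi,\Psi$ ranging in $\mathscr{D}[H_0]\cap\mathscr{D}(\mathrm{d}\bm{\Gamma}_{\varepsilon}(\mathbf{1}))$. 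The first preparatory point is that this set is left invariant by each of the three factors: invariance under the Weyl operators is contained in the invariance statement at the end of \cref{wderivative} (applied with the pair $\bm{\omega}_{\varepsilon}$ and with $(1,1)$, recalling that $\bm{\alpha}_{\varepsilon}(\cdot;t)$ is of Schwartz class and hence lies in $\mathscr{D}(\bm{\omega}_{\varepsilon})$); invariance of the form domain $\mathscr{D}[H_0]=\mathscr{D}[H_{\varepsilon}]$ under $e^{-iH_{\varepsilon}(t-s)}$ is standard; and invariance of the number--operator domain under the Schr\"{o}dinger flow is obtained by a propagation--of--moments estimate — one differentiates $\bigl\langle\psi_{\tau}\,\big|\,(\mathrm{d}\bm{\Gamma}_{\varepsilon}(\mathbf{1})+1)\psi_{\tau}\bigr\rangle$ along $\psi_{\tau}=e^{-iH_{\varepsilon}\tau}\psi$, uses $[\mathrm{d}\bm{\Gamma}_{\varepsilon}(\mathbf{1}),H_{\varepsilon}]=\varepsilon\bigl(\mathbf{a}_{\varepsilon}^{\dagger}(\bm{\lambda}_{\mathbf{x}})-\mathbf{a}_{\varepsilon}(\bm{\lambda}_{\mathbf{x}})\bigr)$ together with the decomposition \eqref{lambdacond} and the Fock bound \eqref{eq:2} to control the right--hand side by $\mathrm{d}\bm{\Gamma}_{\varepsilon}(\mathbf{1})+H_0+1$, and closes by Gr\"{o}nwall, exploiting that $H_{\varepsilon}$ is conserved and form--dominates $H_0$.

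With invariance in hand, the Leibniz rule gives, as a weak identity on $\mathscr{D}[H_0]\cap\mathscr{D}(\mathrm{d}\bm{\Gamma}_{\varepsilon}(\mathbf{1}))$,
\[
i\partial_{t}Z_{\varepsilon}(t,s)=\Bigl[\bigl(i\partial_{t}\mathbf{W}_{\varepsilon}^{\dagger}(\tfrac{\bm{\alpha}_{\varepsilon}(t)}{i\varepsilon})\bigr)\mathbf{W}_{\varepsilon}(\tfrac{\bm{\alpha}_{\varepsilon}(t)}{i\varepsilon})+\mathbf{W}_{\varepsilon}^{\dagger}(\tfrac{\bm{\alpha}_{\varepsilon}(t)}{i\varepsilon})\,H_{\varepsilon}\,\mathbf{W}_{\varepsilon}(\tfrac{\bm{\alpha}_{\varepsilon}(t)}{i\varepsilon})\Bigr]Z_{\varepsilon}(t,s),
\]
the first summand coming from \cref{wderivative} and the second from the displacement identities $\mathbf{W}_{\varepsilon}^{\dagger}(\zv)\,\mathbf{a}_{\varepsilon}^{\sharp}(f)\,\mathbf{W}_{\varepsilon}(\zv)=\mathbf{a}_{\varepsilon}^{\sharp}(f)+(\text{c-number})$ and $\mathbf{W}_{\varepsilon}^{\dagger}(\zv)\,\mathrm{d}\bm{\Gamma}_{\varepsilon}(\bm{\omega}_{\varepsilon})\,\mathbf{W}_{\varepsilon}(\zv)=\mathrm{d}\bm{\Gamma}_{\varepsilon}(\bm{\omega}_{\varepsilon})+i\varepsilon\bigl(\mathbf{a}_{\varepsilon}^{\dagger}(\bm{\omega}_{\varepsilon}\zv)-\mathbf{a}_{\varepsilon}(\bm{\omega}_{\varepsilon}\zv)\bigr)+(\text{c-number})$. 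Inserting $\zv=\tfrac{\bm{\alpha}_{\varepsilon}(t)}{i\varepsilon}$, the $\varepsilon^{-1}$ in the coherent parameter is compensated by the factor $\varepsilon$ produced by $[\mathrm{d}\bm{\Gamma}_{\varepsilon}(\bm{\omega}_{\varepsilon}),\mathbf{a}_{\varepsilon}^{\sharp}(\cdot)]$, so all extra contributions are $\mathcal{O}(1)$; invoking the classical equations of motion satisfied by $\bm{\alpha}_{\varepsilon}(t)$ (cf. \eqref{eq: alpha evolution}), the field--linear term $\propto\mathbf{a}_{\varepsilon}^{\sharp}(\bm{\omega}_{\varepsilon}\bm{\alpha}_{\varepsilon}(t))$ produced by conjugating the kinetic term is cancelled against the $\dot{\bm{\alpha}}_{\varepsilon}$--term in $i\partial_{t}\mathbf{W}_{\varepsilon}^{\dagger}$, the surviving c--numbers (the classical field energy $\langle\bm{\alpha}_{\varepsilon}(t)|\bm{\omega}_{\varepsilon}|\bm{\alpha}_{\varepsilon}(t)\rangle$ and the phase $\varepsilon^{-1}\Im\langle\bm{\alpha}_{\varepsilon}|\dot{\bm{\alpha}}_{\varepsilon}\rangle$) cancel as well, and the linear coupling leaves only the shift $2\Re\langle\bm{\lambda}_{\mathbf{x}}|\bm{\alpha}_{\varepsilon}(t)\rangle$, so that what remains is exactly $L_{\varepsilon}(t)=-\Delta\otimes1+2\Re\langle\bm{\lambda}_{\mathbf{x}}|\bm{\alpha}_{\varepsilon}(t)\rangle+\mathbf{a}_{\varepsilon}(\bm{\lambda}_{\mathbf{x}})+\mathbf{a}_{\varepsilon}^{\dagger}(\bm{\lambda}_{\mathbf{x}})+\mathrm{d}\bm{\Gamma}_{\varepsilon}(\bm{\omega}_{\varepsilon})$. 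The $s$--derivative is entirely parallel: only $e^{-iH_{\varepsilon}(t-s)}$ and, through its $e^{-i\kappa(t-s)}$--dependence, the Weyl argument $\bm{\alpha}_{\varepsilon}(t)$ depend on $s$, while $\mathbf{W}_{\varepsilon}(\tfrac{\bm{\alpha}_{\varepsilon}}{i\varepsilon})$ does not; collecting the two contributions and using the same displacement relations gives $i\partial_{s}Z_{\varepsilon}(t,s)=-Z_{\varepsilon}(t,s)L_{\varepsilon}(s)$, the relevant conjugation of $H_{\varepsilon}$ being now performed at the initial point, where by construction $\bm{\alpha}_{\varepsilon}(s)=\bm{\alpha}_{\varepsilon}$.

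I expect the main obstacle to be the rigorous justification of the termwise differentiation rather than the algebra: one has to (i) propagate the number--operator moment $\mathrm{d}\bm{\Gamma}_{\varepsilon}(\mathbf{1})$ along the flow of $H_{\varepsilon}$, which is delicate precisely because $\lambda^{(b)}_{\mathbf{x}}\notin\mathfrak{H}$ and $H_{\varepsilon}$ exists only as a quadratic form, so the splitting \eqref{lambdacond} and the bound \eqref{eq:2} must be invoked at every occurrence of the commutator $[\mathrm{d}\bm{\Gamma}_{\varepsilon}(\mathbf{1}),H_{\varepsilon}]$; and (ii) upgrade the formal Leibniz identity to genuine differentiability of $t\mapsto\langle\Phi|Z_{\varepsilon}(t,s)\Psi\rangle$, by writing difference quotients and passing to the limit via the strong continuity of the Weyl group supplied by \cref{wderivative} and of $\tau\mapsto e^{-iH_{\varepsilon}\tau}$ on $\mathscr{D}[H_0]$. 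A secondary, purely bookkeeping, task is to keep careful track of the powers of $\varepsilon$ in the displacement identities so that the cancellations producing $L_{\varepsilon}(t)$ are exact.
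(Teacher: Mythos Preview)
Your approach is essentially the paper's: differentiate the three-factor product using \cref{wderivative} for the Weyl factor, weak differentiability of $e^{-iH_{\varepsilon}\tau}$ on the form domain $\mathscr{D}[H_0]=\mathscr{D}[H_{\varepsilon}]$, the displacement action of Weyl operators on $\mathbf{a}_{\varepsilon}^{\sharp}$ and $\mathrm{d}\bm{\Gamma}_{\varepsilon}$, and the classical equation $i\partial_t\bm{\alpha}_{\varepsilon}(t)=\mathrm{diag}\{0,\kappa\}\,\bm{\alpha}_{\varepsilon}(t)$ to obtain the cancellations producing $L_{\varepsilon}(t)$.

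One simplification you overlook: the paper does \emph{not} need invariance of $\mathscr{D}(\mathrm{d}\bm{\Gamma}_{\varepsilon}(\mathbf{1}))$ under $e^{-iH_{\varepsilon}(t-s)}$, so your propagation-of-moments Gr\"{o}nwall argument is superfluous. Since the $t$-dependent Weyl factor sits on the left of the product, in the pairing
\[
\bigl\langle\Theta\,\big|\,Z_{\varepsilon}(t,s)\Phi\bigr\rangle=\bigl\langle\mathbf{W}_{\varepsilon}(\tfrac{\bm{\alpha}_{\varepsilon}(t)}{i\varepsilon})\Theta\,\big|\,e^{-iH_{\varepsilon}(t-s)}\mathbf{W}_{\varepsilon}(\tfrac{\bm{\alpha}_{\varepsilon}}{i\varepsilon})\Phi\bigr\rangle
\]
its strong derivative lands directly on $\Theta$, which is in $\mathscr{D}(\mathrm{d}\bm{\Gamma}_{\varepsilon}(\mathbf{1})^{1/2})$ by hypothesis; for the middle factor one only needs both entries of the pairing to lie in $\mathscr{D}[H_0]$, and \cref{wderivative} already supplies that Weyl operators preserve $\mathscr{D}[H_0]$. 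Thus only invariance of $\mathscr{D}[H_0]$ under the three factors is actually used, and the number-operator domain never has to be propagated through the interacting flow.
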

\begin{proof}
  Using \cref{wderivative} and the definition of $Z_{\varepsilon}(t,s)$, it is easy to see that it is possible to
  differentiate with respect to both $t$ and $s$ the quantity
  \begin{equation*}
    \left\langle \Theta\left| Z_{\varepsilon}(t,s)\Phi \right. \right\rangle_{\mathscr{H}}\; ,
  \end{equation*}
  for any $\Theta,\Phi\in \mathscr{D}[H_0]\subset \mathscr{D}\bigl(\mathrm{d}\bm{\Gamma}_{\varepsilon}((\omega,1))^{1/2}\bigr)$. Indeed,
    $e^{-i(t-s)H_{\varepsilon}}$ is weakly differentiable on $\mathscr{D}[H_{\varepsilon}]=\mathscr{D}[H_0]$, and
    $\mathbf{W}_{\varepsilon}(\cdot )$ is strongly differentiable on
    $\mathscr{D}\bigl(\mathrm{d}\bm{\Gamma}_{\varepsilon}(\mathbf{1})^{1/2}\bigr)$, and maps $\mathscr{D}[H_0]$ into
    itself. The explicit form of the derivative is given using again \cref{wderivative}, the action as
  translations of Weyl operators when acting on creation and annihilation operators, and the equation for
  the time derivative of $\bm{\alpha}_{\varepsilon}$, \emph{i.e.}, $i\partial_t\bm{\alpha}_{\varepsilon}(t)= \mathrm{diag}\lf\{ 0, \kappa \ri\} \bm{\alpha}_{\varepsilon}(t)$.
\end{proof}

The two final preparatory results are essentially Gronwall-type estimates for the time-evolved expectation
of the Laplace operator.

	\begin{lemma}\label{laplacecontrol}
		\mbox{}	\\
  		For any $t\in \mathbb{R}$, there exists a finite constant $C_t>0$ such that for every $\Phi \in \mathscr{D}[H_+]=\mathscr{D}[H_0]$,
  		\begin{equation}
    			\left\langle \Phi\left| e^{itH_{\varepsilon}} (-\Delta) e^{-itH_{\varepsilon}}\right|\Phi\right\rangle_{\mathscr{H}} \leqslant C_t \left( \left\langle \Phi\left|H_+  + H_I\right|\Phi\right\rangle_{\mathscr{H}} + \left\| \Phi \right\|^2_{\mathscr{H}} \right)\; ,
  		\end{equation}
  		where $H_+ := -\Delta + \mathrm{d} \bm{\Gamma}_{\varepsilon}((\omega,\kappa)) \geqslant 0$.
	\end{lemma}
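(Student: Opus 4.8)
The plan is to run a Gronwall argument on
\[
h(t) := \meanlrlr{\Phi_t}{H_+}{\Phi_t}_{\mathscr{H}}, \qquad \Phi_t := e^{-itH_\varepsilon}\Phi ,
\]
because $-\Delta \leq H_+$, so a bound $h(t) \leq C_t\bigl(\meanlrlr{\Phi}{H_+ + H_I}{\Phi}_{\mathscr{H}} + \lVert\Phi\rVert_{\mathscr{H}}^2\bigr)$ is exactly the assertion. As usual one first argues on a dense core contained in $\mathscr{D}(H_0)\cap\mathscr{D}\bigl(\mathrm{d}\bm{\Gamma}_\varepsilon(\mathbf{1})\bigr)$, on which all the manipulations below are rigorous thanks to \cref{wderivative,leps}, and then extends to every $\Phi\in\mathscr{D}[H_+]=\mathscr{D}[H_0]$ by density, using that $e^{-itH_\varepsilon}$ is bounded on the form domain $\mathscr{D}[H_0]$ and that the $H_+$-form is continuous for the $\mathscr{D}[H_0]$-norm.

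The crucial point — what lets the estimate survive the $\kappa/\varepsilon$ weight of the optic kinetic term — is to never expand $[H_I,-\Delta]$. Writing $H_+ = H_0 - \kappa\tfrac{1-\varepsilon}{\varepsilon}\,\mathrm{d}\Gamma^{(b)}_\varepsilon(1)$ and using $[H_0,H_+]=0$, on the core above $h$ is differentiable and
\[
\dot h(t) = i\,\meanlrlr{\Phi_t}{[H_I,H_+]}{\Phi_t} = i\,\meanlrlr{\Phi_t}{[H_I,H_0]}{\Phi_t} - i\kappa\tfrac{1-\varepsilon}{\varepsilon}\,\meanlrlr{\Phi_t}{[H_I,\mathrm{d}\Gamma^{(b)}_\varepsilon(1)]}{\Phi_t}.
\]
The first term is a total time derivative, since $[H_I,H_\varepsilon]=[H_I,H_0]$ and $H_\varepsilon$ generates the flow, so $i\,\meanlrlr{\Phi_t}{[H_I,H_0]}{\Phi_t} = -\tfrac{\mathrm{d}}{\mathrm{d}t}\meanlrlr{\Phi_t}{H_I}{\Phi_t}$. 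In the second term the commutation relations \eqref{eq:ccr} produce a compensating $\varepsilon$, namely $[H_I,\mathrm{d}\Gamma^{(b)}_\varepsilon(1)] = \varepsilon\bigl(b_\varepsilon(\lambda^{(b)}_{\mathbf{x}}) - b^{\dagger}_\varepsilon(\lambda^{(b)}_{\mathbf{x}})\bigr)$, so the prefactor collapses to $\kappa(1-\varepsilon)$, free of any $\varepsilon^{-1}$. Integrating in time,
\[
h(t) = h(0) - \bigl(\meanlrlr{\Phi_t}{H_I}{\Phi_t} - \meanlrlr{\Phi}{H_I}{\Phi}\bigr) - 2\kappa(1-\varepsilon)\int_0^t \Im\,\meanlrlr{\Phi_s}{b^{\dagger}_\varepsilon(\lambda^{(b)}_{\mathbf{x}})}{\Phi_s}\,\mathrm{d}s .
\]

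It then remains to estimate the two non-trivial pieces by $h$ alone, and this is where the splitting structure \eqref{lambdacond} enters. The infrared part $\lambda^{(b)}_{<,\mathbf{x}}$ is square-integrable with $\kappa^{-1/2}\lambda^{(b)}_{<,\mathbf{x}}$ bounded in $L^\infty$, so \eqref{eq:2} controls the corresponding $b^{\sharp}_\varepsilon$ by $\bigl(\mathrm{d}\Gamma^{(b)}_\varepsilon(\kappa)+1\bigr)^{1/2}\leq (H_++1)^{1/2}$; the ultraviolet part is $\lambda^{(b)}_{>,\mathbf{x}} = [-i\nabla,\xi_{2,r,\mathbf{x}}]$ with $\xi_{2,r,\mathbf{x}}$ bounded, and one integration by parts against $p=-i\nabla$ together with \eqref{eq:2} gives, uniformly in $\varepsilon$,
\[
\bigl\lvert\meanlrlr{\Phi_s}{b^{\dagger}_\varepsilon(\lambda^{(b)}_{\mathbf{x}})}{\Phi_s}\bigr\rvert \leq C\bigl(\meanlrlr{\Phi_s}{-\Delta}{\Phi_s}^{1/2}+1\bigr)\,\meanlrlr{\Phi_s}{\mathrm{d}\Gamma^{(b)}_\varepsilon(\kappa)+1}{\Phi_s}^{1/2} \leq C\bigl(h(s)+\lVert\Phi\rVert^2\bigr) .
\]
The same two ingredients show that $H_I$ is form-bounded relative to $H_+$ with bound $\delta<1$ once the splitting parameter $r$ is taken large (so $\lVert\xi_{2,r}\rVert_{L^\infty}$ is small), whence $\lvert\meanlrlr{\Phi_t}{H_I}{\Phi_t}\rvert \leq \delta\,h(t)+C_\delta\lVert\Phi\rVert^2$ and, at $t=0$, $h(0)=\meanlrlr{\Phi}{H_+}{\Phi}\leq C\bigl(\meanlrlr{\Phi}{H_++H_I}{\Phi}+\lVert\Phi\rVert^2\bigr)$. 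Inserting these bounds into the identity above, $(1-\delta)h(t)\leq (1+\delta)h(0)+C\lVert\Phi\rVert^2+C\int_0^t\bigl(h(s)+\lVert\Phi\rVert^2\bigr)\,\mathrm{d}s$, and Gronwall's inequality yields $h(t)\leq C_t\bigl(\meanlrlr{\Phi}{H_++H_I}{\Phi}+\lVert\Phi\rVert^2\bigr)$ with $C_t$ of at most exponential growth, which is the claim.

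The main obstacle is exactly the difficulty that the above argument circumvents: a direct Gronwall on $\meanlrlr{\Phi_t}{-\Delta}{\Phi_t}$ would force one to control $[H_I,-\Delta]$, which pulls a factor of momentum onto the already singular form factors — in particular $\mathbf{k}/k$ on the optic field — and the single-derivative gain of \eqref{lambdacond} does not suffice to tame such a second-order commutator; nor is the number operator $\mathrm{d}\bm{\Gamma}_\varepsilon(\mathbf{1})$ available, since the acoustic dispersion $\omega(k)=k$ is massless and $\mathrm{d}\bm{\Gamma}_\varepsilon(\mathbf{1})$ is not controlled by $H_+$. Working with $H_+$ instead, exploiting that $[H_I,H_0]$ contributes a total time derivative and that $[H_I,\mathrm{d}\Gamma^{(b)}_\varepsilon(1)]$ carries a compensating $\varepsilon$, reduces everything to the first-order Fock estimates of the type \eqref{eq:2}; what then remains is the routine bookkeeping of which form factors act on which Fock factor and of the $\varepsilon$-uniformity of the constants.
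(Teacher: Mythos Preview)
Your proof is correct and follows essentially the same route as the paper. Both arguments hinge on the identical observation: the commutator of $H_+ + H_I$ with $H_{\varepsilon}$ reduces to $\kappa(\varepsilon-1)\bigl(b^{\dagger}_{\varepsilon}(\lambda^{(b)}_{\mathbf{x}})-b_{\varepsilon}(\lambda^{(b)}_{\mathbf{x}})\bigr)$, the dangerous $\varepsilon^{-1}$ from the optic kinetic term being cancelled by the $\varepsilon$ from the commutation relations, after which the standard relative form bound of $b^{\sharp}_{\varepsilon}(\lambda^{(b)}_{\mathbf{x}})$ with respect to $H_+$ closes a Gronwall inequality. The only organizational difference is that the paper runs Gronwall directly on $P(t)=\langle\Phi_t|H_++H_I|\Phi_t\rangle$, whereas you track $h(t)=\langle\Phi_t|H_+|\Phi_t\rangle$ and recognize $i\langle[H_I,H_0]\rangle$ as the total time derivative $-\tfrac{\mathrm{d}}{\mathrm{d}t}\langle H_I\rangle$; after integration your identity is literally $P(t)=P(0)-2\kappa(1-\varepsilon)\int_0^t\Im\langle b^{\dagger}_{\varepsilon}\rangle\,\mathrm{d}s$, i.e.\ the paper's integral inequality before taking absolute values. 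One small remark: your appeal to \cref{wderivative,leps} for the differentiability step is misplaced (those concern Weyl operators and $Z_{\varepsilon}$); what you actually need, as the paper does, is to work first on $\mathscr{D}\bigl((H_{\varepsilon}+M)^{3/2}\bigr)$ so that the formal commutator manipulations are justified, and then pass to $\mathscr{D}[H_0]$ by density.
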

	
Before proving the above result, let us point out the more regular behavior as $ \varepsilon \to 0 $ of the operator $
H_+ $ compared with $ H_0 $, since the former has no prefactor $ \varepsilon^{-1} $ in front of the optic phonon's energy.

\begin{proof}
  Since $-\Delta\leqslant H_+$, adding and subtracting $H_I$, it is possible to write for any $\Phi\in \mathscr{D}[H_0]$
  \begin{multline} 
    \label{prop:firstineq}
      \left\langle \Phi\left| e^{itH_{\varepsilon}} (-\Delta) e^{-itH_{\varepsilon}}\right|\Phi\right\rangle_{\mathscr{H}} \leqslant \left\langle \Phi\left| e^{itH_{\varepsilon}} H_+ e^{-itH_{\varepsilon}}\right|\Phi\right\rangle_{\mathscr{H}} = \left\langle \Phi\left|e^{itH_{\varepsilon}} (H_+ +H_I) e^{-itH_{\varepsilon}}\right| \Phi\right\rangle_{\mathscr{H}} \\- \left\langle \Phi\left|e^{itH_{\varepsilon}}  H_I e^{-itH_{\varepsilon}}\right| \Phi \right\rangle_{\mathscr{H}}.
	 \end{multline}
	
  Let us control the two terms of the last sum separately. The form associated to $ H_I$ is a small
  perturbation of the one associated to $H_+$, as it has been proved in other papers dealing with the
  polaron Hamiltonian (see, \emph{e.g.}, \cite{lieb1997cmp,frank2014lmp}). Indeed, there exist $  A \in (0,1)$
  and $B >0$, such that
  \begin{equation}
    \label{inequality2}
    \left| \left\langle \Phi\left| e^{itH_{\varepsilon}} H_I e^{-itH_{\varepsilon}}\right| \Phi\right\rangle_{\mathscr{H}}  \right| \leqslant A \left\langle \Phi\left| e^{itH_{\varepsilon}} H_+ e^{-itH_{\varepsilon}} \right|\Phi \right\rangle_{\mathscr{H}} + B \left\| \Phi \right\|_{\mathscr{H}}^2 \; .
  \end{equation}
  
  It remains to bound the first term of the sum. We will use a Gronwall and density argument. Without loss
  of generality, we can assume that $t>0$. Let $-M$, $M \geqslant 0 $, be a lower bound for $H_{\varepsilon}$, and suppose that $\Phi\in
  \mathscr{D}\bigl((H_{\varepsilon}+M)^{3/2}\bigr)$, the latter being a dense domain. Since $e^{-itH_{\varepsilon}}$ is weakly
  differentiable on $\mathscr{D}[H_0]\supset \mathscr{D}\bigl((H_{\varepsilon}+M)^{3/2}\bigr)$, and maps
  $\mathscr{D}\bigl((H_{\varepsilon}+M)^{3/2}\bigr)$ into itself for any $t\in \mathbb{R}$, it is possible to write
  \begin{multline*} 
      \left\langle \Phi\left| e^{itH_{\varepsilon}} (H_+  +H_I) e^{-itH_{\varepsilon}} \right|\Phi \right\rangle_{\mathscr{H}} \leqslant \left\langle \Phi\left|  H_+  +H_I \right| \Phi\right\rangle_{\mathscr{H}} \\
      + \left| \int_0^t \mathrm{d} s \: \partial_s  \left\langle \Phi \left| e^{isH_{\varepsilon}}(H_+  +H_I) e^{-isH_{\varepsilon}}\right| \Phi \right\rangle_{\mathscr{H}} \right| \\
      \leqslant \left\langle \Phi\left|  H_+  +H_I \right| \Phi\right\rangle_{\mathscr{H}} + \left| \int_0^t \mathrm{d} s \:   \left\langle \Phi \left| e^{isH_{\varepsilon}} \left[H_+  +H_I, H_0 + H_I \right] e^{-isH_{\varepsilon}}\right| \Phi \right\rangle_{\mathscr{H}} \right|.
   \end{multline*}
  Now, the commutator, as a quadratic form, yields
  \begin{equation*}
    \left[H_+  +H_I,H_0+H_I \right]= \left[H_+, H_I\right] + \left[H_I, H_0\right] = (\varepsilon-1) \lf(b^{\dagger}_{\varepsilon}\big(\lambda^{(b)}_{\mathbf{x}}\big) - b_{\varepsilon}\big(\lambda^{(b)}_{\mathbf{x}}\big) \ri)\; .
  \end{equation*}
  It is well-known \cite[Prop. A.2]{correggi2017ahp} that $b^{\dagger}_{\varepsilon}(\lambda^{(b)}_{\mathbf{x}}) -
  b_{\varepsilon}(\lambda^{(b)}_{\mathbf{x}})$ is small in the sense of quadratic forms w.r.t.\ $H_+$, with $A' \in (0,
  1/4)$ and $B' > 0$, so that, for any $\Theta\in \mathscr{D}[H_+]$,
  \begin{equation}\label{caextimate}
    \left| \left\langle \Theta\left| \left( b^{\dagger}_{\varepsilon}\big(\lambda^{(b)}_{\mathbf{x}}\big) - b_{\varepsilon}\big(\lambda^{(b)}_{\mathbf{x}}\big) \right) \Theta\right. \right\rangle_{\mathscr{H}} \right| \leqslant 2 \left| \left\langle \Theta\left|b^{\dagger}_{\varepsilon}\big(\lambda^{(b)}_{\mathbf{x}}\big) \Theta\right. \right\rangle_{\mathscr{H}} \right| \leqslant  2 A' \left\langle \Theta\left|H_+\right| \Theta\right\rangle_{\mathscr{H}} + 2B' \left\| \Theta \right\|^2_{\mathscr{H}}\; .
  \end{equation}
  By KLMN theorem, the above bound yields an estimate from below for $H_+ + H_I$: for any $\Theta\in
  \mathscr{D}[H_+]$,
  \begin{equation}
     \left\langle \Theta\left| H_+  + H_I \right|\Theta\right\rangle_{\mathscr{H}}  \geqslant (1-A') \left\langle \Theta\left|  H_+\right| \Theta\right\rangle_{\mathscr{H}} - B' \left\| \Theta \right\|^2_{\mathscr{H}} 
  \end{equation}
  or, equivalently,
  \begin{equation}\label{klmnextimate}
    A' \left\langle \Theta\left| H_+\right|\Theta\right\rangle_{\mathscr{H}}+ \left\langle \Theta\left| H_I \right| \Theta\right\rangle_{\mathscr{H}}  + B' \left\| \Theta \right\|^2_{\mathscr{H}} \geqslant 0\; .
  \end{equation}
  Now, we use \eqref{klmnextimate} in \eqref{caextimate} to obtain (remembering that $3A'<1$)
  \begin{multline*} 
   \left| \left\langle \Theta\left| \left( b^{\dagger}_{\varepsilon}\big(\lambda^{(b)}_{\mathbf{x}}\big) - b_{\varepsilon}\big(\lambda^{(b)}_{\mathbf{x}}\big) \right) \Theta\right. \right\rangle_{\mathscr{H}} \right| \leqslant   3A' \left\langle \Theta\left| H_+\right|\Theta\right\rangle_{\mathscr{H}}+ \left\langle \Theta\left| H_I \right| \Theta\right\rangle_{\mathscr{H}}  + 3B' \left\| \Theta \right\|^2_{\mathscr{H}} \\
   \leqslant \left\langle \Theta\left| H_+  + H_I \right|\Theta\right\rangle_{\mathscr{H}}  +3B' \left\|\Phi \right\|^2_{\mathscr{H}}\; .
   \end{multline*}
  Therefore, using the fact that $|1-\varepsilon| < 1$, we get
  \begin{equation*}
    \begin{split}
        \left\langle \Phi\left| e^{itH_{\varepsilon}} (H_+  +H_I) e^{-itH_{\varepsilon}} \right|\Phi \right\rangle_{\mathscr{H}} \leqslant  \int_0^t \mathrm{d} s \: \left\langle \Phi \left| e^{isH_{\varepsilon}}(H_+  +H_I) e^{-isH_{\varepsilon}}\right| \Phi \right\rangle_{\mathscr{H}} \\
        + \left\langle \Phi\left|  H_+  +H_I \right| \Phi\right\rangle_{\mathscr{H}} +3B'  t   \left\|\Phi \right\|^2_{\mathscr{H}}\; .
    \end{split}
  \end{equation*}
  If we now set
  \begin{equation*}
    P(t):=\left\langle \Phi \left| e^{itH_{\varepsilon}}(H_+  +H_I) e^{-itH_{\varepsilon}}\right| \Phi \right\rangle_{\mathscr{H}}\; ,	\qquad		 B(t) := 3B' t \left\|\Phi \right\|_{\mathscr{H}}^2\; ,
  \end{equation*}
  we can rewrite the previous inequality as (recall that $ t \geqslant 0 $)
  \begin{equation}
    P(t) \leqslant P(0) + \int_0^t\mathrm{d} \tau \:  P(\tau) + B(t)\; .
  \end{equation}
  Gronwall lemma then yields
  \begin{displaymath}
      P(t) \leqslant P(0) + B(t) + \int_0^t \mathrm{d} \tau \: \left(P(0) + B(\tau)\right)  e^{t-\tau} \leqslant e^t \left\langle \Phi\left|  H_+  +H_I \right| \Phi\right\rangle_{\mathscr{H}} + 3B't e^t \left\| \Phi \right\|^2_{\mathscr{H}} \; .
  \end{displaymath}
  Therefore inserting this bound  in \eqref{prop:firstineq}, we get for the expectation of $H_+$
  (considering now also the analogous case $t<0$):
  \begin{equation}
    \label{eq:6}
    (1-A)\left\langle \Phi \left| e^{itH_{\varepsilon}} H_+ e^{-itH_{\varepsilon}}\right| \Phi \right\rangle_{\mathscr{H}} \leqslant e^{\lvert t  \rvert_{}^{}} \left\langle \Phi\left|  H_+  +H_I \right| \Phi\right\rangle_{\mathscr{H}} + \left(3B'\lvert t  \rvert_{}^{}e^{\lvert t  \rvert_{}^{}}+B \right) \left\| \Phi \right\|^2_{\mathscr{H}}\; .
  \end{equation}
  This concludes the proof for $\Phi\in \mathscr{D}\bigl((H_{\varepsilon}+M)^{3/2}\bigr)$, since $A<1$. The proof is then
  extended to any $\Phi\in \mathscr{D}[H_0]$ by a density argument.
\end{proof}

	\begin{lemma}
  		\label{lemma:1}
  		\mbox{}	\\
  		Let $\mathcal{W}\in C_0^{\infty}(\mathbb{R}^3)$. Then, for any $t,s\in \mathbb{R}$, there exists a constant $ C_{t,s}>0$ depending only on $\lVert \nabla
  \mathcal{W} \rVert_{\infty}^{}$ and $\lVert \Delta \mathcal{W} \rVert_{\infty}^{}$, such that, for every $\Phi\in \mathscr{D}[H_0]$,
  		\begin{equation}
    			\left\langle \Phi\left|{Y}^{\dagger}_{\sigma}(t,s) (-\Delta\otimes 1) {Y}_{\sigma}(t,s)\right|\Phi\right\rangle_{\mathscr{H}} \leqslant C_{t,s} \sigma^{-6} \left( \left\| (\nabla\otimes 1) \Phi \right\|_{\mathscr{H}}^2+ \left\| \Phi \right\|_{\mathscr{H}}^2 \right)\; .
  		\end{equation}
	\end{lemma}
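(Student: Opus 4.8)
\textit{Plan of proof.} The plan is to strip off the field factor and reduce everything to a one‑particle Gronwall estimate along the flow $\mathcal{U}_\sigma(t,s)$ generated by $\mathcal{K}_{\beta(t),\sigma}$. First I would observe that in $Y_\sigma(t,s)=\mathcal{U}_\sigma(t,s)\otimes e^{-i(t-s)\mathrm{d}\bm{\Gamma}((0,\kappa))}$ (recall \eqref{eq:1}) the second factor acts only on $\Gamma_{\mathrm{sym}}(\mathfrak{H})\otimes\Gamma_{\mathrm{sym}}(\mathfrak{H})$ and hence commutes with $-\Delta\otimes 1$, so that
\begin{equation*}
  Y_\sigma^{\dagger}(t,s)\,(-\Delta\otimes 1)\,Y_\sigma(t,s)=\bigl(\mathcal{U}_\sigma^{\dagger}(t,s)\,(-\Delta)\,\mathcal{U}_\sigma(t,s)\bigr)\otimes 1\; .
\end{equation*}
A form bound $\mathcal{U}_\sigma^{\dagger}(t,s)(-\Delta)\mathcal{U}_\sigma(t,s)\leqslant c\,(-\Delta+1)$ on $L^2(\mathbb{R}^3)$ is then preserved upon tensoring with the identity on the Fock space, and since $H_0\geqslant -\Delta\otimes 1\otimes 1$ we have $\mathscr{D}[H_0]\subseteq H^1(\mathbb{R}^3)\otimes\Gamma_{\mathrm{sym}}(\mathfrak{H})\otimes\Gamma_{\mathrm{sym}}(\mathfrak{H})$, on which the right‑hand side of the claimed inequality is exactly $C_{t,s}\sigma^{-6}\bigl(\lVert(\nabla\otimes 1)\Phi\rVert_{\mathscr{H}}^2+\lVert\Phi\rVert_{\mathscr{H}}^2\bigr)$. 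Thus it suffices to show that for every $\psi\in H^1(\mathbb{R}^3)$
\begin{equation*}
  \bigl\lVert \nabla\,\mathcal{U}_\sigma(t,s)\psi \bigr\rVert_2^2 \leqslant C_{t,s}\,\sigma^{-6}\bigl(\lVert\nabla\psi\rVert_2^2+\lVert\psi\rVert_2^2\bigr),
\end{equation*}
with $C_{t,s}$ depending only on $\lVert\nabla\mathcal{W}\rVert_\infty$, $\lVert\Delta\mathcal{W}\rVert_\infty$ (besides $t,s$ and the fixed data $\beta,\kappa,\nu$).

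The core is a Gronwall argument. I would first prove the displayed bound for $\psi$ in the dense subspace $H^2(\mathbb{R}^3)$: since $\mathcal{W}\in C^\infty_0(\mathbb{R}^3)$, each $\mathcal{K}_{\beta(\tau),\sigma}=-\Delta+\mathcal{W}_{\beta(\tau),\sigma}$ (cf.\ \eqref{eq:vsigma}, \eqref{eq:happ}) is self‑adjoint on the $\tau$‑independent domain $H^2(\mathbb{R}^3)$, and $\tau\mapsto\mathcal{K}_{\beta(\tau),\sigma}\in\mathscr{B}\bigl(H^2(\mathbb{R}^3),L^2(\mathbb{R}^3)\bigr)$ is $C^1$ because $\beta\in C^1$ and $\nu$ is given by \eqref{eq:nu}; by the standard theory of non‑autonomous evolution equations $\mathcal{U}_\sigma(\cdot,s)$ preserves $H^2(\mathbb{R}^3)$ and $\psi_\tau:=\mathcal{U}_\sigma(\tau,s)\psi$ is a classical solution of $i\partial_\tau\psi_\tau=\mathcal{K}_{\beta(\tau),\sigma}\psi_\tau$ with $\psi_\tau\in H^2(\mathbb{R}^3)$ for all $\tau$. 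Setting $f(\tau):=\lVert\nabla\psi_\tau\rVert_2^2=\langle\psi_\tau|(-\Delta)\psi_\tau\rangle$, differentiating, and using that $\langle\psi_\tau|(-\Delta)^2\psi_\tau\rangle$ is real, one gets, as an identity of forms on $H^2(\mathbb{R}^3)$,
\begin{equation*}
  f'(\tau)=\bigl\langle\psi_\tau\bigl|\,i\bigl[\mathcal{W}_{\beta(\tau),\sigma},-\Delta\bigr]\psi_\tau\bigr\rangle,\qquad
  i\bigl[\mathcal{W}_{\beta(\tau),\sigma},-\Delta\bigr]=i\,\Delta\mathcal{W}_{\beta(\tau),\sigma}+2i\,(\nabla\mathcal{W}_{\beta(\tau),\sigma})\cdot\nabla\; .
\end{equation*}
From the rescaling $\mathcal{W}_{\beta(\tau),\sigma}(\mathbf{x})=\nu(\sigma)\sigma^{-2}\mathcal{W}(\mathbf{x}/\sigma)$ and $\nu\in C([0,1])$ one has $\lVert\nabla\mathcal{W}_{\beta(\tau),\sigma}\rVert_\infty\leqslant C\sigma^{-3}\lVert\nabla\mathcal{W}\rVert_\infty$ and $\lVert\Delta\mathcal{W}_{\beta(\tau),\sigma}\rVert_\infty\leqslant C\sigma^{-4}\lVert\Delta\mathcal{W}\rVert_\infty$, whence, using $\lVert\psi_\tau\rVert_2=\lVert\psi\rVert_2$ and Young's inequality $2C\sigma^{-3}\lVert\nabla\mathcal{W}\rVert_\infty\lVert\psi\rVert_2\sqrt{f}\leqslant f+C^2\sigma^{-6}\lVert\nabla\mathcal{W}\rVert_\infty^2\lVert\psi\rVert_2^2$ together with $\sigma^{-4}\leqslant\sigma^{-6}$ (recall $\sigma\leqslant 1$),
\begin{equation*}
  |f'(\tau)|\leqslant f(\tau)+K\,\sigma^{-6}\lVert\psi\rVert_2^2,\qquad K:=C\lVert\Delta\mathcal{W}\rVert_\infty+C^2\lVert\nabla\mathcal{W}\rVert_\infty^2\; .
\end{equation*}
Gronwall's lemma on $[s,t]$ (and the symmetric estimate for $t<s$, running time backwards) then gives $f(t)\leqslant e^{|t-s|}\bigl(f(s)+K\sigma^{-6}\lVert\psi\rVert_2^2\,|t-s|\bigr)\leqslant C_{t,s}\sigma^{-6}\bigl(\lVert\nabla\psi\rVert_2^2+\lVert\psi\rVert_2^2\bigr)$ since $f(s)=\lVert\nabla\psi\rVert_2^2$, with $C_{t,s}$ of the required form.

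Finally I would remove the $H^2$ restriction: given $\psi\in H^1(\mathbb{R}^3)$, choose $\psi_j\in H^2(\mathbb{R}^3)$ with $\psi_j\to\psi$ in $H^1$; by $L^2$‑unitarity $\mathcal{U}_\sigma(t,s)\psi_j\to\mathcal{U}_\sigma(t,s)\psi$ in $L^2$, while $\sup_j\lVert\nabla\mathcal{U}_\sigma(t,s)\psi_j\rVert_2<\infty$ by the bound just proved, so $\mathcal{U}_\sigma(t,s)\psi_j\rightharpoonup\mathcal{U}_\sigma(t,s)\psi$ weakly in $H^1$ and, by lower semicontinuity of the gradient norm, $\lVert\nabla\mathcal{U}_\sigma(t,s)\psi\rVert_2^2\leqslant\liminf_j\lVert\nabla\mathcal{U}_\sigma(t,s)\psi_j\rVert_2^2\leqslant C_{t,s}\sigma^{-6}\bigl(\lVert\nabla\psi\rVert_2^2+\lVert\psi\rVert_2^2\bigr)$; tensoring with the identity on the Fock space and restricting to $\Phi\in\mathscr{D}[H_0]$, as in the first paragraph, concludes. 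I expect the only genuinely delicate points to be the (standard) justification that the time‑dependent flow preserves $H^2$—which is exactly where the regularity $\beta\in C^1$ enters, much as in \cref{lem: strong est} and \cref{laplacecontrol}—and the careful tracking of the powers of $\sigma$, the decisive contribution being $\lVert\nabla\mathcal{W}_{\beta(\tau),\sigma}\rVert_\infty^2=\mathcal{O}(\sigma^{-6})$, which pins down the exponent in the statement.
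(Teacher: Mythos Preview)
Your proof is correct and follows essentially the same route as the paper's: strip off the Fock factor to reduce to a one-particle estimate, differentiate $\lVert\nabla\,\mathcal{U}_\sigma(\tau,s)\psi\rVert_2^2$, compute the commutator $[-\Delta,\mathcal{W}_{\beta(\tau),\sigma}]$, bound via the scaling $\lVert\nabla\mathcal{W}_{\beta(\tau),\sigma}\rVert_\infty=\mathcal{O}(\sigma^{-3})$ and $\lVert\Delta\mathcal{W}_{\beta(\tau),\sigma}\rVert_\infty=\mathcal{O}(\sigma^{-4})$ together with Young's inequality, apply Gronwall, and extend by density. The only cosmetic difference is that the paper carries out the Gronwall step directly on $\mathscr{H}$ (with $\Phi\in\mathscr{D}(H_0)$) rather than first isolating a scalar $H^2$ estimate on $L^2(\mathbb{R}^3)$, and its density extension is stated for $\mathscr{D}[H_0]$ rather than via your weak-$H^1$/lower-semicontinuity argument.
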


	\begin{proof}
  First of all, let us notice that ${Y}_{\sigma}(t,s)^{\dagger}(\Delta\otimes 1) {Y}_{\sigma}(t,s)=\mathcal{U}_{\sigma}^{\dagger}(t,s)\Delta\,
  \mathcal{U}_{\sigma}(t,s)\otimes 1$. Omitting the multiplication by the identity when it is obvious from the context, we can write
  \begin{equation*}
    \left\langle \Phi\left|{Y}^{\dagger}_{\sigma}(t,s) (-\Delta\otimes 1) {Y}_{\sigma}(t,s)\right|\Phi\right\rangle_{\mathscr{H}} = \left\langle \mathcal{U}_{\sigma}(t,s) \Phi\left|-\Delta\right|\mathcal{U}_{\sigma}(t,s) \Phi\right\rangle_{\mathscr{H}} =: R(t,s) \; .
  \end{equation*}
  Suppose now that $\Phi\in \mathscr{D}(H_0)$. Without loss of generality, we can also assume that $t\geqslant s\geqslant
  0$. Hence, we get the integral inequality
  \begin{displaymath}
      R(t,s)\leqslant R(s,s)+\int_s^t \mathrm{d} \tau \: \left| \left\langle \mathcal{U}_{\sigma}(t,s) \Phi|\left[-\Delta, \mathcal{W}_{{\beta(\tau),\sigma}}(\mathbf{x}) \right] \mathcal{U}_{\sigma}(t,s) \Phi \right\rangle_{\mathscr{H}} \right| \;, 
  \end{displaymath}
  where $ \nu(\sigma) $ and $ \mathcal{W}_{\sigma} $ are given by \eqref{eq:nu} and \eqref{eq:vsigma}, respectively.
  The commutator has the following explicit form:
  \begin{equation*}
    [-\Delta, \mathcal{W}_{{\beta(\tau),\sigma}}]= \left( -\Delta \mathcal{W}_{{\beta(\tau),\sigma}} \right)(\mathbf{x}) + 2 \left(\nabla \mathcal{W}_{{\beta(\tau),\sigma}} \right)(\mathbf{x}) \cdot \nabla\; .
  \end{equation*}
  Therefore, keeping in mind that $\lvert \beta(t) \rvert_{}^{}\leqslant C $ for all $t\in \mathbb{R}$, we obtain
  \begin{equation*}
    \begin{split}
      R(t,s)\leqslant R(s,s)+(t-s)\sigma^{-4} \lVert \Delta \mathcal{W}  \rVert_{\infty}^{} \left\| \Phi  \right\|_{\mathscr{H}}^2 + 2\sigma^{-3} \lVert \nabla \mathcal{W}  \rVert_{\infty}^{} \left\| \Phi  \right\|_{\mathscr{H}}^{}\int_s^t \mathrm{d}\tau \: \left\| \nabla\, \mathcal{U}_{\sigma}(\tau,s) \Phi \right\|_{\mathscr{H}}^{}\\
      \leqslant R(s,s)+ \left[ (t-s)\sigma^{-4} \left\| \Delta \mathcal{W}  \right\|_{\infty}^{} +\sigma^{-6} \left\| \nabla \mathcal{W}  \right\|_{\infty}^2 \right] \left\| \Phi  \right\|_{\mathscr{H}}^2+\int_s^t  \mathrm{d}\tau \: R(\tau,s)\; .
    \end{split}
  \end{equation*}
  The result for $\Phi\in \mathscr{D}(H_0)$ then follows by Gronwall lemma, and it is extended by density to
  $\Phi\in \mathscr{D}[H_0]$.
\end{proof}

We can now conclude the proof of \cref{pro:1}.

\begin{proof}[Proof of \cref{pro:1}.]
 Since the operators $ Z_{\varepsilon} $ and $ Z $ are bounded, in order to prove strong convergence it suffices to show that $ Z_{\varepsilon} $ weakly converges to $ Z $. Therefore, we want to prove that for
  all $ \Phi,\Theta\in \mathscr{H}$,
  \begin{equation*}
    \lim_{\varepsilon\to 0} \, \left\langle \Theta\left|  \bigl(Z(t,s)-Z_{\varepsilon}(t,s)\bigr) \Phi \right. \right\rangle_{\mathscr{H}} =0 \; ;
  \end{equation*}
  however, by triangular inequality, we can show separately the convergences of $ Z_{\varepsilon} $ to $ {Y}_{\sigma} $ and of $ {Y}_{\sigma} $ to $ Z  $. By \cref{pro:approx}, we already know that $\mathcal{U}_{\sigma}$ converges strongly to $\mathcal{U}_{\mathrm{eff}}$ on $L^2 (\mathbb{R}^3)$, which implies that $ {Y}_{\sigma} $ converges to $ Z $, provided that $ \sigma \to 0 $, as $ \varepsilon \to 0 $, which we are going to assume. Therefore, it suffices to prove convergence of $ Z_\varepsilon $ to $ {Y}_{\sigma} $.
  
  We restrict to the dense set
  \begin{equation}
  	\label{eq:denseset}
    \mathscr{D}:=\mathscr{D}(H_0)\cap \bigl(\mathscr{D}\bigl(\mathrm{d} \bm{\Gamma}(((k^2+1)^{2M},(k^2+1)^{2N}))^{1/2}\bigr)\subset\mathscr{D}[H_0]\; ,
  \end{equation}
  where $M$ is defined by \eqref{eq:11}, and $N> 1$, and prove weak convergence to zero on $\mathscr{D}$
  for the quadratic form associated to $ {Y}^{\dagger}_{\sigma}(t,s)Z_{\varepsilon}(t,s)-1 $, whenever $\sigma=\mathcal{O}(\varepsilon^{\gamma})$,
  with suitable $\gamma>0$. Note that, since $ k^2+1\geqslant 1 $, $ \mathrm{d} \bm{\Gamma}(\mathbf{1}) \leqslant \mathrm{d}
  \bm{\Gamma}(((k^2+1)^{2M},(k^2+1)^{2N})) $ and consequently $ \mathscr{D} \bigl(\mathrm{d} \bm{\Gamma}(((k^2+1)^{2M},(k^2+1)^{2N}))^{1/2}\bigr) \subset
  \mathscr{D} \bigl(\mathrm{d} \bm{\Gamma}(\mathbf{1})^{1/2}\bigr) $. By the polarization identity
  \begin{multline*} 
    	\left\langle \Theta\left|  \bigl({Y}_{\sigma}-Z_{\varepsilon} \bigr) \Phi \right. \right\rangle_{\mathscr{H}} = \left\langle {Y}_{\sigma}^{\dagger} \Theta\left| \bigl( {Y}_{\sigma}^{\dagger}Z_{\varepsilon} - 1 \bigr) \Phi \right. \right\rangle_{\mathscr{H}} \\
    	= \sum_{j=1}^4 \zeta_j \left\langle \eta_j \left({Y}_{\sigma}^{\dagger} \Theta, \Phi \right)\left| \bigl( {Y}_{\sigma}^{\dagger}Z_{\varepsilon} - 1 \bigr) \eta_j \left({Y}_{\sigma}^{\dagger} \Theta, \Phi \right) \right. \right\rangle_{\mathscr{H}}\; ,
 	 \end{multline*}
  where the coefficients $\zeta_j\in \mathbb{C}$ and the linear combinations $\eta({Y}_{\sigma}^{\dagger}\Theta,\Phi)_j\in \mathscr{H}$ of $Z^{\dagger}_{\sigma}\Theta$ and
  $\Phi$ are suitably given. Hence, on one hand, weak convergence of
  ${Y}_{\sigma}(t,s)-Z_{\varepsilon}(t,s)$ implies convergence of the quadratic form, via the identification $\Theta={Y}_{\sigma}\Psi$ {and \cref{pro:approx}}, and, on the
  other hand, if
  \begin{equation}
    \label{eq:5}
    \left\langle \Theta\left| \bigl( {Y}_{\sigma}^{\dagger}Z_{\varepsilon} - 1 \bigr) \Phi \right. \right\rangle_{\mathscr{H}} \xrightarrow[\varepsilon \to 0]{} 0,		\qquad		\forall \Phi\in \mathscr{H},
  \end{equation}
  then, by the same polarization identity, it follows that $ {Y}_{\sigma}(t,s)-Z_{\varepsilon}(t,s) $ converges
  weakly to zero. 
  
  Furthermore, by the uniform boundedness in $ \varepsilon $ of both ${Y}_{\sigma}$
  and $Z_{\varepsilon}$, it is sufficient to prove the convergence of the quadratic form on the dense
  set $\mathscr{D}$, which we are going to do now. The operator $Z_{\sigma}^{\dagger}$ is strongly differentiable on $\mathscr{D}(H_0)$ and maps $\mathscr{D}[H_0]$
  into itself, while $Z_{\varepsilon}$ is weakly differentiable on $\mathscr{D}[H_0]$. Without loss of generality,
  we can suppose that $t\geqslant s\geqslant 0$. Therefore, it is possible to write
  \begin{multline*} 
       \left| \left\langle \Phi\left|\left({Y}_{\sigma}^{\dagger}Z_{\varepsilon} -1 \right)\Phi\right. \right\rangle_{\mathscr{H}} \right| \leqslant \int_s^t  \mathrm{d}\tau \: \left| \partial_\tau \left\langle \Phi\left|\left({Y}_{\sigma}^{\dagger}(\tau,s) Z_{\varepsilon}(\tau,s) -1 \right)\Phi\right. \right\rangle_{\mathscr{H}} \right| \\
      =  \int_s^t  \mathrm{d}\tau \: \left|  \left\langle \Phi\left| {Y}_{\sigma}^{\dagger}(\tau,s) \left( L_{\varepsilon}(\tau) - {\mathcal{K}_{\beta(\tau),\sigma} \otimes 1 \otimes 1 - 1 \otimes \mathrm{d}\bm{\Gamma}((0,\kappa))} \right) Z_{\varepsilon}(\tau,s) \Phi\right. \right\rangle_{\mathscr{H}} \right| \;.
    \end{multline*}
  Now, since by hypothesis $2\Re \left\langle \bm{\lambda}_{\mathbf{x}}\left| \bm{\alpha}_{\varepsilon}(t)\right. \right\rangle_{\mathfrak{H}\oplus \mathfrak{H}}  = \mathcal{W}_{{\beta(t),\sigma}}(\mathbf{x}) $,
  and $\frac{1}{\varepsilon}\mathrm{d}\Gamma^{(b)}_{\varepsilon}(1)=\mathrm{d}\Gamma^{(b)}(1)$ by \eqref{eq:dgamma}, we have that
  \begin{equation*}
    L_{\varepsilon}(t) - {\mathcal{K}_{\beta(\tau),\sigma} \otimes 1 \otimes 1 - 1 \otimes \mathrm{d}\bm{\Gamma}((0,\kappa))}  = \mathbf{a}_{\varepsilon}(\bm{\lambda}_{\mathbf{x}})+ \mathbf{a}_{\varepsilon}^{\dagger}(\bm{\lambda}_{\mathbf{x}})=\sqrt{\varepsilon} \lf(\mathbf{a}(\bm{\lambda}_{\mathbf{x}})+\mathbf{a}^{\dagger}(\bm{\lambda}_{\mathbf{x}}) \ri)\; .
  \end{equation*}
  Therefore, we obtain the following bound
  \begin{equation}
    \label{sqrtepscontrol}
     \left| \left\langle \Phi\left|\left({Y}_{\sigma}^{\dagger}Z_{\varepsilon} -1 \right)\Phi\right. \right\rangle_{\mathscr{H}} \right| \leqslant \sqrt{\varepsilon}\int_s^t\mathrm{d}\tau \: \left| \left.\left\langle \left(\mathbf{a}(\bm{\lambda}_{\mathbf{x}})+\mathbf{a}^{\dagger}(\bm{\lambda}_{\mathbf{x}}) \right) {Y}_{\sigma}(\tau,s)\Phi\right| Z_{\varepsilon}(\tau,s)\Phi \right\rangle_{\mathscr{H}}  \right|   \; .
  \end{equation}
  
  Let us define
  \begin{align*}
    S &:= \int_s^t\mathrm{d}\tau \: \left| \left.\left\langle  \mathbf{a}(\bm{\lambda}_{\mathbf{x}})  {Y}_{\sigma}(\tau,s)\Phi\right| Z_{\varepsilon}(\tau,s)\Phi \right\rangle_{\mathscr{H}}  \right| \; , \\
    T &:= \int_s^t\mathrm{d}\tau \: \left| \left.\left\langle  \mathbf{a}^{\dagger}(\bm{\lambda}_{\mathbf{x}})  {Y}_{\sigma}(\tau,s)\Phi\right| Z_{\varepsilon}(\tau,s)\Phi \right\rangle_{\mathscr{H}}  \right| \; .
  \end{align*}
  The term $S$ is easy to bound, exploiting the fact that
    $\mathrm{d}\bm{\Gamma}(((k^2+1)^{M},(k^2+1)^{N}))^{1/2}$ commutes with $ \mathrm{d} \bm{\Gamma}((0,\kappa)) $, and
    therefore with $Z_{\sigma}$:
  \begin{multline} 
	\label{eq: S}
    S \leqslant \int_s^t\mathrm{d}\tau \: \left\| \mathbf{a}(\bm{\lambda}_{\mathbf{x}}) {Y}_{\sigma}(\tau,s) \Phi \right\|_{\mathscr{H}} \left\| \Phi \right\|_{\mathscr{H}} \leqslant (t-s)  \sup_{\tau \in (s,t)} \left\| \mathbf{a}(\bm{\lambda}_{\mathbf{x}}) {Y}_{\sigma}(\tau,s) \Phi \right\|_{\mathscr{H}} \left\| \Phi \right\|_{\mathscr{H}} \\
    \leqslant  (t-s)   \left\| \lf((k^2+1)^{-M},(k^2+1)^{-N}\ri) \lambda_{\mathbf{x}} \right\|_{L^{\infty}(\mathbb{R}^3; \mathfrak{H}\oplus \mathfrak{H})} \left\| \mathrm{d} \bm{\Gamma}\lf((k^2+1)^{2M},(k^2+1)^{2N}\ri)^{1/2} \Phi \right\|_{\mathscr{H}} \left\| \Phi \right\|_{\mathscr{H}} \\\leqslant C\left\| \mathrm{d} \bm{\Gamma}\lf((k^2+1)^{2M},(k^2+1)^{2N}\ri)^{1/2} \Phi \right\|_{\mathscr{H}} \left\| \Phi \right\|_{\mathscr{H}},
  \end{multline}
 which is uniformly bounded on \eqref{eq:denseset}.
 
  The $T$ term requires some additional care: recalling \eqref{lambdacond}, we observe that one can split $T$ into a regular and a singular part, {\it i.e.}, $ T \leqslant T_{\text{reg}} + T_{\text{sing}}  $, where
  \begin{align*}
    T_{\text{reg}} &:= \int_s^t\mathrm{d}\tau \: \left| \left.\left\langle  \mathbf{a}^{\dagger}(\bm{\lambda}_{<,\mathbf{x}})  {Y}_{\sigma}(\tau,s)\Phi\right| Z_{\varepsilon}(\tau,s)\Phi \right\rangle_{\mathscr{H}}  \right| \; ,\\
    T_{\text{sing}} &:= \int_s^t\mathrm{d}\tau \: \left| \left.\left\langle  \mathbf{a}^{\dagger}(\bm{\lambda}_{>,\mathbf{x}})  {Y}_{\sigma}(\tau,s)\Phi\right| Z_{\varepsilon}(\tau,s)\Phi \right\rangle_{\mathscr{H}}  \right| \; .
  \end{align*}
  The regular part can be treated analogously to $S$: let $\bm{\omega}$ be the operator on $\mathfrak{H}\oplus
  \mathfrak{H}$ acting as $\bm{\omega}\bm{\eta}=(\omega\eta_1,\kappa\eta_2)$, then
  \begin{equation}
  	\label{eq: Treg}
    T_{\text{reg}} \leqslant (t-s) \lf(\| \bm{\omega}^{-1/2}\bm{\lambda}_{<,\mathbf{x}} \|_{L^{\infty}(\mathbb{R}^3; \mathfrak{H}\oplus \mathfrak{H})}+\left\| \bm{\lambda}_{<,\mathbf{x}} \right\|_{L^{\infty}(\mathbb{R}^3; \mathfrak{H}\oplus \mathfrak{H})}\ri) \left\| (\mathrm{d} \bm{\Gamma}(\bm{\omega})+1)^{1/2}  \Phi \right\|_{\mathscr{H}}  \left\|\Phi \right\|_{\mathscr{H}}\; ,
  \end{equation}
  which is uniformly bounded w.r.t.\ $\varepsilon$. 
  
  It remains to estimate the singular part. Let us split again such term in
  two parts: $ T_{\text{sing}} = \widetilde{T}_1 + \widetilde{T}_2$, where
  \begin{align*}
    \widetilde{T}_1 & := \int_s^t\mathrm{d}\tau \: \left| \left.\left\langle  \mathbf{a}^{\dagger}(\bm{\xi}_{\mathbf{x}}) \cdot \nabla {Y}_{\sigma}(\tau,s)\Phi\right| Z_{\varepsilon}(\tau,s)\Phi \right\rangle_{\mathscr{H}}  \right| \; ,\\
    \widetilde{T}_2 & := \int_s^t\mathrm{d}\tau \: \left| \left.\left\langle \nabla \cdot  \mathbf{a}^{\dagger}(\bm{\xi}_{\mathbf{x}})  {Y}_{\sigma}(\tau,s)\Phi\right| Z_{\varepsilon}(\tau,s)\Phi \right\rangle_{\mathscr{H}}  \right| \; .
  \end{align*}
  The first one is bounded as follows:
  \begin{multline*}  
  \widetilde{T}_1 \leqslant (t-s) \sup_{\tau \in (s,t)} \left\| a^{\dagger}(\bm{\xi}_{\mathbf{x}}) \cdot \nabla {Y}_{\sigma}(\tau,s) \Phi \right\|_{\mathscr{H}} \left\|
    \Phi \right\|_{\mathscr{H}}	
    \leqslant C \sup_{\tau \in (s,t)} \sum_{j=1}^3 \left\| a^{\dagger}\left(\left(\bm{\xi}_{\mathbf{x}}\right)_j \right) \partial_{j} {Y}_{\sigma}(\tau,s) \Phi \right\| \left\|
    \Phi \right\|_{\mathscr{H}} \\ \leqslant C \Bigl(\bigl\| \bm{\omega}^{-1/2}\bm{\xi}_{\mathbf{x}} \bigr\|_{L^{\infty}(\mathbb{R}^3; (\mathfrak{H}\oplus \mathfrak{H})^3)} \left\| \nabla_{\mathbf{x}} {Y}_{\sigma}(\tau,s) \mathrm{d} \Gamma(\bm{\omega})^{1/2} \Phi \right\|_{\mathscr{H}} \\+ \bigl\| \bm{\xi}_{\mathbf{x}} \bigr\|_{L^{\infty}(\mathbb{R}^3; (\mathfrak{H}\oplus \mathfrak{H})^3)} \left\| \nabla_{\mathbf{x}} {Y}_{\sigma}(\tau,s)  \Phi \right\|_{\mathscr{H}}\Bigr)\left\|
    \Phi \right\|_{\mathscr{H}} 
      \\\leqslant  C \Bigl(\left| \left\langle {Y}_{\sigma}(\tau,s)\,\Theta \left| -\Delta\right|{Y}_{\sigma}(\tau,s)\,\Theta \right\rangle_{\mathscr{H}} \right|^{1/2}+\left| \left\langle {Y}_{\sigma}(\tau,s)\,\Phi \left| -\Delta\right|{Y}_{\sigma}(\tau,s)\,\Phi \right\rangle_{\mathscr{H}} \right|^{1/2}\Bigr),
	 \end{multline*}
  with $\Theta := \mathrm{d} \Gamma(\bm{\omega})^{1/2} \Phi$. Therefore, by Lemma~\ref{lemma:1}, it follows that there
  exists a constant $C_{t,s}>0$, depending on $\Phi$, such that
  \begin{equation}
  	\label{eq: T1}
    \widetilde{T}_1\leqslant C_{t,s} \sigma^{-3}.
  \end{equation}
  
  The second term is bounded using Lemma~\ref{laplacecontrol}:
  \begin{multline*} 
      \widetilde{T}_2 \leqslant \sum_{j=1}^3 \left\| a^{\dagger}\left(\left(\bm{\xi}_{\mathbf{x}}\right)_j \right) {Y}_{\sigma}(\tau,s) \Phi \right\|_{\mathscr{H}}  \left\|\partial_j Z_{\varepsilon}(\tau,s) \Phi \right\|_{\mathscr{H}} \leqslant C \Bigl(\bigl\| \bm{\omega}^{-1/2}\bm{\xi}_{\mathbf{x}} \bigr\|_{L^{\infty}(\mathbb{R}^3; (\mathfrak{H}\oplus \mathfrak{H})^3)} \left\| \mathrm{d} \Gamma(\bm{\omega})^{1/2} \Phi \right\|_{\mathscr{H}}\\+\left\| \bm{\xi}_{\mathbf{x}} \right\|_{L^{\infty}(\mathbb{R}^3; (\mathfrak{H}\oplus \mathfrak{H})^3)} \left\|  \Phi \right\|_{\mathscr{H}}\Bigr)\left\| (-\Delta)^{1/2} Z_{\varepsilon}(\tau,s) \Phi \right\|^{}_{\mathscr{H}}.
    \end{multline*}
  Now, setting $\mathbf{W}_{t}:= \mathbf{W}_{\varepsilon}\bigl( \frac{\bm{\alpha}_{\varepsilon}(t)}{i\varepsilon}\bigr)$  and using the fact that $-\Delta$ commutes with
  $\mathbf{W}_t$ for any $t\in \mathbb{R}$, we get by Lemma~\ref{laplacecontrol}
  \begin{displaymath}
      \left\| (-\Delta)^{1/2} Z_{\varepsilon}(\tau,s) \Phi \right\|_{\mathscr{H}}^2 \leqslant C_t \left( \left\langle \mathbf{W}_s \Phi\left|H_+  + H_I\right|\mathbf{W}_s \Phi\right\rangle_{\mathscr{H}} + \left\| \Phi \right\|^2_{\mathscr{H}} \right)
  \end{displaymath}
  However, by the translation properties of Weyl operators, one has
  \begin{equation*}
    \mathbf{W}^{\dagger}_s (H_+  + H_I) \mathbf{W}_s = -\Delta  + \mathrm{d} \bm{\Gamma}_{\varepsilon}(\bm{\omega}) + H_I + \left\|\bm{\omega}\bm{\alpha}_{\varepsilon}(s) \right\|^2 + \mathbf{a}\bigl(\bm{\omega}\bm{\alpha}_{\varepsilon}(s)\bigr) + \mathbf{a}^{\dagger}\bigl(\bm{\omega}\bm{\alpha}_{\varepsilon}(s)\bigr) + 2 \Re \left\langle \bm{\alpha}_{\varepsilon}(s)\left|\bm{\lambda}_{\mathbf{x}}\right. \right\rangle\;,
  \end{equation*}
  so that
  \begin{multline*} 
      \left\| (-\Delta)^{1/2} Z_{\varepsilon}(\tau,s) \Phi \right\|_{\mathscr{H}}^2 \leqslant C_t \left[ \left\langle  \Phi \left|-\Delta + \mathrm{d} \bm{\Gamma}_{\varepsilon}(\bm{\omega}) + H_I +\mathbf{a}\bigl(\bm{\omega}\bm{\alpha}_{\varepsilon}(s)\bigr) + \mathbf{a}^{\dagger}\bigl(\bm{\omega}\bm{\alpha}_{\varepsilon}(s)\bigr) \right|\Phi \right\rangle_{\mathscr{H}} \right.\\ + \left. \left( 2 \Re \left\langle \bm{\alpha}_{\varepsilon}(s)\left|\bm{\lambda}_{\mathbf{x}}\right. \right\rangle + \left\| \bm{\omega}\bm{\alpha}_{\varepsilon}(s) \right\|_{\mathfrak{H}}^2 + 1 \right) \left\| \Phi \right\|_{\mathscr{H}}^2 \right]\; .
     \end{multline*}
  Now, since by \eqref{eq:11}
  	\beqn
  		\left\| \left\langle \bm{\alpha}_{\varepsilon}(s)\left|\bm{\lambda}_{\mathbf{x}}\right. \right\rangle_{\mathfrak{H}\oplus \mathfrak{H}} \right\|_{\infty} & \leqslant & {2 \lf\| \mathcal{W}_{\beta(t),\sigma}(\mathbf{x}) \ri\|_{\infty}} = \mathcal{O}\lf(\sigma_{\varepsilon}^{-2}\ri)\;,	\\
		\left\|\bm{\omega}\bm{\alpha}_{\varepsilon}(s) \right\|^2 & = &\mathcal{O}\lf(\sigma_{\varepsilon}^{-(1+{2+4M})}+\sigma_{\varepsilon}^{-1} \ri)\;,
	\eeqn
	{and the other terms are either uniformly bounded or subdominant with
        respect to $\sigma$.} It follows that
  \begin{equation}
  	\label{eq: T2}
    \widetilde{T}_2 \leqslant C \lf( \sigma^{{{-3-4M}}} + \sigma^{-1} \ri)\; .
  \end{equation}
  
Hence, putting together \eqref{eq: S} with \eqref{eq: Treg}, \eqref{eq: T1} and \eqref{eq: T2}, we finally get, for any $ \Phi \in \mathscr{D} $,
	\begin{displaymath}
		 \left| \left\langle \Phi\left|\left({Y}_{\sigma}^{\dagger}Z_{\varepsilon} -1 \right)\Phi\right. \right\rangle_{\mathscr{H}} \right| \leqslant \mathcal{O}\lf(\varepsilon^{1/2} \sigma^{-3}\ri) + \mathcal{O}\lf(\varepsilon^{1/2}\sigma^{{-3-4M}}\ri) = o(1),
	\end{displaymath}
	as long as $ \sigma \gg \varepsilon^{1/{j_*}} ${, with $ j_* = 6 + 8M $}.
\end{proof}

\end{document}